
\documentclass[reqno,11pt]{amsart}
\usepackage[mathscr]{eucal}
\usepackage{slashed}
\usepackage{bbm}

\textheight 22cm
\textwidth 14.38cm
\oddsidemargin=0.9cm
\evensidemargin=0.9cm
\topmargin=-0.5cm
\numberwithin{equation}{section}

\title[The Rescaled Causal Perturbation Expansion]
{The Causal Perturbation Expansion Revisited: \\
Rescaling the Interacting Dirac Sea}

\author[F.\ Finster]{Felix Finster}
\author[A.\ Grotz]{Andreas Grotz \\ \\ January 2009}

\address{NWF I - Mathematik \\ Universit\"at Regensburg \\ D-93040 Regensburg \\ Germany}
\email{Felix.Finster@mathematik.uni-regensburg.de}
\email{Andreas.Grotz@mathematik.uni-regensburg.de}

\thanks{Supported in part by the Deutsche Forschungsgemeinschaft.}


\newtheorem{Def}{Definition}[section]
\newtheorem{Thm}[Def]{Theorem}
\newtheorem{Prp}[Def]{Proposition}
\newtheorem{Lemma}[Def]{Lemma}
\newtheorem{Remark}[Def]{Remark}
\newtheorem{Corollary}[Def]{Corollary}

\newcommand{\Thanks}{\vspace*{.5em} \noindent \thanks}
\newcommand{\beq}{\begin{equation}}
\newcommand{\eeq}{\end{equation}}

\newcommand{\Proof}{\begin{proof}}
\newcommand{\QED}{\end{proof} \noindent}
\newcommand{\QEDrem}{\ \hfill $\Diamond$}

\newcommand{\bra}{\mbox{$< \!\!$ \nolinebreak}}
\newcommand{\ket}{\mbox{\nolinebreak $>$}}
\newcommand{\C}{\mathbb{C}}
\newcommand{\R}{\mathbb{R}}
\newcommand{\1}{\mbox{\rm 1 \hspace{-1.05 em} 1}}

\newcommand{\N}{\mathbb{N}}

\newcommand{\slsh}{\mbox{ \hspace{-1.1 em} $/$}}

\renewcommand{\O}{{\mathscr{O}}}

\newcommand{\B}{{\mathscr{B}}}

\newcommand{\pvm}{\frac{\pv}{m-m'}}

\newcommand{\sea}{\text{sea}}

\DeclareMathOperator{\pv}{PP}
\DeclareMathOperator{\supp}{supp}
\DeclareMathOperator{\res}{res}
\DeclareMathOperator{\he}{he}
\DeclareMathOperator{\lec}{le}

\DeclareMathOperator{\causal}{causal}
\DeclareMathOperator{\Texp}{Texp}
\allowdisplaybreaks[1]

\includeonly{}

\begin{document}
\maketitle

\begin{abstract}
The causal perturbation expansion defines the Dirac sea in the presence of
a time-dependent external field. It yields an operator whose image 
generalizes the vacuum solutions of negative energy
and thus gives a canonical splitting of the solution space into two subspaces.
After giving a self-contained introduction to the ideas and techniques, we show that this operator
is in general not idempotent. We modify the standard construction by a rescaling procedure
giving a projector on the generalized negative-energy subspace. The resulting rescaled causal
perturbation expansion uniquely defines the fermionic projector in terms of a series of
distributional solutions of the Dirac equation. The technical core of the paper is
to work out the combinatorics of the expansion in detail.
It is also shown that the fermionic projector with interaction can be obtained from the
free projector by a unitary transformation.
We finally analyze the consequences of the
rescaling procedure on the light-cone expansion.
\end{abstract}

\tableofcontents

\section{Introduction} \label{sec1}
Shortly after the formulation of the Dirac equation~\cite{dirac1}, it was noticed that this
equation has solutions of negative energy, which have no obvious physical interpretation
and lead to conceptual and mathematical difficulties.
Dirac suggested to solve this problem by assuming that in the physical vacuum
all states of negative energy are occupied by electrons forming the so-called 
{\em{Dirac sea}}~\cite{dirac2, dirac3}. Since this many-particle state is homogeneous and
isotropic, it should not be accessible to measurements. Due to the Pauli exclusion principle,
additional particles must occupy states of positive energy, thus being observable as electrons.
Moreover, the concept of the Dirac sea led to the prediction of anti-particles. Namely,
by taking out particles of negative energy, one can generate ``holes'' in the Dirac sea, which are
observable as positrons.

Today, Dirac's intuitive concept of a ``sea of interacting particles'' is often not taken literally.
In {\em{perturbative quantum field theory}}, the problem of the negative-energy solutions is bypassed
by a formal replacement and re-interpretation of the creation and annihilation operators
of the negative-energy states of the free Dirac field, giving rise to a positive definite Dirac
Hamiltonian on the fermionic Fock space.
In the subsequent perturbation expansion in terms of Feynman diagrams, the Dirac sea no longer appears.
This procedure allows to compute the $S$-matrix in a scattering process and gives rise to
the loop corrections, in excellent agreement with the high-precision tests of quantum electrodynamics.

One shortcoming of the perturbative approach is that the particle interpretation of a quantum state
gets lost for intermediate times. This problem becomes apparent already in the presence of
a {\em{time-dependent external field}}. Namely, as first observed by Fierz and Scharf~\cite{fierz+scharf},
the Fock representation must be adapted to the external field as measured by a local 
observer. Thus the Fock representation becomes time and observer dependent,
implying that also the distinction between particles and anti-particles loses its invariant meaning.
The basic problem can be understood already in the one-particle picture:
Mathematically, the distinction between particles and anti-particles corresponds to a splitting of
the solution space of the Dirac equation into two subspaces. In the vacuum, or more generally
in the presence of a static external field $\mathscr{B}(\vec{x})$, in the Dirac equation
\[ (i\gamma^{\mu}\partial_{\mu} + {\mathscr{B}}(\vec{x}) - m)\,\Psi(x)=0 \]
one can separate the time dependence with the plane-wave ansatz
\[ \Psi(t,\vec{x})=e^{-i\omega t}\:\psi(\vec{x})\:. \]
The separation constant~$\omega$, having the interpretation as the energy of the state,
gives a natural splitting of the solution space into solutions of positive and negative energy.
The Dirac sea can be introduced by occupying all states of negative energy.
However, if the external field is time dependent,
\beq \label{tdep}
(i\gamma^{\mu}\partial_{\mu} + {\mathscr{B}}(t, \vec{x}) - m)\, \Psi(x)=0 \:,
\eeq
the separation ansatz no longer works, corresponding to the fact that the energy of the Dirac states
is no longer conserved. Hence the concept of positive and negative energy solutions breaks down,
and the natural splitting of the solution space seems to get lost.

Another shortcoming of the standard reinterpretation of the free Dirac states of negative energy
is that this procedure leads to inconsistencies when the interaction is taken into account on a
{\em{non-perturbative}} level.
For example, in~\cite{hainzl+sere2, hainzl+sere1} the vacuum state is constructed
for a system of Dirac particles with electrostatic interaction in the Bogoliubov-Dirac-Fock approximation.
In simple terms, the analysis shows
that the interaction ``mixes'' the states in such a way that
it becomes impossible to distinguish between the particle states and the states of the Dirac sea.
Thus the only way to obtain a well-defined mathematical setting is to take into account all the
states forming the Dirac sea, with a suitable ultraviolet regularization.

The framework of the fermionic projector is an approach to formulate quantum field theory
in such a way that the above-mentioned problems disappear. Out of all the states of the
Dirac sea we build up the so-called {\em{fermionic projector}}, which puts
Dirac's idea of a ``sea of interacting particles'' on a rigorous mathematical basis.
The fermionic projector gives a global (i.e.\ observer-independent) interpretation of particles and
anti-particles even in the time-dependent setting at intermediate times.
The interaction is described by an action principle, which can be formulated without referring
to the causal or topological structure of the underlying space-time, thus
giving a possible approach for physics on the Planck scale (see~\cite{lrev} for a review
and more references).
When analyzed in Minkowski space, this action principle yields all the Feynman diagrams
of perturbative quantum field theory, but also gives rise to other surprising higher order
corrections~\cite{sector}.
The foundations of this approach have been worked out in the book~\cite{PFP}.
The connection to the fermionic Fock space formalism and to second quantized bosonic fields
is elaborated in~\cite{entangle}.

The construction of the fermionic projector is based on the observation first made in~\cite{sea} that for
the time-dependent Dirac equation~\eqref{tdep} there still is a natural splitting of the solution space
into two subspaces, if one works instead of the sign of the energy
with the underlying {\em{causal structure}}. To explain the basic idea
(for details see Section~\ref{sec2} below),
we first note that the image of the operator~$P^{\sea}_m$ with integral kernel
\beq \label{Pvac}
P^{\sea}_m(x,y) = 
\int \frac{d^4q}{(2\pi)^4}\:(q_\mu \gamma^\mu+m)\: \delta(q^2-m^2)\: \Theta(-q_0)\: e^{-iq(x-y)}
\eeq
coincides precisely with all the negative-energy solutions of the free Dirac equation.
This operator can be decomposed as
\beq \label{Psea}
P^{\sea}_m = \frac{1}{2} \left( p_m - k_m \right) ,
\eeq
where the operator~$k_m$ is causal (in the sense that its kernel is supported inside the light cone),
and the operator~$p_m$ can be understood as the absolute value of the operator~$k_m$. In the case with general interaction, one can extend $k_m$ uniquely to an operator~$\tilde{k}_m$ using the causality property. Making sense of the absolute value, one can also generalize~$p_m$ to an operator~$\tilde{p}_m$.
Introducing in analogy to~\eqref{Psea} the operator
\beq \label{Pint}
P^{\sea}_m = \frac{1}{2} \left( \tilde{p}_m - \tilde{k}_m \right) ,
\eeq
the image of this operator describes the Dirac sea in the case with interaction.

The aim of the present paper is to clarify the normalization of the states of the generalized Dirac
sea~\eqref{Pint}. More precisely, the operator~\eqref{Pvac} describing the free Dirac sea has the
property that it is idempotent if a $\delta$-normalization in the mass parameter is used,
\begin{equation}
P^{\sea}_m P^{\sea}_{m'} = \delta(m-m')\: P^{\sea}_m\:.
	\label{P-proj}
\end{equation}
This idempotence property plays an important role in the framework of the fermionic projector.
However, as we shall see, the interacting Dirac sea~\eqref{Pint} as defined in~\cite{sea}
is in general {\em{not}} idempotent in this sense.
Our goal is to modify the normalization of the states using a rescaling procedure
such as to arrange~\eqref{P-proj}.
This issue of normalizing the states of the interacting Dirac sea can be regarded as a
{\em{problem of functional analysis}}. 
To see the analogy, if the Dirac operator were a self-adjoint operator on a Hilbert space,
we could interpret the product~$P^{\sea}_m \,dm$ as the operator-valued spectral measure of
the Dirac operator, composed by a projector on the generalized negative-energy solutions of the
Dirac equation. Unfortunately, the Dirac operator is only symmetric with respect to the
{\em{indefinite inner product}}
\beq \label{stip}
\bra \Psi | \Phi \ket = \int \overline{\Psi}(x) \Phi(x)\: d^4x \:,
\eeq
making it impossible to use spectral theory in Hilbert spaces.
This is the reason why we must rely on perturbative techniques and work with
formal power series expansions. The main technical task is to work out the combinatorics
of the perturbation expansions in detail. The interesting point is that the details of these
expansions have a correspondence to general results known from functional analysis in Hilbert spaces.
In particular, we recover the polar decomposition, the
resolvent identity, Stone's formula and the functional calculus
from our perturbation expansions (see~\eqref{p-absk}, \eqref{smmp}, Remark~\ref{remark}
and~\eqref{ptil-formally}). We also relate the free and interacting operators by an operator~$U$
which is unitary with respect to the indefinite inner product~\eqref{stip}
(see~\eqref{result-unitary}).

The main result of the paper is the derivation of a unique perturbation expansion for the
fermionic projector which satisfies~\eqref{P-proj}
(see Theorem~\ref{maintheorem}). The summands of this
expansion can be regarded as {\em{Feynman tree diagrams}}, as they also appear in
the standard perturbation expansion in the presence of an external field.
However, our perturbation expansion is different in that the usual freedom in choosing
the Green's functions (like working with the advanced or retarded Green's functions or the
Feynman propagator) is removed. The expansion becomes unique by combining
causality with suitable normalization conditions for the states of the fermionic projector.

The paper is organized as follows. In Section~\ref{sec2}, we give a self-contained introduction to the construction of the generalized Dirac sea~\eqref{Pint} in terms of a formal perturbation expansion in $\mathscr{B}$. 
In Section~\ref{sec3}, we explain the rescaling procedure for the states of the interacting Dirac sea, thus obtaining a unique idempotent operator in terms of a formal perturbation expansion, the so-called {\em{rescaled causal perturbation
expansion}}. The main technical task is to elaborate the combinatorics of the different expansions in detail; this will be carried out in Section~\ref{sec4}. As explained in Remark~\ref{remark}, the rescaling formally reproduces results from spectral theory and functional analysis in the setting of perturbation expansions.
An interesting consequence is the existence of unitary transformations between the free and the generalized Dirac seas. We prove this fact in Section~\ref{sec5} by deriving and analyzing equations for the perturbation flow. Finally, in Section~\ref{sec6} it is shown that the rescaling procedure has no influence on the residual argument and the light-cone expansion as worked out in~\cite{light}. But it does change the form of
the so-called high-energy contribution.

\section{The Causal Perturbation Expansion} \label{sec2}
In this section we give a self-contained review of the causal perturbation expansion as developed
in~\cite{sea} (see also~\cite[Chapter 2]{PFP}). The rescaling procedure will then be explained
in Section~\ref{sec3}. We always assume that the mass is a positive parameter,
\[ m>0\:. \]
We decompose the Fourier integral~\eqref{Pvac} as follows,
\begin{align}
P^{sea}_m(x,y) &= \frac{1}{2}\Big(p_m(x,y)-k_m(x,y)\Big) \,, \label{sea-pk} \\
\intertext{where}
\begin{split}
p_m(x,y)&=\int\frac{d^4q}{(2\pi)^4}\:(\slashed{q}+m)\:\delta(q^2-m^2)\:e^{-iq(x-y)} \\
k_m(x,y)&=\int\frac{d^4q}{(2\pi)^4}\:(\slashed{q}+m)\:\delta(q^2-m^2)\:\epsilon(q_0)\:e^{-iq(x-y)}\:.
\end{split} \label{def-p-k}
\end{align}
Here~$\slashed{q} \equiv q_\mu \gamma^\mu$, $\epsilon$ is the sign function,
and the function~$\Theta$ in~(\ref{Pvac}) is the Heaviside function. For the signature of the Minkowski inner product we use the convention $(+ - -\, -)$.
All these Fourier integrals are well-defined tempered distributions. The splitting~\eqref{sea-pk}
gives rise to the decomposition~\eqref{Psea} of the corresponding operators.

The decomposition~\eqref{sea-pk} reveals the following connection to causality.
The Dirac equation is causal in the sense that information propagates at most with the speed of light.
This is reflected in a support property of the advanced and retarded Green's
functions, which we denote by~$s_m^\vee$ and~$s_m^\wedge$, respectively.
They have the Fourier representation
\begin{align*}
s_m^{\vee}(x,y) &=\int\frac{d^4q}{(2\pi)^4}\:\frac{\slashed{q}+m}{q^2-m^2-i\varepsilon q_0}\:e^{-iq(x-y)} \\
s_m^{\wedge}(x,y) &=\int\frac{d^4q}{(2\pi)^4}\:\frac{\slashed{q}+m}{q^2-m^2+i\varepsilon q_0}\:e^{-iq(x-y)} \:,
\end{align*}
where $\varepsilon>0$ is a regularization parameter, and it is understood implicitly
that one should take the limit $\varepsilon \searrow 0$ in the distributional sense.
Computing the integrals with residues, one readily verifies that~$\supp(s_m^\vee(x,.))\subset J_x^\vee$,
where
\[ J_x^\vee = \{y \text{ with } (y-x)^2 \geq 0 \text{ and } y^0>x^0\} \]
denotes the future light cone centered 
at~$x$ (similarly, $\supp(s_m^\wedge(x,.))\subset J_x^\wedge$). Taking the
difference of the two expressions and using the identity
\begin{equation}
\delta(x)=\frac{1}{2\pi i}\left(\frac{1}{x-i\varepsilon}-\frac{1}{x+i\varepsilon}\right),
	\label{eq:delta-formula}
\end{equation}
one finds that
\begin{equation}\label{k-ss}
\boxed{	\quad k_m=\frac{1}{2\pi i}(s_m^{\vee}-s_m^{\wedge})\:.  \quad}
\end{equation}
This relation shows that~$k_m$ is a causal operator in the sense that $\supp(k_m(x,.))\subset J_x$,
where $J_x =J_x^\vee \cup J_x^\wedge$ is the light cone centered at $x$.

We point out that the operator $p_m$ is not causal in the above sense.
To see this, we decompose~$p_m$ in analogy to~(\ref{k-ss}) as
\begin{equation}
	p_m=\frac{1}{2\pi i}(s_m^+-s_m^-) \:,
	\label{p-ss}
\end{equation}
where the operators~$s^\pm$ have the Fourier representation
\beq \label{feynman}
s_m^\pm(x,y) = \int\frac{d^4q}{(2\pi)^4}\:\frac{\slashed{q}+m}{q^2-m^2 \mp
i\varepsilon}\:e^{-iq(x-y)} \:.
\eeq
The operator~$s_m^-$ is known in the literature as the Feynman propagator, characterized
by the condition that positive-energy solutions propagate forward in time while negative-energy solutions propagate backwards in time. An explicit calculation of the Fourier integral~\eqref{feynman}
in terms of Bessel functions shows that~$s^+_m$ and $s^-_m$ as well as their difference
do not vanish outside the light cone and are thus not causal.

For the subsequent constructions it is important to observe that the operator~$p_m$ can be
obtained from~$k_m$ as follows. Defining the absolute value of a diagonalizable matrix
$A$ as the unique positive semi-definite matrix $|A|$ with $A^2=|A|^2$, we find
that $|\epsilon(k^0)(\slashed{k}+m)|=(\slashed{k}+m)$. Since the operators~$p_m$ and~$k_m$
are diagonal in momentum space, taking the absolute value pointwise in momentum space can
be understood formally as taking the absolute value of the corresponding operator acting
on the Dirac wave functions,
\begin{equation}\label{p-absk}
\boxed{ \quad p_m=|k_m| \:. \quad }
\end{equation}

In the remainder of this section we shall generalize the relations~\eqref{k-ss} and~\eqref{p-absk}
to the case with general interaction~\eqref{tdep}; in Section~\ref{sec3} we will then develop
a method for generalizing the definition of the fermionic projector~\eqref{Psea}. \\[-0.8em]

Using the causal support property, the advanced and retarded Green's functions~$\tilde{s}_m^{\vee}$
and~$\tilde{s}_m^{\wedge}$
are uniquely defined even in the case with interaction~\eqref{tdep}.
This could be done non-perturbatively using the theory of symmetric hyperbolic
systems~\cite{john}. For our purpose, it is sufficient to give the unique perturbation series
\beq \tilde{s}_m^{\vee}=\sum_{n=0}^{\infty}(-s_m^{\vee}\mathscr{B})^ns_m^{\vee}\;, \qquad
\tilde{s}_m^{\wedge}=\sum_{n=0}^{\infty}(-s_m^{\wedge}\mathscr{B})^ns_m^{\wedge}\:.
\label{series-scaustilde}
\eeq
Here the operator products involving  the potential $\mathscr{B}$ are defined as follows,
\begin{equation}
	(s_m^{\vee}\mathscr{B}s_m^{\vee})(x,y):=\int d^4z \:s_m^{\vee}(x,z)\mathscr{B}(z)s_m^{\vee}(z,y).
	\label{eq:defof-bprods}
\end{equation}
It is straightforward to verify that the perturbation series~(\ref{series-scaustilde})
indeed satisfy the defining relations for the advanced Green's function
\[ (i \slashed{\partial}_x + {\mathscr{B}}(x) - m)\: \tilde{s}^{\vee}_m(x,y) = \delta^4(x-y) \:,\quad
\supp s_m^{\vee}(x,.) \subset J_x^\vee \]
(and similarly for the retarded Green's function; for details see~\cite[\S2.2]{PFP}).

Moreover, we define the operators
\begin{align}\label{def-s}
	s_m&=\frac{1}{2}(s_m^{\vee}+s_m^{\wedge})=\frac{1}{2}(s_m^{+}+s_m^{-}),\\	
	F_m(Q,n)&=\begin{cases}p_m,\:\:n\in Q\\
	k_m,\:\:n\notin Q\end{cases}
	\hspace{1cm} \mbox{for $Q\subset\mathbb{N}$, $n\in \mathbb{N}$}
	\label{def-F}
\end{align}
and the series of operator products
\begin{align}
 b_m^<=\sum_{n=0}^{\infty}(-s_m\mathscr{B})^n\;,\hspace{1cm}  b_m=\sum_{n=0}^{\infty}(-\mathscr{B}s_m)^n\mathscr{B}\;,\hspace{1cm} b_m^>=\sum_{n=0}^{\infty}(-\mathscr{B}s_m)^n\:.
	\label{thm-defs}
\end{align}

\begin{Lemma} \label{lemma21}
The following identities hold:
\begin{align}
p_m\,p_{m'}&=k_m\,k_{m'}=\delta(m-m')\:p_m\label{eq:pp-p} \\
p_m\,k_{m'}&=k_m\,p_{m'}=\delta(m-m')\:k_m\label{eq:pk-k} \\
p_m \,s_{m'}&=s_{m'}\,p_m=\frac{\pv}{m-m'}\:p_m\label{eq:ps-p} \\
k_m \,s_{m'}&=s_{m'}\,k_m=\frac{\pv}{m-m'}\:k_m\label{eq:ks-k} \\
s_m\,s_{m'}&=\frac{\pv}{m-m'}\:(s_m-s_{m'})+\pi^2\delta(m-m')\:p_m\;,\label{eq:ss-sp}
\end{align}
where the principle value distribution is given by $\frac{\pv}{x}:=\frac{1}{2}[(x+i\varepsilon)^{-1}+(x-i\varepsilon)^{-1}]$.
\end{Lemma}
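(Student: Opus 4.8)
The plan is to exploit that each operator in the statement is a Fourier multiplier. Writing every kernel in the form $\int \frac{d^4q}{(2\pi)^4}\,\hat{A}(q)\,e^{-iq(x-y)}$, the operator product defined as in~\eqref{eq:defof-bprods} (with the potential replaced by the identity) becomes, after carrying out the intermediate $z$-integration, the pointwise product of symbols $\hat{A}(q)\,\hat{B}(q)$, because that integration produces a factor $(2\pi)^4\,\delta^4(q-p)$. Hence all five identities reduce to pointwise distributional identities in $q$. Since every symbol is a polynomial in $\slashed{q}$ times a scalar distribution of $q^2$ (and possibly the factor $\epsilon(q_0)$), I would treat the matrix and scalar parts separately. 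The algebraic fact to isolate first is that $\slashed{q}^2=q^2$, which gives $(\slashed{q}+m)\slashed{q}=m\,(\slashed{q}+m)$ on the shell $q^2=m^2$ and therefore
\[ (\slashed{q}+m)(\slashed{q}+m') = (m+m')\,(\slashed{q}+m) \qquad \text{whenever } q^2=m^2 \,. \]

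For the identities~\eqref{eq:pp-p}--\eqref{eq:ks-k} this already does most of the work. The symbols of $p_m$ and $k_m$ carry the factor $\delta(q^2-m^2)$, so after multiplying by the symbols of $p_{m'},k_{m'}$ or $s_{m'}$ and using the displayed matrix identity on the support of $\delta(q^2-m^2)$, only scalar products of distributions of $q^2$ remain. These I would reduce with the two elementary identities, valid for $m,m'>0$ since then $m+m'$ stays bounded away from zero,
\[ \delta(q^2-m^2)\,\delta(q^2-m'^2) = \tfrac{1}{2m}\,\delta(m-m')\,\delta(q^2-m^2) \,, \]
\[ \delta(q^2-m^2)\,\frac{\pv}{q^2-m'^2} = \frac{1}{m+m'}\,\frac{\pv}{m-m'}\,\delta(q^2-m^2) \,. \]
In the first case the surviving matrix factor $(m+m')$ equals $2m$ on the support of $\delta(m-m')$ and cancels the $\tfrac{1}{2m}$; in the second the factor $(m+m')$ cancels the $\frac{1}{m+m'}$. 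This yields~\eqref{eq:pp-p} and~\eqref{eq:ps-p}, and inserting one or two factors $\epsilon(q_0)$ (using $\epsilon(q_0)^2=1$) produces~\eqref{eq:pk-k} and~\eqref{eq:ks-k}; commutativity of the symbols gives the left-right symmetry in each line.

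The remaining identity~\eqref{eq:ss-sp} is the genuinely delicate one, and it is here that I expect the main obstacle. I would write $s_m=\tfrac12(s_m^{+}+s_m^{-})$ and expand $s_m s_{m'}$ into the four products $s_m^{\sigma}s_{m'}^{\sigma'}$, whose scalar factors are $[(q^2-m^2-\sigma i\varepsilon)(q^2-m'^2-\sigma' i\varepsilon)]^{-1}$. A partial-fraction decomposition turns each into $\tfrac{1}{A-B}\big(\tfrac{1}{q^2-m^2-\sigma i\varepsilon}-\tfrac{1}{q^2-m'^2-\sigma' i\varepsilon}\big)$ with $A-B=m^2-m'^2+(\sigma-\sigma')\,i\varepsilon$. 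For the two same-sign products the prefactor is the real $\frac{\pv}{m^2-m'^2}$, while for the two opposite-sign products the Sokhotski--Plemelj relation applied to $\frac{1}{m^2-m'^2\pm 2i\varepsilon}$ contributes the additional term $\mp i\pi\,\delta(m^2-m'^2)$. Summing the four terms, the genuine principal-value pieces combine (using $\frac{1}{x+i\varepsilon}+\frac{1}{x-i\varepsilon}=2\,\frac{\pv}{x}$) into $\frac{\pv}{m^2-m'^2}\big(\frac{\pv}{q^2-m^2}-\frac{\pv}{q^2-m'^2}\big)$, whereas the $\delta(m^2-m'^2)$ pieces combine (using $\frac{1}{x-i\varepsilon}-\frac{1}{x+i\varepsilon}=2\pi i\,\delta(x)$, i.e.~\eqref{eq:delta-formula}) into a term proportional to $\delta(m^2-m'^2)\,\delta(q^2-m^2)$.

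It then remains to simplify these two pieces. For the $\delta$-part the bookkeeping of the numerical factors gives precisely $\pi^2$; reinstating the matrix factor, which on the support $q^2=m^2=m'^2$ equals $2m\,(\slashed{q}+m)$, and using $\delta(m^2-m'^2)=\tfrac{1}{2m}\delta(m-m')$, I obtain $\pi^2\,\delta(m-m')\,p_m$. The principal-value part is where care is needed: after writing $\frac{\pv}{m^2-m'^2}=\frac{1}{m+m'}\frac{\pv}{m-m'}$ and reinstating the matrix factor $(\slashed{q}+m)(\slashed{q}+m')$, one must check that the spurious ``contact'' contributions produced by $(\slashed{q}+m)(\slashed{q}+m')(q^2-m^2)^{-1}$ through the identity $(q^2-m^2)\frac{\pv}{q^2-m^2}=1$ cancel between the $m$- and $m'$-terms; once they do, the part collapses to $\frac{\pv}{m-m'}\big[(\slashed{q}+m)\frac{\pv}{q^2-m^2}-(\slashed{q}+m')\frac{\pv}{q^2-m'^2}\big]$, which is exactly the symbol of $\frac{\pv}{m-m'}(s_m-s_{m'})$. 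Verifying this cancellation and correctly tracking the $i\varepsilon$-prescriptions through the four-term sum is the crux of the argument.
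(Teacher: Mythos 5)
Your proof is correct. For the identities \eqref{eq:pp-p}--\eqref{eq:ks-k} your momentum-space computation is essentially identical to the paper's: both reduce to pointwise products of symbols, use $(\slashed{q}+m)(\slashed{q}+m')=q^2+(m+m')\slashed{q}+mm'$ on the mass shell, and the same two scalar identities for $\delta(q^2-m^2)\,\delta(q^2-m'^2)$ and $\delta(q^2-m^2)\,(q^2-m'^2\mp i\varepsilon q_0)^{-1}$. For the delicate identity \eqref{eq:ss-sp}, however, you take a genuinely different route. The paper writes $s_m=s_m^\vee-i\pi k_m=s_m^\wedge+i\pi k_m$, expands $s_m s_{m'}$ in both ways, averages to isolate $\tfrac12(s^\vee_m s^\vee_{m'}+s^\wedge_m s^\wedge_{m'})+\pi^2\delta(m-m')p_m$, and then proves the causal resolvent identities $s_m^\vee s_{m'}^\vee=\frac{\pv}{m-m'}(s_m^\vee-s_{m'}^\vee)$ not by symbol calculus but by a PDE argument: both sides satisfy the same inhomogeneous Dirac equation and the same support condition, so uniqueness of the Cauchy problem forces equality. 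You instead use the Feynman decomposition $s_m=\tfrac12(s_m^++s_m^-)$, expand into the four products $s_m^\sigma s_{m'}^{\sigma'}$, and extract everything by partial fractions and Sokhotski--Plemelj; I checked your bookkeeping (the two cross terms each contribute $+2\pi^2\delta(m^2-m'^2)\delta(q^2-m^2)$ before the overall $\tfrac14$, and the contact terms $(q^2-m^2)\frac{\pv}{q^2-m^2}=1$ indeed cancel between the $m$- and $m'$-pieces, leaving $(m+m')$ to cancel against $\frac{\pv}{m^2-m'^2}=\frac{1}{m+m'}\frac{\pv}{m-m'}$), and the signs come out right. The paper's detour through the causal Green's functions buys the resolvent identities \eqref{smmp}, which it reuses later and which elegantly sidestep any ambiguity in interpreting $\frac{1}{m^2-m'^2}$ at $m=m'$; your route is more elementary and stays entirely in momentum space, making the origin of the $\pi^2\delta(m-m')\,p_m$ term (the $s^+s^-$ cross terms) transparent. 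The one point you gloss over --- that in the same-sign products the prefactor $\frac{1}{m^2-m'^2}$ carries no $i\varepsilon$ prescription --- is harmless, since it multiplies a difference vanishing at $m=m'$, so any regularization (in particular the principal value) yields the same distribution.
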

\Proof
Calculating pointwise in momentum space, we obtain 
\begin{align*}	p_m(q)\,p_{m'}(q)&=\delta(q^2-m^2)(\slashed{q}+m)\delta(q^2-m'^2)(\slashed{q}+m')\\
&=\delta(m^2-m'^2)\delta(q^2-m^2) \Big(q^2+(m+m')\slashed{q}+mm' \Big)\\
&=\frac{1}{2m}\delta(m-m')\delta(q^2-m^2)(m^2+(m+m')\slashed{q}+mm')\\
&=\frac{1}{2m}\delta(m-m')\delta(q^2-m^2)2m(m+\slashed{q})=\delta(m-m')\,p_m(q) \:.
\end{align*}
This gives the first part of \eqref{eq:pp-p}. The second part of this formula as well as formula \eqref{eq:pk-k} are obtained analogously. The formulas \eqref{eq:ps-p} and \eqref{eq:ks-k} are computed in the following manner:
\begin{align*}	&2 p_m(q)\,s_{m'}(q)=\delta(q^2-m^2)(\slashed{q}+m) \left(\frac{\slashed{q}+m'}{q^2-m'^2-i\varepsilon q_0}+\frac{\slashed{q}+m'}{q^2-m'^2+i\varepsilon q_0}\right)\\
&=\delta(q^2-m^2)(q^2+(m+m')\slashed{q}+mm') \left(\frac{1}{q^2-m'^2-i\varepsilon q_0}+\frac{1}{q^2-m'^2+i\varepsilon q_0}\right)\\
&=\delta(q^2-m^2)(m^2+(m+m')\slashed{q}+mm') \left(\frac{1}{m^2-m'^2-i\varepsilon q_0}+\frac{1}{m^2-m'^2+i\varepsilon q_0}\right)\\
&=\delta(q^2-m^2)(\slashed{q}+m) \left(\frac{(m+m')}{(m+m')(m-m')-i\varepsilon q_0}+\frac{(m+m')}{(m+m')(m-m')+i\varepsilon q_0}\right)\\
&=2 \frac{\pv}{m-m'}\,p_m(q)\:.
\end{align*}

The derivation of~\eqref{eq:ss-sp} is a bit more difficult. By \eqref{k-ss} and \eqref{def-s}, we have
\begin{align*}
	s_m=s_m^{\vee}-i\pi k_m=s_m^{\wedge}+i\pi k_m\,.
\end{align*}
Thus we can express the product $s_m(q)\,s_{m'}(q)$ in two ways, namely as
\begin{align*}
	s_m(q)\,s_{m'}(q)=&(s_m^{\vee}(q)-i\pi k_m(q))(s_{m'}^{\vee}(q)-i\pi k_{m'}(q))\\
	=&s_m^{\vee}(q)\,s_{m'}^{\vee}(q)-i\pi\left(k_{m'}(q)\frac{1}{m'-m-i\varepsilon q_0}+k_{m}(q)\frac{1}{m-m'-i\varepsilon q_0}\right)\\&\;\;\;\;-\pi^2\delta(m-m')\,p_m(q)\:,
\end{align*}
or alternatively as
\begin{align*}
	s_m(q)\,s_{m'}(q)=&(s_m^{\wedge}(q)+i\pi k_m(q))(s_{m'}^{\wedge}(q)+i\pi k_{m'}(q))\\
	=&s_m^{\wedge}(q)\,s_{m'}^{\wedge}(q)+i\pi\left(k_{m'}(q)\frac{1}{m'-m+i\varepsilon q_0}+k_{m}(q)\frac{1}{m-m'+i\varepsilon q_0}\right)\\&\;\;\;\;-\pi^2\delta(m-m')\,p_m(q)\,.
\end{align*}
Adding these two formulas yields
\begin{eqnarray*}
\lefteqn{ 2s_m(q)\,s_{m'}(q) - (s_m^{\vee}(q)\,s_{m'}^{\vee}(q)+s_m^{\wedge}(q)\,s_{m'}^{\wedge}(q))
+2\pi^2\delta(m-m')\,p_m(q) } \\
&=& i\pi k_{m'}(q)\left(\frac{1}{m'-m+i\varepsilon q_0}-\frac{1}{m'-m-i\varepsilon q_0}\right)\\
&&+i\pi k_{m}(q)\left(\frac{1}{m-m'+i\varepsilon q_0}-\frac{1}{m-m'-i\varepsilon q_0}\right)\\
&\overset{\eqref{eq:delta-formula}}{=}& i\pi k_{m'}(q)\epsilon(-q_0)\,2\pi i\,\delta(m'-m)
+i\pi k_{m}(q)\epsilon(-q_0)\,2\pi i\,\delta(m-m')\\
&=&-2\pi^2\delta(m'-m)(-p_{m'}(q)) - 2\pi^2\delta(m-m')(-p_{m}(q))\:,
\end{eqnarray*}
where in the last step we used the definitions of $p_m$ and $k_m$. We thus obtain
\beq 
s_m \,s_{m'} =  \frac{1}{2}
\left( s_m^{\vee}\, s_{m'}^{\vee} +s_m^{\wedge} \,s_{m'}^{\wedge} \right)
+ \pi^2\delta(m-m')\, p_m\:. \label{2ssprime}
\eeq

It remains to derive the relations
\beq \label{smmp}
s_m^{\vee}\,s_{m'}^{\vee}=\frac{\pv}{m-m'}(s_m^{\vee}-s_{m'}^{\vee})
\qquad \text{and} \qquad
s_m^{\wedge}\,s_{m'}^{\wedge}=\frac{\pv}{m-m'}(s_m^{\wedge}-s_{m'}^{\wedge})\:,
\eeq
which can be regarded as ``resolvent identities'' for the causal Green's functions.
It suffices to consider the case of the advanced Green's function.
Clearly, the operators on the right side of~\eqref{smmp}
satisfy the support condition~$\supp((s_m^{\vee}-s_{m'}^{\vee})(x,.))\subset J_x^{\vee}$, and from
\begin{equation*}
	s_m^{\vee}\,s_{m'}^{\vee}(x,y)=\int d^4z\: s_m^{\vee}(x,z)\,s_{m'}^{\vee}(z,y)
\end{equation*}
we see that the operators on the left side of~(\ref{smmp}) satisfy this support condition as well.
Moreover, the calculations
$$(i\slashed{\partial}_x-m)\:s_m^{\vee}\,s_{m'}^{\vee}(x,y)=s_{m'}^{\vee}(x,y) $$
and
\begin{align*}
(i\slashed{\partial}_x&-m)\frac{\pv}{m-m'}(s_m^{\vee}-s_{m'}^{\vee})(x,y)\\
&=\frac{\pv}{m-m'}[\delta(x-y)-(m'-m)s_{m'}^{\vee}(x,y)-\delta(x-y)] =s_{m'}^{\vee}(x,y)
\end{align*}
show that both sides of~\eqref{smmp} satisfy the same inhomogeneous Dirac equation.
Hence their difference is a distributional solution of the homogeneous Dirac equation
vanishing outside~$J_x^\vee$. The uniqueness of the solution of the Cauchy problem
for hyperbolic PDEs yields that this difference vanishes identically.
This shows~\eqref{smmp} and thus finishes the proof of~\eqref{eq:ss-sp}.
\end{proof}
We remark that the last summand in~\eqref{eq:ss-sp} was by mistake omitted in~\cite{PFP}.
For this reason, we shall now rederive and correct the formulas of the causal perturbation expansion
in detail.

\begin{Corollary} \label{corol22}
Let $C\in \{p_m,k_m\}$ and $C^{\,\prime}\in\{p_{m'},k_{m'}\}$ and~$b_m^<$, $b_m^>$
as in~\eqref{thm-defs}. Then the following calculation rule holds:
\begin{align}
	C\,b_m^>b_{m'}^<\,C^{\,\prime}=CC^{\,\prime}+\delta(m-m')\:
	\pi^2C\,b_mp_mb_m\,C^{\,\prime}.
\end{align}
\end{Corollary}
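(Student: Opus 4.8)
The plan is to exploit the fact that the product $b_m^>\,b_{m'}^<$ contains a single ``interface'': reading from the left, $b_m^>$ produces a block built from $s_m$, while $b_{m'}^<$ produces a block built from $s_{m'}$, and the two blocks meet in exactly one factor $s_m\,s_{m'}$. I would therefore reduce everything to a single application of the product identity \eqref{eq:ss-sp}, whose principal-value part acts as a resolvent identity and whose $\delta$-part carries the anomalous contribution, keeping the two parts separate throughout. Before touching the interface, I would record how the outer operators $C$ and $C'$ act. From the geometric-series definitions \eqref{thm-defs} one has the resolvent forms $b_{m'}^< = \1 - s_{m'}\mathscr{B}\,b_{m'}^<$ and $b_m^> = \1 - b_m^>\,\mathscr{B}\,s_m$, together with the absorption rules $\mathscr{B}\,b_{m'}^< = b_{m'}$ and $b_m^>\,\mathscr{B} = b_m$.

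Multiplying the first resolvent form by $C\in\{p_m,k_m\}$ on the left and using $C\,s_{m'} = \frac{\pv}{m-m'}\,C$ from \eqref{eq:ps-p} and \eqref{eq:ks-k} gives the boundary relation $C\,b_{m'}^< = C - \frac{\pv}{m-m'}\,C\,b_{m'}$; multiplying the second form by $C'\in\{p_{m'},k_{m'}\}$ on the right and using $s_m\,C' = -\frac{\pv}{m-m'}\,C'$ (the same relations with the masses interchanged) gives $b_m^>\,C' = C' + \frac{\pv}{m-m'}\,b_m\,C'$. These two identities are what will ultimately absorb every principal-value contribution.

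Next I would establish an identity for the bare product. Expanding $b_m^>\,b_{m'}^< = \sum_{n,k\ge 0}(-\mathscr{B}s_m)^n(-s_{m'}\mathscr{B})^k$, the terms with $n=0$ or $k=0$ sum to $b_{m'}^< + b_m^> - \1$, and for $n,k\ge 1$ one isolates the interface factor $s_m\,s_{m'}$ and applies \eqref{eq:ss-sp}. The principal-value piece yields a telescoping sum: writing the summand in terms of $(\mathscr{B}s_m)^a(\mathscr{B}s_{m'})^b\mathscr{B}$, the two contributions coming from $\frac{\pv}{m-m'}(s_m-s_{m'})$ cancel in pairs except at the boundaries of the index range, leaving $\frac{\pv}{m-m'}(b_{m'}-b_m)$. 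The $\delta$-piece sets $m'=m$ in the smooth factors, and after the shifts $n\mapsto n-1$, $k\mapsto k-1$ the left and right tails reassemble into $b_m$, producing $\pi^2\,\delta(m-m')\,b_m\,p_m\,b_m$. Altogether,
\[ b_m^>\,b_{m'}^< = b_{m'}^< + b_m^> - \1 + \frac{\pv}{m-m'}\,(b_{m'}-b_m) + \pi^2\,\delta(m-m')\,b_m\,p_m\,b_m\:. \]

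Finally I would sandwich this identity between $C$ and $C'$ and insert the two boundary relations for $C\,b_{m'}^<$ and $b_m^>\,C'$. The four principal-value terms — the two coming from the boundary relations and the $\pm\frac{\pv}{m-m'}\,C(b_{m'}-b_m)C'$ coming from the interface — cancel exactly in pairs, the three $CC'$ contributions collapse to a single $CC'$, and the $\delta$-term is already in the required form $\pi^2\,\delta(m-m')\,C\,b_m\,p_m\,b_m\,C'$. I expect the only delicate point to be the combinatorial bookkeeping in the telescoping step: getting the signs $(-1)^{n+k}$ and the index shifts right so that the principal-value sum collapses precisely to $b_{m'}-b_m$, and the two sign flips in the $\delta$-part cancel to give $+\,b_m\,p_m\,b_m$ rather than its negative. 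Because there is only one interface, no products of principal values arise, so the distributional manipulations remain at the level of a single $\frac{\pv}{m-m'}$ and cause no difficulty.
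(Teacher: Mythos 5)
Your proof is correct and follows essentially the same route as the paper's: both reduce everything to the single interface $s_m\,s_{m'}$, apply \eqref{eq:ss-sp} there, telescope the principal-value parts away, and collect the $\delta$-parts into $\pi^2\delta(m-m')\,b_m p_m b_m$ using \eqref{eq:ps-p} and \eqref{eq:ks-k} for the boundary terms. The only difference is organizational --- you first derive a standalone identity for the bare product $b_m^>\,b_{m'}^<$ and absorb the boundary principal values via separate relations for $C\,b_{m'}^<$ and $b_m^>\,C'$, whereas the paper sandwiches with $C$, $C'$ from the outset and telescopes within each fixed order in $\mathscr{B}$ --- but the mechanism is identical.
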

\begin{proof}
Using the calculation rules of the previous lemma, we obtain
\[ C\Big(\sum_{l=0}^{1}(\mathscr{B}s_m)^l(s_{m'}\mathscr{B})^{n-l}\Big)C^{\,\prime}
= C s_{m'}\mathscr{B} C^{\,\prime} + C \mathscr{B} s_{m} C^{\,\prime}
= \frac{\pv}{m-m'} \left( C \mathscr{B} C^{\,\prime} - C \mathscr{B} C^{\,\prime} \right) = 0\:. \]
The same method also applies to higher order. We then get a telescopic sum,
but the last summand in~\eqref{eq:ss-sp} gives additional contributions. More precisely,
for any $n\geq2$,
\begin{align*}	
&C\Big(\sum_{l=0}^{n}(\mathscr{B}s_m)^l(s_{m'}\mathscr{B})^{n-l}\Big)C^{\,\prime}=\\
&=C(\mathscr{B}s_m)^nC^{\,\prime}+C(s_{m'}\mathscr{B})^nC^{\,\prime}+C\Big[\sum_{l=1}^{n-1}(\mathscr{B}s_m)^l(s_{m'}\mathscr{B})^{n-l}\Big]C^{\,\prime}\\
&=C\,\frac{\pv}{m-m'} \left[ -(\mathscr{B}s_m)^{n-1}\mathscr{B}+\mathscr{B}(s_{m'}\mathscr{B})^{n-1} \right]
C^{\,\prime}\\&\;\;\;+C\sum_{l=1}^{n-1}(\mathscr{B}s_m)^{l-1}\mathscr{B}\left(\frac{\pv}{m-m'}(s_m-s_{m'})+\delta(m-m')\pi^2p_m\right)\mathscr{B}(s_{m'}\mathscr{B})^{n-l-1}C^{\,\prime}\\
&=\frac{\pv}{m-m'}\,C \left[-(\mathscr{B}s_m)^{n-1}\mathscr{B}+\mathscr{B}(s_{m'}\mathscr{B})^{n-1} \right] C^{\,\prime}\\
&\;\;\;+\frac{\pv}{m-m'}\,C\Big(\sum_{l=1}^{n-1}(\mathscr{B}s_m)^{l}(\mathscr{B}s_{m'})^{n-l-1}\mathscr{B}-
\sum_{l=0}^{n-2}(\mathscr{B}s_m)^{l}(\mathscr{B}s_{m'})^{n-l-1}\mathscr{B}\Big)C^{\,\prime}\\
&\;\;\;+\delta(m-m')\pi^2\,C\sum_{l=1}^{n-1}(\mathscr{B}s_m)^{l-1}\mathscr{B}p_m\mathscr{B}(s_{m'}\mathscr{B})^{n-l-1}C^{\,\prime}\\
&=\delta(m-m')\pi^2\,C\sum_{l=1}^{n-1}(\mathscr{B}s_m)^{l-1}\mathscr{B}p_m\mathscr{B}(s_{m'}\mathscr{B})^{n-l-1}C^{\,\prime}\\
&=\delta(m-m')\pi^2\,C\sum_{l=0}^{n-2}(\mathscr{B}s_m)^{l}\mathscr{B}p_m\mathscr{B}(s_{m'}\mathscr{B})^{n-l-2}C^{\,\prime}.
\end{align*}
We thus obtain
\begin{align*}	C\,b_m^>b_{m'}^<\,C^{\,\prime}&=C\sum_{n=0}^{\infty}(-\mathscr{B}s_m)^n\sum_{n'=0}^{\infty}(-s_{m'}\mathscr{B})^{n'}C^{\,\prime}\\
&=C\sum_{n=0}^{\infty}\sum_{l=0}^{n}(-\mathscr{B}s_m)^l(-s_{m'}\mathscr{B})^{n-l}C^{\,\prime}\\
&=CC^{\,\prime}+\delta(m-m')\pi^2\sum_{n=2}^{\infty}(-1)^nC\left(\sum_{l=0}^{n-2}(\mathscr{B}s_m)^{l}\mathscr{B}p_m\mathscr{B}(s_{m'}\mathscr{B})^{n-l-2}\right)C^{\,\prime}\\
&=CC^{\,\prime}+\delta(m-m')\pi^2\sum_{n=0}^{\infty}(-1)^nC\left(\sum_{l=0}^{n}(\mathscr{B}s_m)^{l}\mathscr{B}p_m\mathscr{B}(s_{m'}\mathscr{B})^{n-l}\right)C^{\,\prime}\\
&=CC^{\,\prime}+\delta(m-m')\pi^2C\,b_mp_mb_m\,C^{\,\prime}.
\end{align*} \vspace*{-1cm}

\end{proof}

We next generalize the operator $k_m$ to the case of non-static external potentials. The natural way to do this is to define
\beq
\tilde{k}_m:=\frac{1}{2\pi i}(\tilde{s}_m^{\vee}-\tilde{s}_m^{\wedge})\;,   \label{def-ktil}
\eeq
giving a rigorous non-perturbative definition of $\tilde{k}_m$. The formal perturbation expansion of $\tilde{k}_m$ is given in the following proposition.

\begin{Prp}\label{thm:kpert}
The relation \eqref{def-ktil}  uniquely determines the perturbation expansion for $\tilde{k}_m$. We have the formula

\begin{equation}\label{k-pert}
	\tilde{k}_m=\sum_{\beta=0}^{\infty}(-i\pi)^{2\beta}b_m^<k_m(b_mk_m)^{2\beta}b_m^>\:.
\end{equation}
\end{Prp}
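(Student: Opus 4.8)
The plan is to reduce the statement to the unique perturbation series \eqref{series-scaustilde} for the interacting Green's functions $\tilde{s}_m^\vee,\tilde{s}_m^\wedge$, to substitute the decompositions $s_m^\vee=s_m+i\pi k_m$ and $s_m^\wedge=s_m-i\pi k_m$ (the two representations of $s_m$ already used in the proof of Lemma~\ref{lemma21}, obtained from \eqref{k-ss} and \eqref{def-s}) into each factor, and to organise the resulting monomials according to the number of $k_m$-factors they contain. Since \eqref{series-scaustilde} is \emph{the} perturbation expansion of the advanced and retarded Green's functions, the defining relation \eqref{def-ktil} determines the expansion of $\tilde{k}_m$ uniquely; it then only remains to evaluate $\tfrac{1}{2\pi i}(\tilde{s}_m^\vee-\tilde{s}_m^\wedge)$ order by order in $\mathscr{B}$ and to identify the result with the right-hand side of \eqref{k-pert}.

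First I would write the order-$n$ term of $\tilde{s}_m^\vee$ as $(-1)^n\,s_m^\vee\mathscr{B}s_m^\vee\mathscr{B}\cdots\mathscr{B}s_m^\vee$, with $n$ factors $\mathscr{B}$ and $n+1$ factors $s_m^\vee$, and expand every factor via $s_m^\vee=s_m+i\pi k_m$. This yields, for each $n$, a sum of $2^{n+1}$ monomials, one per assignment of a label $s$ or $k$ to the $n+1$ slots, a slot labelled $k$ carrying the weight $i\pi k_m$. Collecting all monomials, over all $n$, with a prescribed number $j$ of $k_m$-factors, the $j$ factors $k_m$ cut the alternating word into a leading block of $s_m$'s, $j-1$ interior blocks and a trailing block, so that a generic surviving monomial reads
\[ (i\pi)^j(-1)^n\,(s_m\mathscr{B})^{\ell_0}\,k_m\,\big[\mathscr{B}(s_m\mathscr{B})^{\ell_1}k_m\big]\cdots\big[\mathscr{B}(s_m\mathscr{B})^{\ell_{j-1}}k_m\big]\,(\mathscr{B}s_m)^{\ell_j} \]
with $\ell_0,\dots,\ell_j\ge0$ and $n=(j-1)+\sum_{i=0}^{j}\ell_i$. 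Summing independently over the $\ell_i$ and using \eqref{thm-defs}, the leading block sums to $b_m^<$, each interior block $\mathscr{B}(s_m\mathscr{B})^{\ell_i}$ to $b_m$, and the trailing block to $b_m^>$, so that
\[ \tilde{s}_m^\vee=b_m^<s_m+\sum_{j\ge1}(i\pi)^j(-1)^{j-1}\,b_m^<k_m(b_mk_m)^{j-1}b_m^>\,. \]
The identical computation for $\tilde{s}_m^\wedge$, now with $s_m^\wedge=s_m-i\pi k_m$, reproduces the same expression with $(i\pi)^j$ replaced by $(-i\pi)^j$.

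Finally I would subtract the two series. The $j=0$ contributions coincide and cancel; the terms with $j$ even cancel because $(i\pi)^j=(-i\pi)^j$; and the terms with $j=2\beta+1$ odd survive, each acquiring the factor $(i\pi)^j-(-i\pi)^j=2(i\pi)^j$. Dividing by $2\pi i$ converts the coefficient into $(-1)^{2\beta}(i\pi)^{2\beta}=(i\pi)^{2\beta}=(-i\pi)^{2\beta}$, which is exactly \eqref{k-pert}.

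I expect the main obstacle to be the sign bookkeeping in the resummation. The point to get right is that each of the $j-1$ interior blocks $\mathscr{B}(s_m\mathscr{B})^{\ell_i}$ carries one extra factor $\mathscr{B}$ relative to the leading and trailing blocks; writing $(-1)^n=(-1)^{j-1}\prod_i(-1)^{\ell_i}$ and absorbing the factors $(-1)^{\ell_i}$ into the definitions of $b_m^<$, $b_m$, $b_m^>$ leaves precisely the global sign $(-1)^{j-1}$ in front of $b_m^<k_m(b_mk_m)^{j-1}b_m^>$. A secondary, purely formal, point is that every rearrangement of the sums is legitimate because at each fixed order in $\mathscr{B}$ only finitely many monomials contribute; the whole identity holds as an identity of formal power series in $\mathscr{B}$, on the same footing as \eqref{series-scaustilde}.
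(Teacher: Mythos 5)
Your proposal is correct and follows essentially the same route as the paper: substitute the series \eqref{series-scaustilde} into \eqref{def-ktil}, insert $s_m^\vee = s_m + i\pi k_m$ and $s_m^\wedge = s_m - i\pi k_m$, and reorder the resulting sum by the number of $k_m$-factors. The paper merely sketches this reordering and defers the combinatorial details to \cite{sea}, whereas you carry them out explicitly (correctly, including the sign bookkeeping $(-1)^n=(-1)^{j-1}\prod_i(-1)^{\ell_i}$ and the cancellation of the even-$j$ terms), so your argument is a valid self-contained version of the paper's proof.
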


\begin{proof}[Proof]
An explicit calculation shows that
\beq \label{brel}
(i\slashed{\partial}+\mathscr{B}-m)\, b_m^< = 0\:.
\eeq
As all operator products in \eqref{k-pert} have a factor $b_m^<$ at the left, the series in \eqref{k-pert} is a solution of the Dirac equation.

From \eqref{k-ss} and \eqref{def-s}, we have
\beq s_m^{\vee} = s_m+i\pi k_m\:, \qquad s_m^{\wedge} = s_m-i\pi k_m\:. \label{svw-sk}
\eeq
We substitute the series \eqref{series-scaustilde} into \eqref{def-ktil}, insert~\eqref{svw-sk} and expand. A reordering of the resulting sum gives the claim. The details of the reordering process can be found
in~\cite{sea}.
\end{proof} \noindent
In Appendix~\ref{appA} we give a compilation of the perturbation series of~$\tilde{k}_m$
and of other perturbation series which will appear in what follows, 
up to third order in~${\mathscr{B}}$.

Next, we want to generalize the operator $p_m$ to the case with interaction.
Note that for the construction of $\tilde{k}_m$, it was crucial that $k_m$ was a causal operator. However, $p_m$ does not have this property, and it is
not at all obvious how to generalize $p_m$ using the decomposition \eqref{p-ss}. Namely,
as mentioned after~\eqref{tdep}, in the time-dependent setting
the energy is not a conserved quantity, and thus the frequency conditions characterizing
the Feynman propagator (see after~\eqref{feynman}) have no meaningful generalization.
Instead, we want to exploit equation \eqref{p-absk} to generalize the operator $p_m$, i.e. 
\begin{equation}
	\tilde{p}_m \stackrel{\text{formally}}{:=}\sqrt{\tilde{k}_m^2}\:.
	\label{ptil-formally}
\end{equation}
This will be done using a Taylor series for the square root function, defining~$\tilde{p}_m$
by a perturbation series.

Using Corollary 2.2 and Proposition 2.3, the operator $\tilde{k}_m\tilde{k}_{m'}$ is calculated to be
\begin{align}\nonumber	
\tilde{k}_m& \tilde{k}_{m'} = \delta(m-m') \\
&\times\sum_{\beta_1,\beta_2=0}^{\infty}(-i\pi)^{2\beta_1+2\beta_2}b_m^<(k_mb_m)^{2\beta_1}\Big(p_m+\pi^2k_mb_mp_mb_mk_m\Big)(b_mk_m)^{2\beta_2}b_m^> \:.
	\label{eq:}
\end{align}
In the following calculations, all operator products can be computed using the rules of
Lemma~\ref{lemma21} and Corollary~\ref{corol22}. This always yields a factor $\delta(m-m')$,
and thus for notational simplicity we can omit these factors $\delta(m-m')$ \label{deltaweg}
and consider all expressions at the same value of $m$. Furthermore, leaving out the subscripts `$_m$', our calculation rules can be
written in the compact form
\begin{align}
\label{eq:rule-pp}	p^2&=k^2=p \\
\label{eq:rule-pk}	pk&=kp=k \\
\label{eq:rule-bb}	b^>b^<&=p+\pi^2pbpbp\,.
\end{align}
In order to keep the combinatorics simple, it is convenient to first treat the
summand~$\pi^2pbpbp$ in~\eqref{eq:rule-bb} by rearranging the
perturbation series of $\tilde{k}$.

\begin{Def} \label{def-b_cech} We define the series of operator products
\begin{align}	\breve{b}^<&:=b^<\,(p+\pi^2pbpbp)^{-1/2} \;=\; b^<\,\Big(p+\sum_{n\geq1}(-1)^n\frac{(2n-1)!!}{2^nn!}(\pi^2pbpbp)^n\Big)\label{eq:b-cechauf} \\
\breve{b}^>&:= (p+\pi^2pbpbp)^{-1/2}\,b^> \;=\; \Big(p+\sum_{n\geq1}(-1)^n\frac{(2n-1)!!}{2^nn!}(\pi^2pbpbp)^n\Big)\,b^>\label{eq:b-cechzu} \\
p+A&:=(p+\pi^2pbpbp)^{1/2} \;=\; p+\sum_{n\geq1}(-1)^{n+1}\frac{(2n-3)!!}{n!2^n}(\pi^2pbpbp)^n \:.
	\label{eq:pplusA}
\end{align}
\end{Def}
Manipulating formal power series, like for example
\begin{align*}
\breve{b}^>\breve{b}^<&=(p+\pi^2pbpbp)^{-1/2}b^>b^<(p+\pi^2pbpbp)^{-1/2}\\
&=(p+\pi^2pbpbp)^{-1/2}(p+\pi^2pbpbp)(p+\pi^2pbpbp)^{-1/2}=p\:,
\end{align*}
we find the simpler calculation rules
\begin{align}
	\breve{b}^>\breve{b}^<&=p \label{first-simplerule}\:, \\
	\breve{b}^<(p+A)&=b^<\label{2nd-simplerule}\:,\\
	(p+A)\breve{b}^>&=b^>\label{3rd-simplerule}\:,\\
	(p+A)^2&=p+\pi^2pbpbp \:.\label{last-simplerule}
\end{align}
Thus we can rewrite $\tilde{k}$ and $\tilde{k}^2$ as
\begin{align}\label{ktil-rewritten}	\tilde{k}&=\sum_{\beta=0}^{\infty}(-i\pi)^{2\beta}\:\breve{b}^<(p+A)k(bk)^{2\beta}(p+A)\breve{b}^>,\\
\tilde{k}^2&=\sum_{\beta_1,\beta_2=0}^{\infty}(-i\pi)^{2(\beta_1+\beta_2)}\:\breve{b}^<(p+A)k(bk)^{2\beta_1}(p+\pi^2pbpbp)k(bk)^{2\beta_2}(p+A)\breve{b}^>.\label{ktil2-rewritten}
\end{align}
We are now ready to derive the perturbation expansion for $\tilde{p}$.
\begin{Thm}
The relation \eqref{ptil-formally}  uniquely determines the perturbation expansion for $\tilde{p}$. We have the formula
\begin{align}\label{ptil-pert}
\tilde{p}=\breve{b}^<\left(p+\sum_{n\geq1}c_nX^n\right)\breve{b}^>
\end{align}
with the combinatorial factors
\begin{align}
	c_n=(-1)^{n+1}\frac{(2n-3)!!}{n!2^n}
\end{align}
and the operator
\beq \label{X-1stversion}
X=-p+\sum_{\beta_1,\beta_2=0}^{\infty}(-i\pi)^{2(\beta_1+\beta_2)}(p+A)k(bk)^{2\beta_1}(p+A)^2k(bk)^{2\beta_2}(p+A)\:.
\eeq
\end{Thm}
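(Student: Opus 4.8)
The plan is to read \eqref{ptil-formally} as the instruction to substitute the binomial series of the square root into the closed form \eqref{ktil2-rewritten} of $\tilde{k}^2$. Abbreviating the inner operator as $M:=\sum_{\beta_1,\beta_2\ge0}(-i\pi)^{2(\beta_1+\beta_2)}(p+A)k(bk)^{2\beta_1}(p+\pi^2pbpbp)k(bk)^{2\beta_2}(p+A)$, equation \eqref{ktil2-rewritten} reads $\tilde{k}^2=\breve{b}^<M\breve{b}^>$, and replacing $p+\pi^2pbpbp$ by $(p+A)^2$ via \eqref{last-simplerule} identifies $M=p+X$ with $X$ the operator of \eqref{X-1stversion}. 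I would then take the ansatz $\tilde{p}=\breve{b}^<Z\breve{b}^>$ and fix $Z$ from the requirement $\tilde{p}^2=\tilde{k}^2$.

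The decisive structural input is that $p$, and not the identity operator, is the unit of the relevant inner algebra. First I would record that every summand of $M$ begins and ends with a factor $p+A$, which by \eqref{eq:pplusA} begins and ends with $p$; together with $pk=kp=k$ and $p^2=p$ from \eqref{eq:rule-pp}--\eqref{eq:rule-pk} this gives $pM=Mp=M$, hence also $pX=Xp=X$. Combined with the collapse relation $\breve{b}^>\breve{b}^<=p$ from \eqref{first-simplerule}, one obtains the key identity $(\breve{b}^<M\breve{b}^>)^n=\breve{b}^<M^n\breve{b}^>$ for all $n\ge1$, by inserting $\breve{b}^>\breve{b}^<=p$ at each junction and absorbing the $p$'s into $M$. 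Consequently the root may be pulled through the sandwich: setting $\sqrt{M}:=p+\sum_{n\ge1}c_nX^n$ (the binomial series in $X$ about the unit $p$) and $\tilde{p}:=\breve{b}^<\sqrt{M}\,\breve{b}^>$, one computes $\tilde{p}^2=\breve{b}^<\sqrt{M}\,\breve{b}^>\breve{b}^<\,\sqrt{M}\,\breve{b}^>=\breve{b}^<\sqrt{M}\,p\,\sqrt{M}\,\breve{b}^>=\breve{b}^<M\breve{b}^>=\tilde{k}^2$, using $p\sqrt{M}=\sqrt{M}p=\sqrt{M}$ and $(\sqrt{M})^2=M$ in the inner algebra. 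This yields \eqref{ptil-pert}.

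It remains to pin down the coefficients and to justify well-definedness. The $c_n$ are the Taylor coefficients of $t\mapsto\sqrt{1+t}$, so I would verify $c_n=\binom{1/2}{n}=(-1)^{n+1}\frac{(2n-3)!!}{n!\,2^n}$ by the same elementary product computation already carried out for $p+A$ in \eqref{eq:pplusA}; indeed the two sets of coefficients coincide, since $p+A$ is itself such a square root. For well-definedness I would check that $X$ starts at second order in $\mathscr{B}$: its leading ($\beta_1=\beta_2=0$, $A=0$) contribution is $pkpkp=p$, which cancels the $-p$ in \eqref{X-1stversion}, while every remaining term carries a factor $A=\O(\mathscr{B}^2)$ or a factor $(bk)^2=\O(\mathscr{B}^2)$. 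Hence $X^n=\O(\mathscr{B}^{2n})$, only finitely many $n$ contribute at each order in $\mathscr{B}$, and $\sum_n c_nX^n$ is a genuine formal power series. Uniqueness is then immediate, since the binomial series is the unique formal power series $Z=p+\O(X)$ with $Z^2=p+X$ in the inner algebra, so that \eqref{ptil-formally} determines the expansion completely.

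The step I expect to be the main obstacle is the justification that $p$, rather than the identity operator $\1$, is the correct unit for the square root. One must establish the collapse identity $(\tilde{k}^2)^n=\breve{b}^<M^n\breve{b}^>$, which hinges on the exact relations $\breve{b}^>\breve{b}^<=p$ of \eqref{first-simplerule} and $pMp=M$, and one must confirm that these formal square-root manipulations remain consistent with the $\delta(m-m')$-normalization suppressed after \eqref{eq:}. Granting this, the remaining bookkeeping of the square-root series is routine.
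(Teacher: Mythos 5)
Your proposal is correct and follows essentially the same route as the paper: identify $\tilde{k}^2=\breve{b}^<(p+X)\breve{b}^>$ via \eqref{ktil2-rewritten} and \eqref{last-simplerule}, observe that $p$ (not $\1$) is the unit of the inner algebra so that the binomial series for $\sqrt{p+X}$ applies, and check that $X=\O(\mathscr{B}^2)$ so the series is a well-defined formal power series. You merely make explicit two points the paper states without elaboration — the collapse identity $(\breve{b}^<M\breve{b}^>)^n=\breve{b}^<M^n\breve{b}^>$ justifying the omission of the $\breve{b}$-factors, and the cancellation $pkpkp=p$ against the $-p$ in \eqref{X-1stversion} (which the paper handles via the rewriting \eqref{X-2ndversion}) — so no further comment is needed.
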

\Proof Using the calculation rule~\eqref{first-simplerule}, all the intermediate factors~$\breve{b}^<$
and~$\breve{b}^>$ will always drop out of our calculations. Therefore, for ease in notation
we can simply omit all factors~$\breve{b}^<$ and~$\breve{b}^>$.
Following~\eqref{ptil-formally}, we are thus looking for a positive operator $\tilde{p}$ being a  powers series in $\mathscr{B}$, such that
$$\tilde{p}^2=\tilde{k}^2\stackrel{\eqref{ktil2-rewritten}}{=}\sum_{\beta_1,\beta_2=0}^{\infty}(-i\pi)^{2(\beta_1+\beta_2)}(p+A)k(bk)^{2\beta_1}(p+\pi^2pbpbp)k(bk)^{2\beta_2}(p+A) \:.$$
Using again the operator $X$ defined in~\eqref{X-1stversion}, we have
\begin{equation}
	\tilde{p}^2=p+X.
	\label{p2-pX}
\end{equation}
The operator $p$ is idempotent and acts as the identity on $X$. Thus we can take the square root of $p+X$ with a formal Taylor expansion,
\begin{align}\label{ptil-taylor}
	\tilde{p}=\sqrt{p+X}=p+\sum_{n\geq1}(-1)^{n+1}\frac{(2n-3)!!}{n!2^n}X^n \:,
\end{align}
which uniquely defines $\tilde{p}$ as a positive operator. Reinserting the factors $\breve{b}^<$, $\breve{b}^>$ into \eqref{ptil-taylor}, we obtain the formula \eqref{ptil-pert}.
\QED
Note that, due to~\eqref{eq:b-cechauf} and~\eqref{brel}, the series \eqref{ptil-pert} is a solution of
the Dirac equation~\eqref{tdep}. Furthermore, rewriting~\eqref{X-1stversion} as
\begin{align}
X &=2A+A^2+\pi^2(kbpbk+Akbpbk+kbpbkA+AkbpbkA) \nonumber \\
&\;\;\;+\sum_{(\beta_1,\beta_2)\neq(0,0)}(-i\pi)^{2(\beta_1+\beta_2)}(p+A)k(bk)^{2\beta_1}(p+\pi^2pbpbp)k(bk)^{2\beta_2}(p+A)\:, \label{X-2ndversion}
\end{align}
one sees that~$X$ only involves operator products with at least two powers of $b$ (and thus also at least two powers of $\mathscr{B}$). Consequently, $X^n$ is of order $\O(\mathscr{B}^{2n})$,
so that~\eqref{ptil-pert} is indeed a well-defined power series in $\mathscr{B}$.

The following Lemma shows that under suitable regularity and decay assumptions on the potential $\mathscr{B}$, all operator products appearing in~\eqref{k-pert} and~\eqref{ptil-pert} are well-defined
and finite.

\begin{Lemma}\label{regular-lemma}
Let $n\in\mathbb{N}$ and $C_0,C_1, \ldots,C_n\in\{k,p,s\}$. If the external potential is smooth and decays so fast at infinity that the functions $\mathscr{B}(x)$, $x_i\mathscr{B}(x)$ and $x_ix_j\mathscr{B}(x)$ are integrable, then the operator product
\begin{align}
	(C_n\mathscr{B}C_{n-1}\mathscr{B}\ldots\mathscr{B}C_0)(x,y)
\end{align}
is a well-defined tempered distribution on $\mathbb{R}^4\times\mathbb{R}^4$.
\end{Lemma}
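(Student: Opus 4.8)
The plan is to pass to momentum space and to estimate the resulting nested integral directly. Fourier transforming every factor, the kernel in question becomes
\begin{equation*}
\int \frac{d^4 q_1 \cdots d^4 q_{n-1}}{(2\pi)^{4(n-1)}}\, \hat{C}_n(p)\,\hat{\mathscr B}(p-q_1)\,\hat{C}_{n-1}(q_1)\cdots \hat{C}_1(q_{n-1})\,\hat{\mathscr B}(q_{n-1}-p')\,\hat{C}_0(p')\:,
\end{equation*}
where $p$ and $p'$ are the momenta conjugate to $x$ and $y$, the $q_i$ are internal integration variables, and each $\hat{C}_j$ is one of the explicit momentum-space factors in~\eqref{def-p-k}, \eqref{feynman} or~\eqref{def-s}. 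The first step is to translate the hypotheses on $\mathscr B$ into properties of $\hat{\mathscr B}$: the integrability of $\mathscr B$, $x_i\mathscr B$ and $x_ix_j\mathscr B$ is equivalent to the statement that $\hat{\mathscr B}$ is twice continuously differentiable with bounded derivatives up to second order, while the smoothness and fast decay of $\mathscr B$ provide the decay of $\hat{\mathscr B}$ at infinity. These are precisely the regularity and the falloff that the estimate will consume.

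Next I would reduce the claim to a convergence estimate. To show that the above expression is a tempered distribution in $(x,y)$, I pair it with a test function $\phi\in\mathcal{S}(\R^4\times\R^4)$; the Fourier transform $\hat{\phi}$ then supplies Schwartz decay in the \emph{external} momenta $p$ and $p'$, so that the two outer factors $\hat{C}_n(p)$ and $\hat{C}_0(p')$ --- which carry the mass-shell $\delta$-distributions or the principal-value singularity of~\eqref{feynman} --- are harmless. It therefore suffices to prove that the inner integral over $q_1,\dots,q_{n-1}$ converges absolutely and depends continuously and at most polynomially on the external momenta. I would establish this by induction on $n$, peeling off one block $\hat{\mathscr B}\,\hat{C}_j$ at a time and showing that this operation maps the relevant class of symbols (continuous functions with two controlled derivatives and polynomial bounds) into itself. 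Note that the algebraic calculation rules of Lemma~\ref{lemma21} play no role here; only the pointwise structure of the kernels enters.

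The heart of the matter --- and the step I expect to be the main obstacle --- is the convergence of the internal integrations over those $q_j$ for which $\hat{C}_j\in\{p,k\}$. For such a vertex the factor $\delta(q_j^2-m^2)$ restricts the integration to the \emph{non-compact} mass hyperboloid, on which the invariant measure together with the numerator $\slashed{q}_j+m$ grows without bound, so that a bound using only the boundedness of the neighbouring factors $\hat{\mathscr B}$ would diverge. Convergence of the large-momentum tail must therefore come from the decay of $\hat{\mathscr B}$ furnished by the smoothness of $\mathscr B$, whereas the two derivatives of $\hat{\mathscr B}$ supplied by the second-moment hypothesis are what keep the class of admissible integrands closed under the iteration and match the distributional order of the singular kernels; concretely, one parametrizes the hyperboloid and integrates by parts, transferring these derivatives onto the singular factors, after which local integrability finishes the estimate. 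The vertices with $\hat{C}_j=s$ are milder, since the propagator~\eqref{feynman} decays like $|q_j|^{-1}$ away from the mass shell and its principal-value singularity is locally integrable, so no new obstruction arises there.

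Finally I would assemble the pieces: the inner integral converges to a continuous function of $(p,p')$ with polynomial bounds, multiplying it by the outer mass-shell and principal-value distributions yields a well-defined element of $\mathcal{S}'$, and undoing the Fourier transform gives the asserted tempered distribution on $\R^4\times\R^4$. The two-moment hypothesis should be regarded as sharp in the sense that it is dictated by the regularity needed at each mass-shell vertex, and the induction guarantees that the bounds are uniform in the number $n$ of factors.
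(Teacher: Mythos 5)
Your proposal is correct in outline and follows essentially the route of the proof the paper relies on (the paper itself gives no argument beyond the remark that one ``derives Schwarz norm bounds by combining estimates in position and in momentum space'' and defers to \cite[Lemma~2.2.2]{PFP}): you convert the position-space integrability of $\mathscr{B}$, $x_i\mathscr{B}$, $x_ix_j\mathscr{B}$ into $C^2$-bounds on $\hat{\mathscr{B}}$ and the smoothness of $\mathscr{B}$ into the decay of $\hat{\mathscr{B}}$, and then estimate the nested momentum integrals over the non-compact mass shells inductively, which is precisely the advertised combination of position- and momentum-space estimates. The one detail to correct is the direction of your integration by parts: the derivatives must be moved \emph{off} the singular factors onto the $C^2$-smooth factors $\hat{\mathscr{B}}$ (e.g.\ writing $\mathrm{PP}/u=\partial_u\log|u|$ so that the locally integrable kernel $\log|u|$ ends up multiplying $\partial_u\hat{\mathscr{B}}$), not onto the singular factors as written, since differentiating $\delta(q^2-m^2)$ or the principal value would raise rather than lower the distributional order.
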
 \noindent
For the proof, one derives Schwarz norm bounds by combining estimates in position and
in momentum space. The details can be found in \cite[Lemma~2.2.2]{PFP}.

\section{The Rescaling Procedure} \label{sec3}
Introducing in analogy to~\eqref{sea-pk} the operator
\begin{align}
	\tilde{t}:=\frac{1}{2}(\tilde{p}-\tilde{k}) \:,
\label{eq:tt}
\end{align}
the key feature of the operator $\tilde{t}$ is that it specifies via its range a certain subspace of the space of solutions of the Dirac equation \eqref{tdep}. Since in the vacuum case, the range of~$\tilde{t}$
is precisely the space of all negative-energy solutions, we refer to the range of $\tilde{t}$ as the
\textit{generalized negative-energy solutions}.

In~\cite{sea, PFP}, the interacting fermionic projector is defined by~\eqref{eq:tt}. However,
the operator~$\tilde{t}$ is {\em{not}} a projection operator in the sense of \eqref{P-proj}.
This can already be seen by considering the perturbation expansions to second order.
Namely, from~\eqref{ktil-rewritten} and~\eqref{ptil-pert},
\[ \tilde{t}=\frac{1}{2} \: \breve{b}^< \left(
p-k+kbkbk+\frac{1}{2}kbpbk+\frac{1}{2}pbpbp-\frac{1}{2}pbkbk-\frac{1}{2}kbkbp \right)
\breve{b}^> +\O(b^4) \]
Taking the square and using the calculation rules (\ref{first-simplerule}-\ref{last-simplerule}),
we obtain
\[ \begin{split}
\tilde{t}\:^2 = \frac{1}{2}\: \breve{b}^< \Big( p-k +\frac{3}{2}kbkbk&+\frac{1}{2}kbpbk+\frac{1}{2}pbpbp-pbkbk \\
&-kbkbp+\frac{1}{2}pbkbp -\frac{1}{2}kbpbp-\frac{1}{2}pbpbk \Big) \breve{b}^>  +\O(b^4)\:.
\end{split} \]
Obviously, these expansions for~$\tilde{t}$ and~$\tilde{t}^2$ do not coincide.

Our strategy for resolving this problem is to rescale the states in the range of~$\tilde{t}$
with the following general procedure. Since the fermionic projector~$P$ should have the
same range as the operator~$\tilde{t}$, we take the ansatz
\[	P=\tilde{t}Y\tilde{t} \]
with a ``rescaling operator'' $Y$ which should be invertible and should commute with~$\tilde{t}$.
We define
\beq \label{defRZ}
R = \tilde{t}\tilde{t} \qquad \text{and} \qquad
Z = \tilde{k}\tilde{k}=\tilde{p}\tilde{p}\:.
\eeq
The simplest idea would be to choose ``$Y=R^{-1}$'', but as the range of $\tilde{t}$ 
are only the generalized negative-energy solutions, the operator~$R$ will in general not be invertible. On the other hand, the operator $\tilde{k}$ maps onto the whole space of solutions of the Dirac equation, and thus the formal definition ``$Y=Z^{-1}$'' does the trick.
We will now make this definition precise and show that it indeed yields a projection operator.
As explained on page~\pageref{deltaweg}, we again omit all factors $\breve{b}^<$ and~$\breve{b}^>$, knowing that they can
be reinserted at the very end.
According to \eqref{p2-pX}, $Z$ is given by $Z=p+X$.
As $p$ is idempotent and acts as the identity on the operator $X$, we can define~$Y$
by a Neumann series,
\begin{equation}\label{def-x}
	Y=p+\sum_{l\geq1}(-1)^lX^l\:.
\end{equation}
 Then
 \begin{align}
	ZY=(p+X)(p+\sum_{l\geq1}(-X)^l)=p\,,
\end{align}
showing that $Z$ and $X$ are inverse in our context where $p$ acts as the identity on all other operators.

We are now ready to define the fermionic projector in the interacting case~\eqref{tdep} and to prove that we indeed obtain a projection operator.

\begin{Def}
The fermionic projector~$P$ is defined by 
\begin{equation}\label{eq:defP}
	P:=\tilde{t}Y\tilde{t} \:.
\end{equation}
with $\tilde{t}$ according to~\eqref{eq:tt} and~$Y$ given by~\eqref{def-x}.
\end{Def}

\begin{Prp}
The fermionic projector is idempotent,
$$P^2=P \:,$$
(where, as explained on page~\pageref{deltaweg}, we leave out the factor~$\delta(m-m')$ and compute
for a fixed mass parameter~$m$).
\end{Prp}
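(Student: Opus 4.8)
The plan is to reduce the statement to a short computation in a commutative algebra. The decisive structural observation is that the operators $\tilde{p}$, $\tilde{k}$, $Z$ and $Y$ all commute with one another. Indeed, by~\eqref{ptil-taylor} the operator $\tilde{p}$ is a formal power series in $X=Z-p=\tilde{k}^2-p$, and by~\eqref{def-x} so is $Y$; since $\tilde{k}$ manifestly commutes with $\tilde{k}^2$ and hence with $X$, each of $\tilde{p}$, $\tilde{k}$, $Z=p+X$, $Y$ is built from the single operator $X$ together with $\tilde{k}$, and they therefore mutually commute. Consequently $\tilde{t}=\frac12(\tilde{p}-\tilde{k})$ commutes with $Y$ and with $Z$ as well. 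Throughout I work, as on page~\pageref{deltaweg}, in the formal algebra in which the rules~\eqref{eq:rule-pp}--\eqref{eq:rule-bb} hold, $p$ is a two-sided identity on all operators occurring, $\tilde{p}^2=\tilde{k}^2=Z$ by~\eqref{defRZ}, and $Y$ is the inverse of $Z$ in the sense $ZY=YZ=p$.

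First I would record the closed form of $\tilde{t}^2$. Expanding the square and using $\tilde{p}^2=\tilde{k}^2=Z$ together with $\tilde{p}\tilde{k}=\tilde{k}\tilde{p}$ gives
\[ \tilde{t}^2 = \tfrac14(\tilde{p}-\tilde{k})^2 = \tfrac12\,(Z-\tilde{p}\tilde{k})\:. \]
Squaring this and using $(\tilde{p}\tilde{k})^2=\tilde{p}^2\tilde{k}^2=Z^2$ (again by commutativity) yields the crux identity
\[ \tilde{t}^4 = \tfrac14(Z-\tilde{p}\tilde{k})^2 = \tfrac12\,Z\,(Z-\tilde{p}\tilde{k}) = Z\,\tilde{t}^2\:. \]

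With these in hand the idempotence is immediate. Since $\tilde{t}$ commutes with $Y$, I may write $P=\tilde{t}Y\tilde{t}=Y\tilde{t}^2$, and since $\tilde{t}^2$ (a combination of $Z$ and $\tilde{p}\tilde{k}$) also commutes with $Y$,
\[ P^2 = (Y\tilde{t}^2)(Y\tilde{t}^2) = Y^2\,\tilde{t}^4 = Y^2 Z\,\tilde{t}^2 = Y\,(YZ)\,\tilde{t}^2 = Y p\,\tilde{t}^2 = Y\tilde{t}^2 = P\:, \]
where I used $\tilde{t}^4=Z\tilde{t}^2$, then $YZ=p$, and finally that $p$ acts as the identity. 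Equivalently, one can first simplify $P=\frac12(p-\tilde{k}\tilde{p}^{-1})$ by writing $Y=Z^{-1}=\tilde{p}^{-2}$ and cancelling one factor of $\tilde{p}$; setting $J:=\tilde{k}\tilde{p}^{-1}$ one has $J^2=\tilde{k}^2\tilde{p}^{-2}=ZZ^{-1}=p$, so that $P=\frac12(p-J)$ is idempotent by exactly the vacuum computation $(p-J)^2=2(p-J)$. This second form has the virtue of exhibiting the rescaling as the replacement of $\tilde{k}$ by the normalized operator $\tilde{k}\tilde{p}^{-1}$, which plays the role that $k$ plays in the vacuum.

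The genuine content of the argument is therefore not the final algebra but the verification that the manipulations are legitimate. I expect the main obstacle to be purely the bookkeeping of commutativity: one must be certain that $\tilde{p}$, $\tilde{k}$ and $Y$ may be treated as commuting power series in the single operator $X$, and that the formal conventions ($p$ as a two-sided identity, $Y$ as a genuine two-sided inverse of $Z$) are internally consistent, so that no hidden non-commutativity is introduced when the omitted factors $\breve{b}^<$, $\breve{b}^>$ are reinserted at the end. Once this framework is in place, no further computation beyond the two displayed identities is required.
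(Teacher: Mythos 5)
Your proof is correct and follows essentially the same route as the paper: both rest on the observation that $\tilde{p}$, $\tilde{k}$ and $Y$ mutually commute (being power series in $X$, with which $\tilde{k}$ commutes), together with $\tilde{p}^2=\tilde{k}^2=Z$ and $ZY=p$. Your second formulation, $P=\frac12(p-J)$ with $J=\tilde{k}\tilde{p}^{-1}=\tilde{p}Y\tilde{k}$ and $J^2=p$, is literally the paper's final computation, so the $\tilde{t}^4=Z\tilde{t}^2$ detour is just a repackaging of the same algebra.
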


\begin{proof}[Proof]
The formulas \eqref{ktil-rewritten} and \eqref{X-1stversion} show that the operators $\tilde{k}$ and $X$ commute,
\begin{align*}
\tilde{k} & (p+X) = (p+X)\tilde{k} \\
&= \sum_{\substack{\beta_1,\beta_2,\\\beta_3=0}}^{\infty}(-i\pi)^{2(\beta_1+\beta_2+\beta_3)}(p+A)k(bk)^{2\beta_1}(p+A)^2k(bk)^{2\beta_2}(p+A)^2k(bk)^{2\beta_3}(p+A) \:.
\end{align*}
Noting that the operators~$\tilde{p}$ and~$Y$ are formed as a sum of powers of~$X$
(cf.~\eqref{ptil-taylor} and~\eqref{def-x}), it follows that these operators both
commute with each other and with~$\tilde{k}$,
$$[\tilde{p},Y] = [\tilde{p},\tilde{k}] = [Y,\tilde{k}] =0 \:.$$
Hence
\begin{align}
	\tilde{t}Y\tilde{t}=\frac{1}{4}(\tilde{p}-\tilde{k})Y(\tilde{p}-\tilde{k})=
	\frac{1}{4}(\tilde{p}\tilde{p}Y+\tilde{k}\tilde{k}Y-\tilde{p}Y\tilde{k}-\tilde{k}Y\tilde{p})=
	\frac{1}{2}(p-\tilde{p}Y\tilde{k}) \:,\label{Psea-start}
\end{align}
and thus
\begin{align*}
PP&=(\tilde{t}Y\tilde{t})(\tilde{t}Y\tilde{t})=\frac{1}{4}(p-\tilde{p}Y\tilde{k})(p-\tilde{p}Y\tilde{k})\\
&=\frac{1}{4}(p-2\tilde{p}Y\tilde{k}+\tilde{p}Y\tilde{k}\tilde{p}Y\tilde{k})
=\frac{1}{4}(p-2\tilde{p}Y\tilde{k}+(\tilde{p}\tilde{p}Y)(\tilde{k}\tilde{k}Y))\\
&=\frac{1}{4}(p-2\tilde{p}Y\tilde{k}+p)
=\tilde{t}Y\tilde{t}=P \:.
\end{align*}

\vspace*{-1.5em}
\end{proof} 

\vspace*{0.2em}

\section{Derivation of the Perturbation Expansion for the Fermionic Projector} \label{sec4}
The definition of the fermionic projector \eqref{eq:defP} has the disadvantage that it
involves several hierarchies of infinite sums.
In this section we shall simplify the perturbation expansion of the fermionic projector
by bringing it into a form with the same structure as the perturbation expansion \eqref{k-pert} for $\tilde{k}$ involving only two hierarchies: $P$ should be an infinite sum of operator products of the form $b^<F_1bF_2b...bF_nb^>$ with $F_j\in\{p,k\}$, where the operators $b^<$, $b$, $b^>$
are defined by the perturbation series~\eqref{thm-defs}.
Again, the leading orders of the expansions derived in this section can be found in Appendix~\ref{appA}.

\begin{Thm}\label{maintheorem}
The fermionic projector $P_m$ defined in \eqref{eq:defP} can be written as
\begin{align}
	P_m=\frac{1}{2}(\tilde{p}^{\,\res}_m-\tilde{k}^{\,\res}_m)\:,
	\label{P-pq}
\end{align}
where the operators $\tilde{p}^{\,\res}_m$ and $\tilde{k}^{\,\res}_m$ are given by
\begin{align}
\tilde{p}^{\,\res}_m &=b^<_m\:\sum_{\beta=0}^{\infty}(i\pi)^{2\beta}p_m(b_mp_m)^{2\beta}\:b^>_m
	\label{ptil-thm} \\
\tilde{k}^{\,\res}_m &=b^<_m\:\sum_{r=0}^{\infty}\sum_{\rho=0}^{r}c(r,\rho)G_m(r,\rho)\:b^>_m \:.
\label{qtil-thm}
\end{align}
Here we have used the definitions
\begin{align}	c(r,\rho)=\pi^{2r}\: \frac{\Gamma(r-\rho+\frac{1}{2})}{\Gamma(-\rho+\frac{1}{2})\:r!}
\end{align}
and
\begin{align}	G_m(r,\rho)= \!\sum_{\substack{Q\in\mathcal{P}(2r+1)\\\#Q=2\rho}} \!\!\!\!\!\!
(-1)^{\sigma(r,\rho,Q)}\:F_m(Q,1)\, b_m\, F_m(Q,2) \,b_m.....b_m\, F_m(Q,2r+1)\:,
\end{align}
where $\mathcal{P}(2r+1)$ denotes the power set of $\{1,...,2r+1\}$, the
 operators $F(Q,n)$ are defined in \eqref{def-F} and
\beq 
\sigma(r,\rho,Q)=1+\sum_{ x\in\{1,...,2r+1\}\backslash Q}\:x\:. 
\eeq
\end{Thm}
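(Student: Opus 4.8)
The plan is to begin from the identity \eqref{Psea-start} obtained in the proof of the preceding proposition, reinstating the outer factors $\breve{b}^<$ and $\breve{b}^>$ that were suppressed there. This gives $P=\breve{b}^<\,\tfrac12\big(p-\tilde{p}Y\tilde{k}\big)\,\breve{b}^>$. Since $Y=Z^{-1}$ and $\tilde{p}=Z^{1/2}$ with $Z=\tilde{p}\tilde{p}=\tilde{k}\tilde{k}$ (see \eqref{defRZ} and \eqref{p2-pX}), and since $\tilde{p}$, $Y$ and $\tilde{k}$ mutually commute, the middle factor collapses to $\tilde{p}Y\tilde{k}=Z^{-1/2}\tilde{k}=\tilde{p}^{-1}\tilde{k}$, the perturbative analogue of the sign in a polar decomposition; indeed $(\tilde{p}^{-1}\tilde{k})^2=\tilde{p}^{-2}\tilde{k}^2=p$. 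Matching $P=\tfrac12\big(\breve{b}^<p\,\breve{b}^>-\breve{b}^<\tilde{p}^{-1}\tilde{k}\,\breve{b}^>\big)$ against \eqref{P-pq}, I would then \emph{define} $\tilde{p}^{\,\res}:=\breve{b}^<p\,\breve{b}^>$ and $\tilde{k}^{\,\res}:=\breve{b}^<\tilde{p}^{-1}\tilde{k}\,\breve{b}^>$ and establish the two explicit formulas \eqref{ptil-thm} and \eqref{qtil-thm} separately.

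The formula \eqref{ptil-thm} for $\tilde{p}^{\,\res}$ is the easy half. Writing $\breve{b}^<=b^<(p+A)^{-1}$ and $\breve{b}^>=(p+A)^{-1}b^>$ (from \eqref{2nd-simplerule} and \eqref{3rd-simplerule}) and using the absorption rule $p(p+A)^{-1}=(p+A)^{-1}$ — valid because $(p+A)^{-1}$, like $p+A$, is a sum of operator products each beginning and ending with a factor $p$ — one gets $\tilde{p}^{\,\res}=b^<(p+A)^{-2}b^>$. By \eqref{last-simplerule}, $(p+A)^{-2}=(p+\pi^2pbpbp)^{-1}$; expanding this Neumann series and collapsing $(pbpbp)^{\beta}=p(bp)^{2\beta}$ with the rules \eqref{eq:rule-pp}--\eqref{eq:rule-pk} yields $(p+A)^{-2}=\sum_{\beta}(-\pi^2)^{\beta}p(bp)^{2\beta}=\sum_{\beta}(i\pi)^{2\beta}p(bp)^{2\beta}$, which is exactly \eqref{ptil-thm}.

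For the formula \eqref{qtil-thm} for $\tilde{k}^{\,\res}$ — the combinatorial core — the plan is to expand $\tilde{p}^{-1}=Z^{-1/2}=\sum_{n\ge0}\binom{-1/2}{n}X^n$ with $X$ as in \eqref{X-1stversion}, multiply by $\tilde{k}$ from \eqref{ktil-rewritten}, and convert the outer $\breve{b}^<,\breve{b}^>$ into $b^<,b^>$ through the extra factors $(p+A)^{-1}$ exactly as above. After substituting $(p+A)^2=p+\pi^2pbpbp$ everywhere together with the half-integer expansions of $A$ and $(p+A)^{-1}$ from \eqref{eq:b-cechauf} and \eqref{ptil-taylor}, every surviving monomial becomes, by the collapse rules \eqref{eq:rule-pp}--\eqref{eq:rule-pk}, a $b$-separated product $C_1bC_2b\cdots bC_N$ with $C_i\in\{p,k\}$. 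I would then prove two bookkeeping facts: that only odd lengths $N=2r+1$ with an even number $2\rho$ of factors $p$ survive (reflecting the involution identity $(\tilde{p}^{-1}\tilde{k})^2=p$), and that the net scalar coefficient of the monomial whose $p$-factors sit at the positions $Q$, with $\#Q=2\rho$, equals $c(r,\rho)\,(-1)^{\sigma(r,\rho,Q)}$. Collecting these proves \eqref{qtil-thm}.

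The crux, and the content of Section~\ref{sec4}, is this last coefficient count. A fixed product $F(Q,1)b\cdots bF(Q,2r+1)$ is generated in many ways — through the power $n$ in the binomial series for $Z^{-1/2}$, the multi-indices $\beta_1,\beta_2,\dots$ inside each factor of $X^n$ and inside $\tilde{k}$, and the coefficients of the $A$- and $(p+A)^{-1}$-expansions — and the signs coming from $(-i\pi)^{2\beta}=(-1)^{\beta}\pi^{2\beta}$, from $\binom{-1/2}{n}$, and from the collapses $k^2=p$, $pk=kp=k$ must all be resummed. I expect this resummation to reduce to a half-integer binomial identity producing the Pochhammer quotient $\Gamma(r-\rho+\tfrac12)/(\Gamma(-\rho+\tfrac12)\,r!)$, while the position-dependent sign $(-1)^{1+\sum_{x\notin Q}x}$ only appears after the cancellations are carried out; this is already visible at order $b^2$, where the core $\tilde{p}^{-1}\tilde{k}$ (with the $\breve{b}$'s suppressed) equals $k-\tfrac{\pi^2}{2}kbkbk+\tfrac{\pi^2}{2}pbkbp+\O(b^4)$ and the remaining terms of $\tilde{k}^{\,\res}$ arise precisely from the $\breve{b}\to b$ conversion. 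Organizing this bookkeeping — presumably by induction on $r$ or via a generating function in the number of $b$-factors — is the main technical labor.
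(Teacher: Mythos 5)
Your setup is identical to the paper's: you start from \eqref{Psea-start}, recognize $\tilde{p}Y\tilde{k}=(p+X)^{-1/2}\tilde{k}$ as the perturbative polar isometry, and split off $\tilde{p}^{\,\res}=\breve{b}^<p\,\breve{b}^>$ and $\tilde{k}^{\,\res}=\breve{b}^<(p+X)^{-1/2}\tilde{k}\,\breve{b}^>$, which is precisely \eqref{p-res} and \eqref{def-qtil}. Your derivation of \eqref{ptil-thm} via $\breve{b}^<p\,\breve{b}^>=b^<(p+A)^{-2}b^>=b^<(p+\pi^2pbpbp)^{-1}b^>$ is correct, and your order-$b^2$ check of the core $k-\tfrac{\pi^2}{2}kbkbk+\tfrac{\pi^2}{2}pbkbp$ is consistent with Appendix~\ref{appA}.

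For \eqref{qtil-thm}, however, you have stated the problem rather than solved it: the assertion that the net coefficient of a monomial $C_1b\cdots bC_{2r+1}$ with $p$'s at the positions of $Q$, $\#Q=2\rho$, equals $c(r,\rho)(-1)^{\sigma(r,\rho,Q)}$ is the entire content of the theorem beyond the setup, and deferring it to an expected ``half-integer binomial identity'' organized ``by induction on $r$ or a generating function'' leaves the combinatorial core unproved. What is actually needed (Lemmas~\ref{lemma42} and~\ref{thm:combi-lemma}) is: (i) a truncation argument justifying the interchange of the infinite $n$-sum from $(p+X)^{-1/2}=\sum_n e_nX^n$ with the finite $l$-sum from $X^n=\sum_l\binom{n}{l}(-1)^{n-l}(p+X)^l$, yielding the coefficients $f_{l,r}=\sum_{n=l}^{r}e_n\binom{n}{l}(-1)^{n-l}$; (ii) the key multiplicity statement, proved by induction on $l$ (not on $r$), that a monomial with $2\rho$ factors $p$ occurs exactly $\binom{r+l-\rho}{l-\rho}$ times in the series \eqref{combi-start}, with sign $(-1)^{\#(k\text{ at odd positions})}$ up to an overall sign --- the induction step requires a case analysis of the (unique) ways the monomial factors across the product structure, including the sign correction $(-1)^{-2}$ when two adjacent $k$'s merge into a $p$; and (iii) the closed-form evaluation $\sum_{l=\rho}^{r}f_{l,r}\binom{r+l-\rho}{l-\rho}=\Gamma(r-\rho+\tfrac12)/\bigl(\Gamma(\tfrac12-\rho)\,r!\bigr)$. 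Step (ii) in particular is a genuine missing idea: without the intermediate $l$-dependent multiplicity $\binom{r+l-\rho}{l-\rho}$ there is nothing to resum, and neither an induction on $r$ nor a generating function in the number of $b$-factors has an obvious induction hypothesis to carry.
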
 \noindent
Here the superscript~``$\res$'' indicates that the operators~$\tilde{p}^{\,\res}$ and~$\tilde{k}^{\,\res}$
have been rescaled using the procedure of Section~\ref{sec3}.
Alternatively, the superscript~``$\res$'' can be regarded as an abbreviation of the word ``residual,''
in view of the ``residual argument'' used in the light-cone expansion of the fermionic projector
(cf. Section~\ref{sec5}).
\begin{proof}[Proof of Theorem~\ref{maintheorem}]
We begin with formula \eqref{Psea-start},
\begin{align}
	P=\frac{1}{2}\:\breve{b}^<\:(p-\tilde{k}\tilde{p}Y)\:\breve{b}^> \:.\label{eq:P1}
\end{align}
Using that~$\tilde{p}=(p+X)^{1/2}$ and $Y=(p+X)^{-1}$, we obtain
\[ \tilde{p}Y=(p+X)^{-\frac{1}{2}}=p+\sum_{n\geq1}e_nX^n=\sum_{n\geq0}e_nX^n \]
with the combinatorial factors
\beq \label{endef}
e_n=(-1)^n\frac{(2n-1)!!}{2^nn!}\:,\qquad n\geq1\:.
\eeq
For the last identity, we have also set $e_0=1$ and $X^0=p$. Inserting into equation \eqref{eq:P1}, we obtain 
\begin{align}	P=\frac{1}{2}\,\breve{b}^<\:p\:\breve{b}^>-\frac{1}{2}\,\breve{b}^<\:\sum_{n\geq0}e_n\:\tilde{k}X^n\:\breve{b}^>
\label{eq:P2}
\end{align}
It is convenient to give the first series in \eqref{eq:P2} a name,
\begin{align}\label{p-res}	\tilde{p}^{\,\res}:=\breve{b}^<\:p\:\breve{b}^>=b^<\:\sum_{\beta=0}^{\infty}(i\pi)^{2\beta}p(bp)^{2\beta}\:b^>\:.
\end{align}
One can show in a way completely analogous to Proposition~\ref{thm:kpert} that 
\beq \label{pres}
	\tilde{p}^{\,\res}=\frac{1}{2\pi i}(\tilde{s}^+-\tilde{s}^-) \:,
\eeq
where $\tilde{s}^+$ and $\tilde{s}^-$ are the power series that arise formally analogous
to~\eqref{series-scaustilde},
\beq	\label{sv-series}
\tilde{s}_m^+=\sum_{n=0}^{\infty}(-s_m^+\mathscr{B})^ns_m^+ \:,\qquad
	\tilde{s}_m^-=\sum_{n=0}^{\infty}(-s_m^-\mathscr{B})^ns_m^- \:.
\eeq

The remaining task is to simplify the formula for the second series in \eqref{eq:P2},
which we now denote by~$\tilde{k}^{\,\res}$, i.e.
\begin{align}
\tilde{k}^{\,\res}:=\breve{b}^<\:\sum_{n\geq0}e_n\:\tilde{k}X^n\:\breve{b}^> \:,
	\label{def-qtil}
\end{align}
This will be done in Lemma \ref{lemma42}, concluding the proof of the theorem.
\QED

\begin{Remark}[\bf Stone's formula and polar decomposition] \em \label{remark}
In this remark we explain formal analogies of our perturbation expansion with
formulas known from functional analysis. First, for a selfadjoint
operator~$\mathcal{D}$ in a Hilbert space, one can compute the
spectral projector~$\mathscr{P}_m$ using Stone's formula (see e.g.~\cite[Theorem~VII.13]{reed+simon}),
\beq \label{specproj}
\mathscr{P}_m=\frac{1}{2\pi i}\left((\mathcal{D}-(m+i\varepsilon))^{-1}-(\mathcal{D}-(m-i\varepsilon))^{-1}\right).
\eeq
Formally applying this identity to the vacuum Dirac operator~$\mathcal{D}=i \slashed{\partial}$,
the resolvent is a multiplication operator in momentum space; namely,
\[ \Big( \mathcal{D}-(m \pm i\varepsilon) \Big)^{-1}(q) = \Big(\slashed{q}-(m\pm i \varepsilon) \Big)^{-1}
= \frac{\slashed{q}+m}{q^2-(m \pm i \varepsilon)^2} \:. \]
Comparing with~\eqref{p-ss} and \eqref{feynman}, we find that the spectral projector
coincides with the operator~$p_m$ with integral kernel \eqref{def-p-k},
\[\mathscr{P}_m= p_m = \frac{1}{2\pi i}\: (s_m^+-s_m^-)\:.\]

In the case of the interacting Dirac equation~$\mathcal{D}=i \slashed{\partial} + \B$, the resolvent
can be computed formally with a Neumann series,
\[ \Big( \mathcal{D}-(m \pm i\varepsilon) \Big)^{-1} = 
\sum_{n\geq0}(-s^{\pm}\mathscr{B})^ns^{\pm} \:. \]
This coincides precisely with the perturbation series~\eqref{sv-series}, and thus we can
write Stone's formula~\eqref{specproj} as
\[ \mathscr{P}_m = \frac{1}{2\pi i}(\tilde{s}^+_{m}-\tilde{s}^-_{m}) = \tilde{p}^{\,\res}_m\:, \]
with~$\tilde{p}^{\,\res}_m$ as given by~\eqref{pres}.
This consideration shows that the operator $\tilde{p}^{\,\res}$ defined in \eqref{p-res}
should be regarded as the spectral projector of the interacting Dirac equation. This is consistent with
the equations
\beq \label{rem-eq}
\tilde{p}^{\,\res}\tilde{k}^{\,\res}=\tilde{k}^{\,\res}=\tilde{k}^{\,\res}\tilde{p}^{\,\res}\qquad \text{and}\qquad \tilde{p}^{\,\res}\tilde{p}^{\,\res}=\tilde{p}^{\,\res}=\tilde{k}^{\,\res}\tilde{k}^{\,\res} \:,
\eeq
which can be computed explicitly from~\eqref{p-res} and~\eqref{def-qtil}.

Next, a bounded linear operator $K$ in a Hilbert space admits a polar decomposition of the form 
$$K=U|K|\,,$$
where~$|K|=\sqrt{K^*K}$ is the absolute value of~$K$ and~$U$ is an isometry
from the orthogonal complement of $\text{ker}(K)$ to the closure of the image of~$K$
(see e.g.~\cite[Theorem~VI.10]{reed+simon}).
From \eqref{eq:P1}, we see that
$$\tilde{k}^{\,\res}=\tilde{k}\tilde{p}Y=\tilde{k}\tilde{p}^{-1}=\tilde{k}|\tilde{k}|^{-1} \:. $$
We therefore regard $\tilde{k}^{\,\res}$ as the isometry transforming $|\tilde{k}|$ into $\tilde{k}$ and having the same range and kernel as $\tilde{k}$. Since $\tilde{p}$ is a positive operator, i.e. $\tilde{p}=|\tilde{p}|$, and $\tilde{p}^{\,\res}$ is the spectral projector and thus acts as the identity on the range of $\tilde{p}$, one can also regard~$\tilde{p}^{\,\res}$ as the isometry transforming~$|\tilde{p}|$ into~$\tilde{p}$ and having the same range and kernel as~$\tilde{p}$.
\QEDrem
\end{Remark}

It remains to provide the lemma quoted in the proof of Theorem \ref{maintheorem}.
\begin{Lemma}\label{lemma42}
The operator $\tilde{k}^{\,\res}$ defined by~\eqref{def-qtil} can be written as
\begin{align}\label{eq:qtilde}
\tilde{k}^{\,\res}=b^<\sum_{r=0}^{\infty}\sum_{\rho=0}^{r}c(r,\rho)\:
G(r,\rho)b^>
\end{align}
with the coefficients
\begin{align}
	c(r,\rho)=\pi^{2r}\frac{\Gamma(r-\rho+\frac{1}{2})}{\Gamma(-\rho+\frac{1}{2})r!}
\end{align}
and the operator products
\begin{align}	\label{Gdef}
G(r,\rho)=\sum_{\substack{Q\in\mathcal{P}(2r+1)\\\#Q=2\rho}}(-1)^{\sigma(r,\rho,Q)}\:F(Q,1) \:b\: F(Q,2) \:b \cdots b \:F(Q,2r+1)\:.
\end{align}
\end{Lemma}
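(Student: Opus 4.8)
The plan is to expand the bracket in \eqref{def-qtil} into the primitive products $F(Q,1)\,b\cdots b\,F(Q,2r+1)$ and to identify the accumulated coefficient of each as $c(r,\rho)(-1)^{\sigma(r,\rho,Q)}$; throughout I drop the outer factors $\breve{b}^<,\breve{b}^>$ as on page~\pageref{deltaweg}. The first move is to trade powers of $X$ for odd powers of $\tilde{k}$. Since $X=\tilde{k}^2-p$ with $p$ acting as the identity and $[\tilde{k},X]=0$ (both facts established in the idempotence proof), one has $\tilde{k}X^n=\sum_{\ell=0}^n\binom{n}{\ell}(-1)^{n-\ell}\tilde{k}^{\,2\ell+1}$, hence $\tilde{k}^{\,\res}=\sum_{n\geq0}e_n\sum_{\ell=0}^n\binom{n}{\ell}(-1)^{n-\ell}\tilde{k}^{\,2\ell+1}$. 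The point of this rewriting is that the coefficient of any fixed primitive word will be controlled by the $e_n$-weighted finite differences in $\ell$ of the multiplicity of that word in $\tilde{k}^{\,2\ell+1}$.

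Next I insert the product form of $\tilde{k}^{\,2\ell+1}$. With $\Lambda:=\sum_{\beta\geq0}(-\pi^2)^\beta k(bk)^{2\beta}$ and $(p+A)^2=p+\pi^2pbpbp$ from \eqref{last-simplerule}, iterating \eqref{ktil-rewritten} gives $\tilde{k}^{\,2\ell+1}=(p+A)\,\Lambda\,(p+A)^2\Lambda\cdots(p+A)^2\Lambda\,(p+A)$ with $2\ell+1$ factors $\Lambda$ and $2\ell$ internal factors $(p+A)^2$. The two outer factors $(p+A)$ are absorbed into $b^<,b^>$ by \eqref{2nd-simplerule} and \eqref{3rd-simplerule}, and each internal $(p+A)^2$ is split into its two summands $p$ and $\pi^2pbpbp$. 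What remains is a concatenated string in $p,k,b$ with scalar weight $(-\pi^2)^{\sum_i\beta_i}(\pi^2)^{v}$, where $v$ is the number of internal factors for which $\pi^2pbpbp$ was chosen. I then reduce such a string to a word $F_1b\cdots bF_{2r+1}$ using \eqref{eq:rule-pp} and \eqref{eq:rule-pk}: every maximal $b$-free block collapses to $k$ or $p$ according as it contains an odd or even number of $k$'s. Since $b$'s occur only inside the blocks $k(bk)^{2\beta_i}$ and inside the chosen factors $\pi^2pbpbp$, their total number is $2\sum_i\beta_i+2v$, so the word has length $2r+1$ with $r=\sum_i\beta_i+v$, the overall power of $\pi^2$ is $\pi^{2r}$ regardless of the configuration, and grading by the number of $b$'s makes the coefficient of each word a finite sum (only $n\leq r$ contribute).

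The combinatorial core, which I expect to be the main obstacle, is to compute for a fixed word $w$ (fixed $r,\rho,Q$) the signed multiplicity $m_\ell(w)$ with which $w$ occurs in $\tilde{k}^{\,2\ell+1}$, carrying the sign $(-1)^{\sum_i\beta_i}$ and the normalizing factor $\pi^{-2r}$. The difficulty is that a summand $p$ selected at an internal factor flanked by trivial blocks ($\beta=0$) merges its two neighbouring $k$'s through $kpk=p$; such $p$'s therefore cascade, so the map from configurations to words is highly non-injective and $m_\ell(w)$ must be obtained by counting these cascades. I would organise this count by the runs of consecutive $p$'s in $w$, aiming to exhibit $m_\ell(w)$ as a polynomial in $\ell$ whose Newton coefficients encode the run lengths.

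Once $m_\ell(w)$ is known, the coefficient of $w$ in $\tilde{k}^{\,\res}$ equals $\pi^{2r}\sum_{n\geq0}e_n\sum_{\ell=0}^n\binom{n}{\ell}(-1)^{n-\ell}m_\ell(w)$, that is, the $e_n$-weighted sum of the $n$-th forward differences of $\ell\mapsto m_\ell(w)$ at $\ell=0$. Since the $n$-th difference of a polynomial of degree $d$ vanishes for $n>d$, the sum is finite, and with $e_n=\binom{-1/2}{n}$ from \eqref{endef} it should evaluate, up to sign, to $\binom{r-\rho-1/2}{r}=\pi^{-2r}c(r,\rho)$, the half-integer parameter reflecting the $(\,\cdot\,)^{-1/2}$ in the $e_n$. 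For the sign I would use $(-1)^{\sigma(r,\rho,Q)}=(-1)^r\prod_{x\in Q}(-1)^x$, which follows from $\sum_{x=1}^{2r+1}x=(r+1)(2r+1)$ and attributes one staggered factor $(-1)^x$ to each surviving $p$ at position $x$; tracking these signs through the weights $(-\pi^2)^{\beta_i}$ and the prefactors $e_n\binom{n}{\ell}(-1)^{n-\ell}$ and matching them against the cascade bookkeeping is the delicate part. Assembling magnitude and sign yields the coefficient $c(r,\rho)(-1)^{\sigma(r,\rho,Q)}$ of each $w$, which is precisely \eqref{eq:qtilde}.
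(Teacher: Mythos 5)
Your overall strategy is the same as the paper's: writing $\tilde{k}X^n=\sum_{\ell}\binom{n}{\ell}(-1)^{n-\ell}\tilde{k}^{\,2\ell+1}$ (the paper uses $U=p+X=\tilde k^2$ and $U^\ell$), expanding $\tilde{k}^{\,2\ell+1}$ as a product of blocks $k(bk)^{2\beta}$ and $(p+A)^2=p+\pi^2pbpbp$, collapsing to primitive words via the rules $p^2=k^2=p$, $pk=kp=k$, and then evaluating the resulting $e_n$-weighted finite-difference sum to obtain $\Gamma(r-\rho+\tfrac12)/\bigl(\Gamma(\tfrac12-\rho)\,r!\bigr)$; your identity $(-1)^{\sigma(r,\rho,Q)}=(-1)^r\prod_{x\in Q}(-1)^x$ and the evaluation $\sum_n e_n(\Delta^n m)(0)=\binom{r-\rho-1/2}{r}$ are both consistent with the paper's computation of $f_{l,r}$ and the $l$-sum.

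However, there is a genuine gap at exactly the point you flag as ``the combinatorial core'': you never compute the signed multiplicity $m_\ell(w)$. The paper devotes a separate induction (Lemma~\ref{thm:combi-lemma}, proved by induction on $l$ with a case analysis over where a word can be factored into a piece of $(A+B)^2$ and a piece of the lower-order product, plus a Vandermonde-type summation) to establish that each word of order $b^{2r}$ containing $2\rho$ factors $p$ occurs in the product exactly $\binom{r+l-\rho}{l-\rho}$ times, with sign $(-1)^{\sum_{x\in J}x}$ determined by the positions of the factors $k$. This is the substantive content of the whole lemma, and your proposal replaces it with an intention (``I would organise this count by the runs of consecutive $p$'s, aiming to exhibit $m_\ell(w)$ as a polynomial in $\ell$ whose Newton coefficients encode the run lengths''). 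That guess is moreover pointing in a doubtful direction: the actual multiplicity depends only on $r$ and $\rho$ and \emph{not} on the arrangement of the $p$'s within the word, and this uniformity over all $Q$ with $\#Q=2\rho$ is precisely what permits the answer to take the form $c(r,\rho)\,G(r,\rho)$ with a single coefficient multiplying the unweighted sum over $Q$. Without proving this position-independence (and the accompanying sign bookkeeping through the cascading identifications $kpk=p$, etc.), the claimed formula is not established.
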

\Proof
We introduce the abbreviation
\begin{align*}	U &:= p + X \\
&\overset{\eqref{X-1stversion}}{=} \sum_{\beta_1,\beta_2=0}^{\infty}(-i\pi)^{2(\beta_1+\beta_2)}(p+A)k(bk)^{2\beta_1}(p+A)^2k(bk)^{2\beta_2}(p+A)\:.
\end{align*}
Since $p$ acts as the identity on $U$, we can calculate the $n$-th power of $X$ to be
\begin{align*}
	X^n=(-p+U)^n=\sum_{l=0}^n {n\choose l} (-p)^{n-l}U^l=\sum_{l=0}^n {n\choose l} (-1)^{n-l}\:U^l \:,
\end{align*}
where the l-th power of $U$ is given by $U^0=p$ and
\[ U^l = (p+A) \left(\sum \nolimits_{\beta}(-i\pi)^{2\beta}k(bk)^{2\beta}(p+A)^2\right)^{2l-1}\sum
 \nolimits_{\alpha}(-i\pi)^{2\alpha}k(bk)^{2\alpha}(p+A) \]
if~$l >0$. Moreover, from \eqref{ktil-rewritten} we have
\begin{align*}
	\tilde{k}=\sum_{\beta=0}^{\infty}(-i\pi)^{2\beta}(p+A)k(bk)^{2\beta}(p+A) \:.
\end{align*} 
We thus obtain
\begin{align*}
	\tilde{k}X^n&=\sum_{l=0}^n {n\choose l} (-1)^{n-l}\:\tilde{k}\,U^l\\
	&=\sum_{l=0}^n {n\choose l} (-1)^{n-l}
	(p+A) \left(\sum \nolimits_{\beta}(-i\pi)^{2\beta}k(bk)^{2\beta}(p+A)^2\right)^{2l}\\
	&\hspace{5cm} \times\sum\nolimits_{\alpha}(-i\pi)^{2\alpha}k(bk)^{2\alpha}(p+A) \:.
\end{align*}
Inserting into \eqref{def-qtil} and using the calculation rules \eqref{2nd-simplerule}-\eqref{last-simplerule}, we obtain the following formula for~$\tilde{k}^{\,\res}$:
\begin{align}
	\nonumber \tilde{k}^{\,\res} =& \sum_{n=0}^{\infty}e_n\sum_{l=0}^n{n\choose l}(-1)^{n-l} \\
&\times b^<\:\left(\sum_{\beta=0}^\infty(-i\pi)^{2\beta}k(bk)^{2\beta}(p+\pi^2pbpbp)\right)^{2l}\sum_{\alpha=0}^\infty(-i\pi)^{2\alpha}k(bk)^{2\alpha}\:b^>\:.
\label{qtil-explicit}
\end{align}
Remarking that in \eqref{qtil-explicit} one gets one factor $\pi$ for each order of $b$,
for notational convenience we shall
omit the factors $\pi$ during the proof and reinsert them at the very end. Note moreover that for a given $n$, the $l$-sum in \eqref{qtil-explicit} equals $\tilde{k}X^n$ and is therefore of order~$\O(b^{2n})$ (as $X$ is of order $\O(b^{2})$, cf. \eqref{X-2ndversion}).
Thus for every $r\in\mathbb{N}$, the truncated sum 
\begin{align}\nonumber
	&\sum_{n=0}^{r}e_n\sum_{l=0}^n{n\choose l}(-1)^{n-l}\\		&\;\;\;\times b^<\:\left(\sum_{\beta}(-1)^{\beta}k(bk)^{2\beta}(p+pbpbp)\right)^{2l}\sum_{\alpha}(-1)^{\alpha}k(bk)^{2\alpha}\:b^>
	\label{trunc-sum}
\end{align}
coincides with $\tilde{k}^{\,\res}$ up to terms of order $\O(b^{2r+2})$. For this reason, we can calculate $\tilde{k}^{\,\res}$ from \eqref{trunc-sum} to every order in $b$. Interchanging the $n$- and $l$-sums
(which are both finite), we can carry out the $n$-sum, which involves only coefficients but no operator products, to obtain
\begin{align*}\nonumber
\tilde{k}^{\,\res} &=\sum_{l=0}^r \:\left[\sum_{n=l}^{r} e_n
	{n\choose l}(-1)^{n-l}\right] \:b^<\ldots b^> + \O(b^{2r+2}) \\
	&=\sum_{l=0}^r\, f_{l,r}\:b^<\:\left(\sum_{\beta}(-1)^{\beta}k(bk)^{2\beta}(p+pbpbp)\right)^{2l}\sum_{\alpha}(-1)^{\alpha}k(bk)^{2\alpha}\: b^>  + \O(b^{2r+2})
\end{align*}
with the coefficients
\[ f_{l,r}=\sum_{n=l}^{r} e_n
	{n\choose l}(-1)^{n-l}=\frac{(-1)^l}{l!}\frac{(2r+1)!!}{2^r(2l+1)(r-l)!}\:. \]
The combinatorics of the above operator products is analyzed in Lemma \ref{thm:combi-lemma}
below. Using the results of this Lemma, we obtain that for all $r\in\mathbb{N}$, the contributions to $\tilde{k}^{\,\res}$ involving $2r$ factors of $b$ are given by
\begin{align}\nonumber	b^<\:&\sum_{l=0}^r\sum_{\rho=0}^l\, f_{l,r}{{r+l-\rho}\choose{l-\rho}}\sum_{\substack{Q\in\mathcal{P}(2r+1)\\\#Q=2\rho}}(-1)^{1+\sum_{x\leq2r+1,x\notin Q}\,x} \\
& \hspace*{6cm} \times F(Q,1)bF(Q,2)b\cdots bF(Q,2r+1)\:b^> \nonumber \\
&=b^<\:\sum_{l=0}^r\sum_{\rho=0}^l \,f_{l,r}{{r+l-\rho}\choose{l-\rho}}G(r,\rho)\:b^>
\end{align}
(where we used the abbreviation~\eqref{Gdef}). Summing over all orders in~$b$, we
conclude that
\begin{align}	\tilde{k}^{\,\res}=b^<\:\sum_{r=0}^{\infty}\sum_{l=0}^r\sum_{\rho=0}^l\, f_{l,r}{{r+l-\rho}\choose{l-\rho}}G(r,\rho)\:b^> \:.
\end{align}
Since the operator products $G$ do not involve the index $l$, we may interchange the sums over $l$ and $\rho$ and perform the $l$-sum,
$$\sum_{l=\rho}^r\,f_{l,r}{{r+l-\rho}\choose{l-\rho}}=\frac{\Gamma(r-\rho+\frac{1}{2})}{\Gamma(\frac{1}{2}-\rho)\,r!} \:,$$
where we used the formula~\cite[eq.~(6.1.12)]{AS} to write the double factorial in terms
of the gamma function with half-integer argument.
Inserting back the factors $\pi$, we obtain the result.
\end{proof}

We finally prove the following combinatorial lemma.
\begin{Lemma}\label{thm:combi-lemma}
For any~$\rho, l, r \in \N_0$ with $\rho \leq r$, the following statements hold.
\begin{itemize}
\item[(i)] In the expression
\begin{align}	\Big[\sum_{\beta}(-1)^{\beta}k(bk)^{2\beta}(p+ pbpbp)\Big]^{2l}\sum_{\alpha}(-1)^{\alpha}k(bk)^{2\alpha} \:,
\label{combi-start}
\end{align}
only terms with an even number of factors $p$ appear.\\ 
\item[(ii)] In the series~\eqref{combi-start}, each term of the order~$b^{2r}$ of the form
\begin{align}
	C_1bC_2b...C_{2r}bC_{2r+1}\:,\qquad C_i\in\{p,k\}
	\label{op-prod}
\end{align}
appears exactly ${{r+l-\rho}\choose{l-\rho}}$ times, where~$2\rho$ denotes the number of factors~$p$
in the product. Here we adopt the convention ${n\choose n'}=0$ if $n'<0$.\\
Moreover, each term of the form~\eqref{op-prod} appearing in~\eqref{combi-start} has the sign
 \[  (-1)^{\sum_{x \in J} x} \qquad \text{where} \qquad
 J= \left\{i\in\{1,...,2r+1\}\text{ with } C_i=k \right\} . \]
\end{itemize}
\end{Lemma}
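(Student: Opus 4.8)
The plan is to reduce every monomial occurring in the expansion of~\eqref{combi-start} to a single canonical word $C_1bC_2b\cdots bC_{2r+1}$ with $C_i\in\{p,k\}$, and then to arrange the bookkeeping so that multiplicity and sign can be read off. The starting point is the elementary observation, immediate from~\eqref{eq:rule-pp} and~\eqref{eq:rule-pk}, that right-multiplying a word ending in a letter $C$ by a factor beginning with $k$ (with no intervening $b$) \emph{toggles} that last letter, $k\mapsto p$ and $p\mapsto k$, since $kk=p$ and $pk=k$. Reading~\eqref{combi-start} from left to right, I expand into monomials indexed by a mode $\mu_i\in\{1,2\}$ (the summand $p$ or $pbpbp$) and an exponent $\beta_i\ge 0$ for each of the $2l$ block factors, together with an exponent $\alpha\ge 0$ for the tail. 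After the internal reduction $kp=k$ inside each Term-$2$ block, each monomial is a concatenation of $2l+1$ \emph{segments}, every segment being a run of $k$'s (odd length in the first segment, even length in all others) optionally terminated by $pp$ in the Term-$2$ case; the only remaining reductions sit at the $2l$ segment boundaries. By the toggling observation these amount to flipping, for $j=1,\dots,2l$, the letter at position $D_j:=d_1+\cdots+d_j$, where $d_i$ is the number of letters deposited by the $i$-th segment. This yields the normal form, which carries $2r$ factors $b$ with $r=\sum_i\beta_i+\alpha+t$ ($t$ the number of Term-$2$ blocks) and the sign $(-1)^{\sum_i\beta_i+\alpha}=(-1)^{r-t}$.

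For part~(i) I count the letters $k$ modulo $2$. In the deposited word only the first segment contributes an odd number of $k$'s ($2\beta_1+1$), while every other segment and the tail contribute an even number, so the deposited word has an odd number of $k$'s. Each of the $2l$ boundary flips changes the number of $k$'s by $\pm1$, and $2l$ is even, so the parity is preserved. Hence the normal form has an odd number of $k$'s and therefore an even number $2\rho$ of letters $p$, which is assertion~(i).

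For the sign in part~(ii) I must show $(-1)^{r-t}$ depends only on the normal form and equals $(-1)^{1+\sum_{x\in J}x}=(-1)^{\sigma(r,\rho,Q)}$ with $J=\{i:C_i=k\}$. Writing $Q=\{1,\dots,2r+1\}\setminus J$ for the positions of the $p$'s and using $\sum_{i=1}^{2r+1}i\equiv r+1\pmod 2$, this is equivalent to $\sum_{x\in Q}x\equiv t\pmod 2$, which I prove directly: the two structural $p$'s of each Term-$2$ block sit at consecutive deposited positions $P-1,P$, whose sum is odd, so the $t$ blocks contribute $t$ to $\sum_{x\in Q}x$ mod $2$; the boundary flips shift $\sum_{x\in Q}x$ by $\sum_{j=1}^{2l}D_j\pmod2$, and since only $d_1$ is odd (carrying the even coefficient $2l$ in $\sum_{j}D_j=\sum_i d_i(2l-i+1)$) one gets $\sum_{j=1}^{2l}D_j\equiv 0\pmod2$. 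Thus $\sum_{x\in Q}x\equiv t$; in particular $t\bmod 2$ is an invariant of the normal form, so the sign is well defined and equals $(-1)^{\sigma(r,\rho,Q)}$.

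The main obstacle is the multiplicity count in part~(ii): for a fixed target word $W$ with $2\rho$ letters $p$ I must show it has exactly $\binom{r+l-\rho}{l-\rho}$ preimages. The strategy is to classify preimages by $t$ and, for each $t$, to \emph{undo the flips}: the structural $pp$-pairs and the flip positions are pinned down by $W$ only up to the insertion of empty and trivial ($\beta_i=0$, Term-$1$) segments and up to how the surplus $\sum_i\beta_i+\alpha=r-t$ of $k$-pairs is distributed among the admissible segments, a composition/stars-and-bars count; summing over the allowed $t$ (all of the fixed parity found above) and collapsing via a Vandermonde-type identity — expressible, as elsewhere in the paper, through the $\Gamma$-function relation \cite[eq.~(6.1.12)]{AS} — should produce the single binomial $\binom{r+l-\rho}{l-\rho}$. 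I expect most of the work to lie here, precisely because the empty and trivial segments make several choices of boundaries collapse to the same normal form, and because moving a $k$-pair into a \emph{middle} segment shifts the flip positions $D_j$ and hence alters $W$; identifying which placements actually preserve $W$, and checking that the parity constraints on segment lengths match the stars-and-bars count exactly, is the delicate combinatorial core.
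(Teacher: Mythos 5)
Your route is genuinely different from the paper's: you try to put every monomial of~\eqref{combi-start} into a normal form directly and then count preimages of a fixed word, whereas the paper argues by induction on $l$, peeling off one squared block factor $[\,\cdots]^2=AA+AB+BA+BB$ from the left, classifying the admissible factorizations of a given word at odd positions, and closing the induction with a Pascal-type binomial recursion. Within your framework, part~(i) and the sign computation are sound: the toggling observation ($Ck$ flips $C$), the parity count of the letters $k$, and the verification $\sum_{x\in Q}x\equiv t\pmod 2$ via $\sum_j D_j\equiv 0$ are all correct, and they also establish that the sign is well defined (independent of the preimage). Note that what you prove is the sign $(-1)^{1+\sum_{x\in J}x}$; the exponent in the statement of the lemma is missing the ``$1+$'' (as the case $r=0$, where $k$ appears with sign $+1$, already shows), but your version is the one actually used in Lemma~\ref{lemma42} and Theorem~\ref{maintheorem}, so this is the right thing to prove.

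The genuine gap is the multiplicity count, which is the main quantitative content of part~(ii) and which you explicitly leave as a plan (``should produce the single binomial'', ``I expect most of the work to lie here''). The inverse problem you would have to solve is substantially harder than the forward normal-form reduction: a letter $p$ in the target word $W$ can arise either as an unflipped structural $p$ from a Term-$2$ block or as a deposited $k$ flipped at a boundary, a letter $k$ can be a deposited $k$ or a flipped structural $p$; empty segments make distinct boundary data collapse (two flips at the same position cancel), and redistributing a $k$-pair into a middle segment shifts all subsequent flip positions $D_j$ and hence generically changes $W$. Most importantly, the asserted count $\binom{r+l-\rho}{l-\rho}$ is independent of \emph{where} the $2\rho$ factors $p$ sit in $W$, which is far from obvious in your parametrization --- in the paper's induction this independence emerges only because a position-dependent number $\lambda$ of admissible factorizations telescopes to a position-independent binomial. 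Until you exhibit the classification of preimages of a fixed $W$ and carry out the sum over $t$ explicitly, the identity $\binom{r+l-\rho}{l-\rho}$ is unproven and the lemma is not established.
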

\Proof
(i) We proceed by induction in $l$. For $l=0$, the number of factors $p$ in \eqref{combi-start} is
obviously zero. Thus assume that the induction hypothesis holds for a given l. Then
\begin{align*}	&\Big[\sum_{\beta}(-1)^{\beta}k(bk)^{2\beta}(p+ pbpbp)\Big]^{2(l+1)}\sum_{\alpha}(-1)^{\alpha}k(bk)^{2\alpha}\\
&=\Big[\sum_{\beta}(-1)^{\beta}k(bk)^{2\beta}(p+pbpbp)\Big]^{2}\times\\
&\;\;\;\;\times\Big[\sum_{\beta}(-1)^{\beta}k(bk)^{2\beta}(p+ pbpbp)\Big]^{2l}\sum_{\alpha}(-1)^{\alpha}k(bk)^{2\alpha}\\
&=\Big[\sum_{\beta}(-1)^{\beta}k(bk)^{2\beta}(p+ pbpbp)\Big]^{2}\times(\mbox{terms with $\#p$ even})\\
&=\sum_{\beta_1,\beta_2}(-1)^{\beta_1+\beta_2}\Big[(kb)^{2\beta_1}p(bk)^{2\beta_2}+(kb)^{2\beta_1}p(bk)^{2\beta_2}bpbp+(kb)^{2\beta_1}kbpbk(bk)^{2\beta_2}+\\
&\;\;\;\;+(kb)^{2\beta_1}kbpbk(bk)^{2\beta_2}bpbp\Big]\times(\mbox{terms with $\#p$ even}).
\end{align*}
Now it can be easily checked that when a term with an even number of factors $p$ is multiplied from the left by one of the terms in the last sum, then the number of factors~$p$ either does not change or is raised by two. So in any case one obtains a term with an even number of factors $p$.\\[0.5em]
(ii) We again proceed by induction in $l$. For $l=0$, there are obviously no contributions
for~$\rho>0$. In the case~$\rho=0$, the term of order $b^{2r}$ in \eqref{combi-start} is $(-1)^rk(bk)^{2r}$. It appears exactly 
$$1={r\choose0}={{r+l-\rho}\choose{l-\rho}}$$ 
times in \eqref{combi-start} and has the sign
$$(-1)^r=(-1)^{3r+1}=(-1)^{4r^2+3r+1}=(-1)^{\sum_{i=1}^{2r+1}i}=(-1)^{\sum_{i\in J}i}.$$ 
This proves the claim in the case~$l=0$.

Now assume that the induction hypothesis holds for a given~$l$. Consider an operator product
of the form~\eqref{op-prod} which involves~$2\rho$ factors of~$p$. Our task is to
determine how often and with which sign this operator product appears in the series
\begin{align}	\Big[\sum_{\beta}(-1)^{\beta}k(bk)^{2\beta}(p+pbpbp)\Big]^{2(l+1)}\sum_{\alpha}(-1)^{\alpha}k(bk)^{2\alpha} \:.
\label{term-l+1}
\end{align}
Let us first count how often it appears.
Since the operator product \eqref{op-prod} is of the order $b^{2r}$, we only need to calculate up to
order $\O(b^{2r+2})$,
\begin{align}\nonumber	\eqref{term-l+1}&=\Big[\sum_{\beta=0}^r(-1)^{\beta}k(bk)^{2\beta}(p+pbpbp)\Big]^{2(l+1)}\sum_{\alpha=0}^r(-1)^{\alpha}k(bk)^{2\alpha}+\O(b^{2r+2})\\\nonumber
=&\Big[\sum_{\beta=0}^r\ldots \Big]^2\times\Big[\sum_{\beta=0}^r\ldots \Big]^{2l}\sum_{\alpha=0}^r\ldots +\O(b^{2r+2})\\\nonumber
=&\Big[\underbrace{\sum_{\beta=0}^r(-1)^{\beta}(kb)^{2\beta}k}_{=:A}+\underbrace{\sum_{\beta=0}^{r-1}(-1)^{\beta}(kb)^{2\beta}kbpbp}_{=:B}\Big]^2\times
\underbrace{ \Big[\sum_{\beta=0}^r\ldots \Big]^{2l}\sum_{\alpha=0}^r \ldots}_{=:L}
 +\O(b^{2r+2}) \nonumber \\ 
=&\Big[AA+AB+BA+BB \Big]\:L +\O(b^{2r+2})\:, \label{2term-l+1}
\end{align}
where
\begin{align*}
AA &= \sum_{\beta_1=0}^r\sum_{\beta_2=0}^r(-1)^{\beta_1+\beta_2}(kb)^{2\beta_1}p(bk)^{2\beta_2}\:, \\
AB &=\sum_{\beta_1=0}^r\sum_{\beta_2=0}^{r-1}(-1)^{\beta_1+\beta_2}(kb)^{2\beta_1}p(bk)^{2\beta_2}bpbp\:, \\
BA &= \sum_{\beta_1=0}^{r-1}\sum_{\beta_2=0}^r(-1)^{\beta_1+\beta_2}(kb)^{2\beta_1}kbpbk(bk)^{2\beta_2}\:, \\
BB &= \sum_{\beta_1=0}^{r-1}\sum_{\beta_2=0}^{r-1}(-1)^{\beta_1+\beta_2}(kb)^{2\beta_1}kbpbk(bk)^{2\beta_2}bpbp\:, \\
L &= \Big[ \sum_{\beta=0}^r (-1)^{\beta}k(bk)^{2\beta}(p+pbpbp) \Big]^{2l}\sum_{\alpha=0}^r
(-1)^{\alpha}k(bk)^{2\alpha} \:.
\end{align*}

We now treat the three cases~$\rho=l+1\leq r$, $\rho=0$, and~$0<\rho<l+1$ separately.
\begin{itemize}
\item[(a)] $\rho=l+1\leq r$: \\
By the induction hypothesis, the operator product~\eqref{op-prod} appears in~$L$ exactly ${{r+l-\rho}\choose{l-\rho}}={{r-1}\choose{-1}}=0$ times. Thus the term~\eqref{op-prod}
arises in~\eqref{2term-l+1} if a
summand of $L$ which involves~$2\rho-2$ factors of~$p$ is multiplied from the left by a term of the series~$[\sum_{\beta=0}^r\ldots ]^2$ in such a way that two additional factors $p$ are created. This can happen in the following ways: \\
\begin{tabular}{ll}
(1) & ($AA$-term ending by $k$)$\times$(term in~$L$ beginning by $k$)\\
(2) & ($AB$-term)$\times$(term in~$L$ beginning arbitrarily)\\
(3) & ($BA$-term)$\times$(term in~$L$ beginning by $k$)\\
(4) & ($BB$-term)$\times$(term in~$L$ beginning arbitrarily).\\
\end{tabular}

\noindent Each term of the form~\eqref{op-prod} arises as such a product precisely once, because
\begin{tabular}{l}
in case (1) the first $p$ is at an \textit{odd} and the second $p$ is at an \textit{odd} position,\\
in case (2) the first $p$ is at an \textit{odd} and the second $p$ is at an \textit{even} position,\\
in case (3) the first $p$ is at an \textit{even} and the second $p$ is at an \textit{odd} position,\\
in case (4) the first $p$ is at an \textit{even} and the second $p$ is at an \textit{even} position.\\
\end{tabular}

\noindent Since by the induction hypothesis, a term involving~$2\rho-2$ factors of~$p$ appears in~$L$ exactly 
${{r}\choose{0}}=1$ times, the term \eqref{op-prod} appears in \eqref{2term-l+1} exactly
 $$1={{r}\choose{0}}={{r+(l+1)-\rho}\choose{(l+1)-\rho}}\;\;\;\mbox{times}\:.$$ 

\item[(b)] $\rho=0$:\\
In this case, the operator product~\eqref{op-prod} is of the form~$k(bk)^{2r}$.
Only an $AA$-term in~\eqref{2term-l+1} with $\beta_2=0$
does not create an additional factor $p$ when it is multiplied from the right by a
term of~$L$ involving no factors of~$p$.
By induction hypothesis, for every~$r'\in\mathbb{N}$ the factor $k(bk)^{2r'}$ appears in~$L$
exactly~${{r'+l}\choose{l}}$ times. Thus the factor $k(bk)^{2r}$ appears in \eqref{2term-l+1} exactly
$$\sum_{r'=0}^r{{r'+l}\choose{l}}={{r+(l+1)-0}\choose{(l+1)-0}}={{r+(l+1)-\rho}\choose{(l+1)-\rho}}\;\;\;\mbox{times}\:. $$

\item[(c)] $0<\rho<l+1$: \\
As in case~(a), we want to factor a given operator product~\eqref{op-prod}
at a given odd position into a product of a term in $[\sum_{\beta=0}^r\ldots ]^2$ and a term in~$L$.
Exactly as in case~(a) one verifies that if this factorization is possible, it is unique.
Furthermore, this factorization is possible at the first position, at the third position, \ldots,
until the second factor~$p$ appears (more precisely, until we are at the position of the second factor~$p$,
or else the second factor $p$ is at an even position and we are at the subsequent odd position).
We denote the number of such possible factoring positions by~$\lambda+1$.
For $1\leq r'\leq\lambda$, factoring at the $r'$-th position yields a term in~$L$ of order $b^{2(r+1-r')}$
involving $2\rho$ factors of~$p$. According to the induction hypothesis, this term in~$L$ appears with
the combinatorial factor~${{(r+1-r')+l-\rho}\choose{l-\rho}}$.
Factoring at the last position~$r'=\lambda+1$ yields a term in~$L$ of order $b^{2(r-\lambda)}$
involving $2\rho-2$ factors of $p$. According to the induction hypothesis, this term in~$L$ appears with
the combinatorial factor~${{(r-\lambda)+l-(\rho-1)}\choose{l-(\rho-1)}}$.
Therefore, the operator product~\eqref{op-prod} appears in the series~\eqref{2term-l+1} exactly
\begin{equation*}
\qquad \sum_{r'=1}^{\lambda}{{(r+1-r')+l-\rho}\choose{l-\rho}}+{{(r-\lambda)+l-(\rho-1)}\choose{l-(\rho-1)}}
= {{r+(l+1)-\rho}\choose{(l+1)-\rho}}
\end{equation*}
times.
\end{itemize}
We have thus proved the first part of Lemma \ref{thm:combi-lemma}~(ii).

It remains to prove the induction step for the claim concerning the signs of the terms \eqref{op-prod} in \eqref{2term-l+1}. By induction hypothesis, an operator product~$\mathcal{H}$ 
of the form \eqref{op-prod} appears in~$L$ with the sign
$$(-1)^{1+\Sigma\;i}\:, $$
where the sum $\Sigma$ goes over all positions in $\mathcal{H}$ where $C_i=k$. This is of course equal to
$$(-1)^{1+\#(\text{factors~$k$ at odd positions in }\mathcal{H})} \:. $$
Likewise, each operator product $\mathcal{G}$ appearing in $(A+B)^2$ has the sign
$$(-1)^{\#(\text{factors~$k$ at odd positions in }\mathcal{G})} \:. $$
When multiplying~$\mathcal{G}$ by~$\mathcal{H}$, a position in the resulting operator
product~$\mathcal{G}\mathcal{H}$ is odd iff the corresponding position in the old
product $\mathcal{G}$ or $\mathcal{H}$ was odd. We now distinguish two cases.
\begin{itemize}
\item[(a)] The term $\mathcal{G}$ ends with a factor $k$ and the term $\mathcal{H}$ begins with a factor $k$: \\
As these two factors $k$ multiply to a factor $p$ by our multiplication rules, the resulting term $\mathcal{G}\mathcal{H}$ has two factors $k$ less at odd positions than the initial factors 
$\mathcal{G}$ and $\mathcal{H}$ together. Thus the sign of $\mathcal{G}\mathcal{H}$ is
\begin{align*}
	&(-1)^{\#(\text{factors~$k$ at odd positions in $\mathcal{G}$})}\:
	(-1)^{1+\#(\text{factors~$k$ at odd positions in $\mathcal{H}$})}\\
	&=(-1)^{\#(\text{factors~$k$ at odd positions in $\mathcal{G}$})+1+\#(\text{factors~$k$ at odd positions in $\mathcal{H}$})-2}\\
	&=(-1)^{1+\#(\text{factors~$k$ at odd positions in $\mathcal{G} \mathcal{H}$})}
\end{align*}
\item[(b)] In all other cases, one has
\begin{align*}
	\#(k\mbox{ at odd positions in }\mathcal{G}\mathcal{H})=&\:\#(k\mbox{ at odd positions in }\mathcal{G})\\&+\#(k\mbox{ at odd positions in }\mathcal{H}),
\end{align*}
and therefore $\mathcal{G}\mathcal{H}$ has the sign
\begin{align*}
	&(-1)^{\#(\text{factors~$k$ at odd positions in $\mathcal{G}$})}(-1)^{1+\#(\text{factors~$k$ at odd positions in $\mathcal{H}$})}\\
	&=(-1)^{1+\#(\text{factors~$k$ at odd positions in $\mathcal{G} \mathcal{H}$})} \:.
\end{align*}
\end{itemize}
This concludes the proof.
\end{proof}

\section{The Unitary Perturbation Flow} \label{sec5}
In this section we prove that there exists a operator $U$, uniquely defined by a perturbation series, which transforms the vacuum operators $p_m$, $k_m$ into the interacting operators
$\tilde{p}^{\,\res}_m$, $\tilde{k}^{\,\res}_m$, i.e.
\beq \label{result-unitary}
\tilde{p}^{\,\res}_m=Up_mU^{-1}\;,\qquad \tilde{k}^{\,\res}_m=Uk_mU^{-1}\:.
\eeq
The operator~$U$ will be unitary with respect to the indefinite inner product on the wave functions
\beq \label{iprod}
\int \overline{\Psi(x)} \Phi(x) \: d^4x \:,
\eeq
where~$\overline{\Psi} = \Psi^\dagger \gamma^0$ is the usual adjoint spinor.
For the proof, we shall consider the one-parameter family of rescaled operators
\beq \label{not-ptau}
\tilde{p}_m^\tau:=(\tilde{p}_m^{\,\res})^\tau \qquad \text{and}
\qquad \tilde{k}_m^\tau:=(\tilde{k}_m^{\,\res})^\tau
\eeq
corresponding to the family of external fields $(\tau\mathscr{B})_{\tau\geq0}$, i.e.
\beq \label{defeq-ptau}
(i\slashed{\partial}_x + \tau{\mathscr{B}}(x) - m)\:\tilde{p}_m^\tau(x,y)=0=(i\slashed{\partial}_x + \tau{\mathscr{B}}(x) - m)\:\tilde{k}_m^\tau(x,y) \:.
\eeq
The idea is to consider the differential equations for
\[ \frac{d}{d\tau} \tilde{p}_m^\tau \qquad \text{and} \qquad
\frac{d}{d\tau} \tilde{k}_m^\tau\:, \]
the so-called {\em{perturbation flow equations}},
and to write them purely in terms of interacting operators. This will reveal that
the perturbation flow equations have the commutator structure
\beq\label{pk-deriv-commutator}
\frac{d}{d\tau}\tilde{p}_m^\tau=\Big[A(\tau),\tilde{p}_m^\tau\Big]\qquad\text{and}\qquad \frac{d}{d\tau}\tilde{k}_m^\tau=\Big[A(\tau),\tilde{k}_m^\tau\Big]
\eeq
with an anti-symmetric operator~$A$, from which~$U$ can be obtained by integration.

In preparation, we define the operator 
\beq\label{def-stilde}
\tilde{s}_{m}:=\frac{1}{2}(\tilde{s}_m^++\tilde{s}_m^-)\:,
\eeq
noting that this implies the relations 
\beq \label{rel-stil-ptil}
\tilde{s}_m^\pm=\tilde{s}_m\pm i\pi \tilde{p}_m\:.
\eeq
Moreover, we can generalize the formulas \eqref{eq:ps-p} and \eqref{eq:ks-k}.
\begin{Lemma}\label{lemma51}
The operators $\tilde{p}^{\,\res}_m,\, \tilde{k}^{\,\res}_m$ and $\tilde{s}_m$ satisfy
the following identities:
\begin{align*}
	\tilde{p}^{\,\res}_m\, \tilde{s}_{m'} &=\tilde{s}_{m'}\, \tilde{p}^{\,\res}_m = \frac{\pv}{m-m'}
\:\tilde{p}^{\,\res}_m \\
	\tilde{k}^{\,\res}_m\, \tilde{s}_{m'} &=\tilde{s}_{m'}\, \tilde{k}^{\,\res}_m = \frac{\pv}{m-m'}\:
\tilde{k}^{\,\res}_m \:.
\end{align*}
\end{Lemma}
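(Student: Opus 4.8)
The plan is to reduce the whole statement to the single identity $\tilde{p}^{\,\res}_m\,\tilde{s}_{m'}=\tilde{s}_{m'}\,\tilde{p}^{\,\res}_m=\pvm\,\tilde{p}^{\,\res}_m$, since the assertion for $\tilde{k}^{\,\res}_m$ then follows at once from the relations \eqref{rem-eq}. Indeed, writing $\tilde{k}^{\,\res}_m=\tilde{k}^{\,\res}_m\,\tilde{p}^{\,\res}_m=\tilde{p}^{\,\res}_m\,\tilde{k}^{\,\res}_m$, we obtain $\tilde{k}^{\,\res}_m\,\tilde{s}_{m'}=\tilde{k}^{\,\res}_m(\tilde{p}^{\,\res}_m\,\tilde{s}_{m'})=\tilde{k}^{\,\res}_m\,\pvm\,\tilde{p}^{\,\res}_m=\pvm\,\tilde{k}^{\,\res}_m$, and likewise $\tilde{s}_{m'}\,\tilde{k}^{\,\res}_m=(\tilde{s}_{m'}\,\tilde{p}^{\,\res}_m)\tilde{k}^{\,\res}_m=\pvm\,\tilde{k}^{\,\res}_m$. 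Thus the entire Lemma rests on the $\tilde{p}^{\,\res}$-case.

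For the $\tilde{p}^{\,\res}$-case I would start from the representations \eqref{pres} and \eqref{def-stilde}, $\tilde{p}^{\,\res}_m=\frac{1}{2\pi i}(\tilde{s}^+_m-\tilde{s}^-_m)$ and $\tilde{s}_{m'}=\frac{1}{2}(\tilde{s}^+_{m'}+\tilde{s}^-_{m'})$, so that $\tilde{p}^{\,\res}_m\,\tilde{s}_{m'}$ is expressed through the four products $\tilde{s}^a_m\,\tilde{s}^b_{m'}$ with $a,b\in\{+,-\}$. The key inputs are the interacting resolvent identities
\[ \tilde{s}^+_m\tilde{s}^+_{m'}=\frac{1}{m-m'}\big(\tilde{s}^+_m-\tilde{s}^+_{m'}\big),\qquad \tilde{s}^-_m\tilde{s}^-_{m'}=\frac{1}{m-m'}\big(\tilde{s}^-_m-\tilde{s}^-_{m'}\big), \]
\[ \tilde{s}^\pm_m\tilde{s}^\mp_{m'}=\frac{1}{(m-m')\pm 2i\varepsilon}\big(\tilde{s}^\pm_m-\tilde{s}^\mp_{m'}\big)\:, \]
which hold because, by \eqref{sv-series}, the $\tilde{s}^\pm_m$ are the Neumann series of the resolvents $(i\slashed{\partial}+\mathscr{B}-(m\pm i\varepsilon))^{-1}$. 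I would establish them order by order in $\mathscr{B}$, the zeroth order following from the elementary momentum-space identity $(\slashed{q}-z)^{-1}(\slashed{q}-z')^{-1}=\frac{1}{z-z'}\big[(\slashed{q}-z)^{-1}-(\slashed{q}-z')^{-1}\big]$ and the induction step from the Dyson relation $\tilde{s}^\pm_m=s^\pm_m-s^\pm_m\mathscr{B}\,\tilde{s}^\pm_m$.

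Inserting the four products and collecting terms, the same-sign contributions $\tilde{s}^+_m\tilde{s}^+_{m'}$ and $\tilde{s}^-_m\tilde{s}^-_{m'}$ --- whose denominators reduce to the \emph{regular} difference quotient $\frac{1}{m-m'}$ --- combine, through $\tilde{s}^+_m-\tilde{s}^-_m=2\pi i\,\tilde{p}^{\,\res}_m$, into $\frac{1}{2(m-m')}(\tilde{p}^{\,\res}_m-\tilde{p}^{\,\res}_{m'})$. For the cross-sign contributions I would apply \eqref{eq:delta-formula} in the form $((m-m')+2i\varepsilon)^{-1}\pm((m-m')-2i\varepsilon)^{-1}$, giving $2\,\pvm$ and $-2\pi i\,\delta(m-m')$ respectively, together with the decomposition \eqref{rel-stil-ptil}, $\tilde{s}^\pm=\tilde{s}\pm i\pi\,\tilde{p}^{\,\res}$. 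The $\delta(m-m')$-part then multiplies the difference $\tilde{s}_m-\tilde{s}_{m'}$ and hence vanishes identically, leaving $\frac{1}{2}\,\pvm\,(\tilde{p}^{\,\res}_m+\tilde{p}^{\,\res}_{m'})$. Since $\tilde{p}^{\,\res}_m-\tilde{p}^{\,\res}_{m'}$ vanishes at $m=m'$, the bare factor $\frac{1}{m-m'}$ in the same-sign part is unambiguous and may be written as $\pvm$; adding the two pieces yields $\tilde{p}^{\,\res}_m\,\tilde{s}_{m'}=\pvm\,\tilde{p}^{\,\res}_m$. The opposite ordering gives the same result, since all the $\tilde{s}^a$ are resolvents of the one operator $i\slashed{\partial}+\mathscr{B}$ and therefore commute.

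The hard part will be the careful bookkeeping of the $i\varepsilon$-prescriptions. The spurious $\delta(m-m')\,\tilde{s}$-terms --- the exact analogue of the extra $\pi^2\delta(m-m')\,p_m$ summand in \eqref{eq:ss-sp} that was omitted in \cite{PFP} --- must cancel precisely, and one must make sure that the regular factor $\frac{1}{m-m'}$ standing in front of the vanishing combination $\tilde{p}^{\,\res}_m-\tilde{p}^{\,\res}_{m'}$ is legitimately identified with the principal-value distribution $\pvm$. The second delicate point is to justify the interacting resolvent identities as genuine identities of tempered distributions rather than purely formal manipulations; here the regularity guaranteed by Lemma~\ref{regular-lemma}, combined with the order-by-order induction, is what makes all the operator products well defined.
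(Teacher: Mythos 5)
Your proposal is correct, but it takes a genuinely different route from the paper. The paper reduces the Lemma via Theorem~\ref{maintheorem} to the single identity $p_m b^>_m\,\tilde{s}_{m'}=\pvm\, p_m b^>_m$ (and its mirror image), then expands $\tilde{s}_{m'}$ explicitly as a double sum over perturbation orders and over subsets $Q$ with an even or odd number of factors $p$, and verifies the identity by a term-by-term cancellation using only the free calculation rules of Lemma~\ref{lemma21}. You instead work at the level of the resolvent representations \eqref{pres} and \eqref{def-stilde} and derive interacting resolvent identities for $\tilde{s}^\pm_m$, which is essentially the boundary-value version ($\mu=m\pm i\varepsilon$) of the identity \eqref{ri} that the paper only introduces later in Lemma~\ref{lemma52}; your computation then parallels, in the interacting setting, the derivation of \eqref{eq:ss-sp}. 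I checked the bookkeeping: the same-sign terms give $\tfrac{1}{2(m-m')}(\tilde{p}^{\,\res}_m-\tilde{p}^{\,\res}_{m'})$, the cross terms give $\tfrac{1}{2}\pvm(\tilde{p}^{\,\res}_m+\tilde{p}^{\,\res}_{m'})$ plus a $\delta(m-m')(\tilde{s}_m-\tilde{s}_{m'})$ term that vanishes, and the sum is $\pvm\,\tilde{p}^{\,\res}_m$ as required. Your route buys conceptual clarity (all $i\varepsilon$-subtleties are isolated in four resolvent identities, whose telescoping proof is word-for-word that of Corollary~\ref{corol22} minus the $\delta$-term, since same-sign poles produce no delta contribution), at the price of two extra inputs the paper's proof does not need: the relations \eqref{rem-eq}, which Remark~\ref{remark} asserts but does not prove in detail (they do follow from Section~\ref{sec2} alone, so there is no circularity), and the identification of $s^\pm_m$ with $(\slashed{q}-(m\pm i\varepsilon))^{-1}$, which requires $m>0$ and a harmless rescaling of $\varepsilon$. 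Two small points to tighten: the step $\tilde{k}^{\,\res}_m\tilde{s}_{m'}=\tilde{k}^{\,\res}_m(\tilde{p}^{\,\res}_m\tilde{s}_{m'})$ should be written with an intermediate mass integration, $\tilde{k}^{\,\res}_m=\int\tilde{k}^{\,\res}_m\tilde{p}^{\,\res}_{m''}\,dm''$, so that the $\delta(m-m'')$ normalization is handled cleanly; and note that the paper's own reduction would also require the analogous identity $k_m b^>_m\tilde{s}_{m'}=\pvm\,k_m b^>_m$ for the $\tilde{k}^{\,\res}$ case, which your reduction via \eqref{rem-eq} sidesteps entirely.
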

\Proof
In view of Theorem \ref{maintheorem}, it suffices to show that 
$$p_mb^>_m\,\tilde{s}_{m'}=\pvm \:p_mb^>_m\qquad\text{and}\qquad \tilde{s}_{m'}\,b^<_mp_m=\pvm \:b^<_mp_m\:.$$
Substituting the relations
$$s_m^{\pm}=s_m\pm i\pi p_m$$
into the power series \eqref{sv-series} of $\tilde{s}_m^{\pm}$, we note that the terms with an odd number of factors $p_m$ cancel in \eqref{def-stilde}. We thus obtain
\begin{align*}
\tilde{s}_{m}=&\sum_{n\geq0}(-1)^n\sum_{\substack{Q\in \mathcal{P}(n+1)\\\# Q\text{ even}}}(i\pi)^{\# Q}\,C_m(Q,1)\mathscr{B}C_m(Q,2)\ldots \mathscr{B}C_m(Q,n+1)\\
=&\sum_{n\geq0}(-1)^n\sum_{\substack{Q\in \mathcal{P}(n)\\\# Q\text{ even}}}(i\pi)^{\# Q}\,s_m\mathscr{B}C_m(Q,2)\ldots \mathscr{B}C_m(Q,n+1)\\
&+\sum_{n\geq0}(-1)^n\sum_{\substack{Q\in \mathcal{P}(n)\\\# Q\text{ odd}}}(i\pi)^{\# Q}\,p_m\mathscr{B}C_m(Q,2)\ldots \mathscr{B}C_m(Q,n+1) \:,
\end{align*}
where $\mathcal{P}(n+1)$ is defined as in Theorem \ref{maintheorem} and 
$$C_m(Q,j):=\begin{cases}p_m\:,\quad j\in Q\\s_m\:,\quad j\notin Q \:.\end{cases}$$
Moreover, we introduce the abbreviations
$$\sum_\text{n,even}:=\sum_{n\geq0}(-1)^n\sum_{\substack{Q\in \mathcal{P}(n)\\\# Q\text{ even}}}(i\pi)^{\# Q}\qquad\text{and}\qquad\sum_\text{n,odd}:=\sum_{n\geq0}(-1)^n\sum_{\substack{Q\in \mathcal{P}(n)\\\# Q\text{ odd}}}(i\pi)^{\# Q} \:.$$
Applying the calculation rules from Lemma \ref{lemma21}, we obtain
\begin{align*}
	p_m&b^>_m \,\tilde{s}_{m'} = p_m\sum_{k\geq0}(-\mathscr{B}s_m)^{k}\:\tilde{s}_{m'}\\
	=&p_m \sum_\text{n,even}s_{m'}\mathscr{B}C_{m'}(Q,2)\ldots \mathscr{B}C_{m'}(Q,n+1)\\
	&-p_m\sum_{k\geq1}(-\mathscr{B}s_m)^{k-1}\mathscr{B}s_m \sum_\text{n,even}s_{m'}\mathscr{B}C_{m'}(Q,2)\ldots \mathscr{B}C_{m'}(Q,n+1)\\
	&+p_m \sum_\text{n,odd}p_{m'}\mathscr{B}C_{m'}(Q,2)\ldots \mathscr{B}C_{m'}(Q,n+1)\\
	&-p_m\sum_{k\geq1}(-\mathscr{B}s_m)^{k-1}\mathscr{B}s_m \sum_\text{n,odd}p_{m'}\mathscr{B}C_{m'}(Q,2)\ldots \mathscr{B}C_{m'}(Q,n+1)\\
	=&\pvm \sum_\text{n,even}p_{m}\mathscr{B}C_{m'}(Q,2)\ldots \mathscr{B}C_{m'}(Q,n+1)\\
	&-\pvm \:p_m \sum_{k\geq0}(-\mathscr{B}s_m)^{k}\mathscr{B}s_m\sum_\text{n,even}\mathscr{B}C_{m'}(Q,2)\ldots \mathscr{B}C_{m'}(Q,n+1)\\
	&+\pvm \:p_m \sum_{k\geq0}(-\mathscr{B}s_m)^{k}\mathscr{B}s_{m'}\sum_\text{n,even}\mathscr{B}C_{m'}(Q,2)\ldots \mathscr{B}C_{m'}(Q,n+1)\\	
	&-\pi^2\delta(m-m')\:p_m\sum_{k\geq0}(-\mathscr{B}s_m)^{k}\mathscr{B}\sum_\text{n,even}p_m\mathscr{B}C_{m}(Q,2)\ldots \mathscr{B}C_{m}(Q,n+1)\\
	&+\delta(m-m')\sum_\text{n,odd}p_{m}\mathscr{B}C_{m}(Q,2)\ldots \mathscr{B}C_{m}(Q,n+1)\\
	&-\frac{\pv}{m'-m} \:p_m\sum_{k\geq0}(-\mathscr{B}s_m)^{k}\mathscr{B}\sum_\text{n,odd}p_{m'}\mathscr{B}C_{m'}(Q,2)\ldots \mathscr{B}C_{m'}(Q,n+1) \:.
\end{align*}
Now a careful inspection shows that in the last equation, the fourth and the fifth line cancel each other,
whereas the $k=0$-terms in the third and the sixth line cancel the $n\geq1$-terms in the first line.
Furthermore, the $k\geq1$-terms in the third and sixth line cancel the $n\geq1$-terms in the second line. We thus conclude
$$p_mb^>_m\,\tilde{s}_{m'}=\pvm (p_m+p_m\sum_{k\geq1}(-\mathscr{B}s_m)^{k})=\pvm
p_mb^>_m \:. $$
The equation $\tilde{s}_{m'}\,b^<_mp_m=b^<_mp_m$ can be proven analogously.
\QED
Next we need to derive a completeness result for the spectral projectors of the interacting
Dirac operator. The completeness relation could be stated as
\[ \int_{\R \cup i \R} \tilde{p}_m^\text{res}\:dm = \1\:. \]
However, since we introduced the spectral projectors only for a real and positive
mass parameter~$m$, it is more convenient to work instead with contour integrals of the
Green's function in the complex plane. To this end, we introduce for any~$\varepsilon>0$
the hyperbolas
\[ C_\varepsilon = \left\{ \mu \in \C \text{ with } |(\text{Re} \,\mu)(\text{Im} \,\mu)|
= \varepsilon^2  \right\} . \]
We also consider~$C_\varepsilon$ as a contour along which we integrate in anti-clockwise orientation.
In view of the identity~$(p\slsh)^2=p^2 \in \R$, the eigenvalues of the matrix~$p \slsh$
lie in $\R \cup i \R$. Hence for any~$\mu \in C_\varepsilon$, the free Green's function $s_\mu$ can be defined in momentum space by
\beq \label{smudef}
s_\mu(p) = (p\slsh - \mu)^{-1}\:.
\eeq
Moreover, the Green's function with interaction can be defined by the perturbation series
\beq \label{sper}
\tilde{s}_\mu = \sum_{n=0}^\infty (-s_\mu \B)^n s_\mu \:,
\eeq
because writing the Feyman diagrams in momentum space, one gets products of the form
\beq \label{feynmanmom}
s_\mu(p_n) \:\B(p_n-p_{n-1}) \cdots s_\mu(p_1) \:\B(p_1-p_0)\: s_\mu(p_0)
\eeq
where each factor is well-defined according to~\eqref{smudef}.
Next, we define the principal value by
\beq \label{princ}
\int_{C^\varepsilon} \frac{\text{PP}}{\mu - x}\:\cdots\: \tilde{s}_\mu\: d\mu 
= \int_{C^\varepsilon}  \frac{1}{2} \left( \frac{1}{\mu - x + 4\varepsilon e^{i \varphi}}
+ \frac{1}{\mu - x - 4\varepsilon e^{i \varphi}} \right)\:\cdots\: \tilde{s}_\mu\: d\mu \:,
\eeq
where the phase $\varphi=\varphi(x)$ is to be chosen such that the points~$x \pm 4 \varepsilon
e^{i \varphi}$ are both outside the set
\[ J_\varepsilon := \left\{ \mu \in \C \text{ with } |(\text{Re} \,\mu)
(\text{Im} \,\mu)| \leq \varepsilon^2 \right\} \]
(in the limit~$\varepsilon \searrow 0$,
the results of all contour integrals will become independent of the choice of~$\varphi$).
\begin{Lemma} \label{lemma52} For any~$\mu \in C_\varepsilon$ and~$m>0$, the following identities
hold to every order in perturbation theory, with convergence in the distributional sense:
\begin{align}
-\frac{1}{2 \pi i} \: \lim_{\varepsilon \searrow 0} \int_{C_\varepsilon} \tilde{s}_\mu\:d\mu &= \1
\label{ci1} \\
-\frac{1}{2 \pi i} \: \lim_{\varepsilon \searrow 0} \int_{C_\varepsilon} 
\frac{\text{\rm{PP}}}{\mu-m}\: \tilde{s}_\mu \:d\mu &= \tilde{s}_m \label{ci2} \\
\tilde{s}_\mu\, \tilde{p}^{\,\res}_m = \frac{1}{m-\mu}
\:\tilde{p}^{\,\res}_m \:,\qquad&
\tilde{s}_\mu\, \tilde{k}^{\,\res}_m = \frac{1}{m-\mu}\:
\tilde{k}^{\,\res}_m \label{ci0}
\end{align}
Here we used the definitions~\eqref{smudef}, \eqref{sper} and~\eqref{princ},
whereas~$\tilde{s}_m$ is defined by~\eqref{def-stilde}.
\end{Lemma}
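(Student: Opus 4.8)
The plan is to prove all three identities order by order in the perturbation series~\eqref{sper}, reducing everything to residue calculus in the complex mass variable~$\mu$. At order~$n$ the $\mu$-dependence of~$\tilde{s}_\mu$ sits entirely in the product $s_\mu(p_n)\,\B(p_n-p_{n-1})\cdots s_\mu(p_0)$ of~\eqref{feynmanmom}; since $(p\slsh)^2=p^2\in\R$, each factor $s_\mu(p_j)=(p_j\slsh-\mu)^{-1}$ has its only poles at $\mu=\pm\sqrt{p_j^2}\in\R\cup i\R$, hence \emph{inside}~$J_\varepsilon$ and enclosed by~$C_\varepsilon$, while $s_\mu(p_j)=O(\mu^{-1})$ as $|\mu|\to\infty$, so the $n$-th order integrand decays like $\mu^{-(n+1)}$. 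Each order is a well-defined tempered distribution by Lemma~\ref{regular-lemma}, which legitimizes treating the momenta as fixed parameters during the $\mu$-integration. For~\eqref{ci1} I would argue by this decay: at order $n=0$ the residue theorem gives $\int_{C_\varepsilon}s_\mu(p)\,d\mu=2\pi i\,(-\1)$ from the poles at $\mu=\pm\sqrt{p^2}$, so $-\tfrac{1}{2\pi i}\int_{C_\varepsilon}s_\mu\,d\mu=\1$; at order $n\geq1$ the exponent $n+1\geq2$ forces the sum of \emph{all} residues to vanish, and as every pole lies in~$J_\varepsilon$ the integral over~$C_\varepsilon$ is zero. Thus only the zeroth order survives and the value is~$\1$, already before the limit.

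For~\eqref{ci2} the cleanest route is to use~\eqref{princ} to split $\tfrac{\pv}{\mu-m}$ into the two \emph{simple} poles $\mu=m_\pm:=m\pm4\varepsilon e^{i\varphi}$, which by construction lie \emph{outside}~$J_\varepsilon$. For each sign the integrand $\tfrac{1}{\mu-m_\pm}\,\tilde{s}_\mu$ decays like $\mu^{-(n+2)}$ and has, outside~$J_\varepsilon$, the single pole at~$m_\pm$; vanishing of the total residue sum then turns the $C_\varepsilon$-integral into minus the residue at~$m_\pm$, which merely evaluates $\tilde{s}_\mu$ at $\mu=m_\pm$. Hence $-\tfrac{1}{2\pi i}\int_{C_\varepsilon}\tfrac{1}{\mu-m_\pm}\,\tilde{s}_\mu\,d\mu=\tilde{s}_{m_\pm}$ order by order, and averaging the two signs yields $\tfrac12(\tilde{s}_{m_+}+\tilde{s}_{m_-})$. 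Letting $\varepsilon\searrow0$, the complex masses $m_\pm$ approach~$m$ from opposite sides of the real axis (the sign of $\text{Im}\,e^{i\varphi}$ fixing which), so that $s_{m_\pm}\to s_m^{\mp}$ in the distributional sense and $\tfrac12(\tilde{s}_{m_+}+\tilde{s}_{m_-})\to\tfrac12(\tilde{s}^+_m+\tilde{s}^-_m)=\tilde{s}_m$ by~\eqref{def-stilde}; the limit is then manifestly independent of~$\varphi$.

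The relations~\eqref{ci0} are algebraic, and I would prove them via the Dirac equation. By~\eqref{sper}, $\tilde{s}_\mu$ is the Green's function, $(i\slashed{\partial}+\B-\mu)\,\tilde{s}_\mu=\1$, whereas $\tilde{p}^{\,\res}_m$ and $\tilde{k}^{\,\res}_m$ carry a factor $b^<_m$ on the left (see~\eqref{ptil-thm}, \eqref{qtil-thm}) and hence solve $(i\slashed{\partial}+\B-m)(\,\cdot\,)=0$ by~\eqref{brel}. Applying $(i\slashed{\partial}+\B-\mu)$ shows that both $\tilde{s}_\mu\,\tilde{p}^{\,\res}_m$ and $\tfrac{1}{m-\mu}\,\tilde{p}^{\,\res}_m$ solve $(i\slashed{\partial}+\B-\mu)(\,\cdot\,)=\tilde{p}^{\,\res}_m$, so their difference solves the homogeneous $\mu$-equation. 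Since $\mu\in C_\varepsilon$ lies off $\R\cup i\R$, the free operator $p\slsh-\mu$ is invertible for every real momentum~$p$, and an induction on the order in~$\B$ forces the difference to vanish; the same applies to~$\tilde{k}^{\,\res}_m$. Equivalently, one can telescope~\eqref{sper} against~\eqref{ptil-thm}, \eqref{qtil-thm} using the building blocks $s_\mu p_m=\tfrac{1}{m-\mu}p_m$, $s_\mu k_m=\tfrac{1}{m-\mu}k_m$ and $s_\mu s_m=\tfrac{1}{\mu-m}(s_\mu-s_m)$, exactly as in Lemma~\ref{lemma51} but now without any principal-value or $\delta$-contributions.

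The main obstacle I anticipate is not algebraic but analytic: making the residue calculus rigorous on the \emph{unbounded} contour~$C_\varepsilon$ for matrix- and distribution-valued integrands. Concretely, the $n=0$ term of~\eqref{ci1} is only conditionally convergent (the $-\mu^{-1}\1$ tail of $s_\mu$ integrates to the winding contribution around the origin, the remainder being $O(\mu^{-2})$), so one must justify closing the contour at infinity and handling the four hyperbola branches through the decay estimates above. One must further justify interchanging the $\mu$-integration with the inverse Fourier transform at each fixed order, and verify that the $\varepsilon\searrow0$ limit in~\eqref{ci2} commutes with the momentum integrals and is genuinely independent of~$\varphi$.
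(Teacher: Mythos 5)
Your argument is correct and rests on the same basic strategy as the paper's proof --- momentum-space residue calculus on the contour $C_\varepsilon$, using that all poles of $\tilde{s}_\mu$ lie in $\R\cup i\R\subset J_\varepsilon$ and that the $n$-th order term decays like $\mu^{-(n+1)}$ --- but two of your sub-arguments take a genuinely different route. For~\eqref{ci2}, the paper first proves the interacting resolvent identity~\eqref{ri}, integrates it over $C_\varepsilon$ and invokes~\eqref{ci1} to obtain $\int_{C_\varepsilon}(\mu-\nu)^{-1}\tilde{s}_\mu\,d\mu=-2\pi i\,\tilde{s}_\nu$; you reach the same intermediate formula directly, by observing that the total residue sum vanishes so the contour integral is minus the residue at the single exterior pole $m_\pm$. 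The two computations are equivalent, but the paper's detour through~\eqref{ri} pays off because~\eqref{ri} is reused immediately afterwards for~\eqref{ci0} --- there the paper reduces everything to the algebraic relations $s_\mu b^<_m p_m=\tfrac{1}{m-\mu}b^<_m p_m$ (your ``telescoping'' alternative), whereas your primary argument instead characterizes $\tilde{s}_\mu\,\tilde{p}^{\,\res}_m$ and $\tfrac{1}{m-\mu}\tilde{p}^{\,\res}_m$ as solutions of the same inhomogeneous equation $(i\slashed{\partial}+\B-\mu)(\,\cdot\,)=\tilde{p}^{\,\res}_m$ and uses invertibility of $q\slashed{\phantom{q}}-\mu$ off $\R\cup i\R$ to kill the difference order by order. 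That uniqueness argument is sound and arguably cleaner, since it avoids the combinatorics of the $b$-series entirely; its only cost is that one must check the order-zero base case by hand. Your final paragraph also flags, more explicitly than the paper does, the conditional convergence of the $n=0$ term of~\eqref{ci1} on the unbounded four-branch contour and the need to control the $\varphi$-dependence in the $\varepsilon\searrow 0$ limit of~\eqref{princ}; the paper glosses over both points, so nothing is missing relative to the published argument.
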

\Proof In order to derive~\eqref{ci1}, we consider the perturbation series in momentum space.
Using again that the spectrum of the matrix~$p\slsh$ lies in~$\R \cup i\R$, we
can integrate~\eqref{smudef} with residues to obtain
\[ \int_{C_\varepsilon} \frac{1}{p\slsh- \mu} = -2 \pi i \: \1\:. \]
Similarly, when integrating~\eqref{feynmanmom}, the integrand is a product of
poles $(\mu-x)^{-p}$ of total order larger than one, and the
poles are all enclosed by the contour~$C_\varepsilon$. Hence computing the
integral with residues, we obtain zero. This proves~\eqref{ci1}.

For any~$\mu \in C_\varepsilon$ and~$\nu \in \C \setminus J_\varepsilon$,
a straightforward calculation using~\eqref{smudef} and~\eqref{sper} shows that
to every oder in perturbation theory, the ``resolvent identity''
\beq \label{ri}
\tilde{s}_\mu \tilde{s}_\nu = \frac{1}{\mu - \nu} \left( \tilde{s}_\mu - \tilde{s}_\nu \right)
\eeq
holds. Integrating~$\mu$ on both sides over~$C_\varepsilon$ and using that
\begin{align*}
\int_{C_\varepsilon} \tilde{s}_\mu \tilde{s}_\nu \: d\mu &= 
\left( \int_{C_\varepsilon} \tilde{s}_\mu \: d\mu \right) \tilde{s}_\nu 
\overset{\eqref{ci1}}{=} -2 \pi i \,\tilde{s}_\nu \\
\int_{C_\varepsilon} \frac{1}{\mu - \nu} \:\tilde{s}_\nu \: d\mu &=
\tilde{s}_\nu \int_{C_\varepsilon} \frac{1}{\mu - \nu} \: d\mu = 0 \:,
\end{align*}
we find that
\[ \int_{C_\varepsilon} 
\frac{1}{\mu-\nu}\: \tilde{s}_\mu \:d\mu = -2 \pi i\, \tilde{s}_\nu \:. \]
Now~\eqref{ci2} follows immediately from the definition of the principal value~\eqref{princ}.

Similar as in the proof of Lemma~\ref{lemma51}, the identities~\eqref{ci0} follow
from the relations
\[ s_\mu b^<_m p_m = \frac{1}{m-\mu}\: b^<_m p_m \qquad \text{and} \qquad
s_\mu b^<_m k_m = \frac{1}{m-\mu}\: b^<_m k_m\:. \]
These relations follow immediately from the definition of~$b^<_m$, \eqref{thm-defs},
by applying the resolvent identity~\eqref{ri} (with~$\nu$ replaced by~$m$)
as well as the relations~\eqref{eq:ps-p} and~\eqref{eq:ks-k}, which after replacing~$m'$
by the complex parameter~$\mu \in C_\varepsilon$ are clearly valid even without the principal part.
\QED

We are now ready to state the equations for the perturbation flow. Using a notation similar
to~\eqref{not-ptau} and~\eqref{defeq-ptau} for the Greens functions corresponding
to the family of external fields $(\tau\mathscr{B})_{\tau\geq0}$, we write
$$\tilde{s}_m^\tau\;,\qquad (\tilde{s}_m^+)^\tau \qquad \text{ and}\qquad (\tilde{s}_m^-)^\tau\:. $$

\begin{Thm} \label{pfeq}
The perturbation flow equations have the commutator structure
$$\frac{d}{d\tau}\tilde{p}_m^\tau=\Big[A(\tau),\tilde{p}_m^\tau\Big]\qquad\text{and}\qquad \frac{d}{d\tau}\tilde{k}_m^\tau=\Big[A(\tau),\tilde{k}_m^\tau\Big],$$
where $A(\tau)$ is a family of operators, which are anti-symmetric with respect to the
inner product~\eqref{iprod}.
\end{Thm}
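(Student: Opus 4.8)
The plan is to pass to the interacting resolvent and to produce the generator by a version of Kato's adiabatic intertwining adapted to the indefinite inner product~\eqref{iprod}. First I would differentiate the Neumann series~\eqref{sper} (with $\mathscr{B}$ replaced by $\tau \mathscr{B}$) in~$\tau$, which gives at once the flow equation for the Green's function,
\[ \frac{d}{d\tau}\tilde{s}_\mu^\tau = -\,\tilde{s}_\mu^\tau\,\mathscr{B}\,\tilde{s}_\mu^\tau \:. \]
By Lemma~\ref{lemma52}, the operators $\tilde{p}_m^{\,\res}$ and $\tilde{k}_m^{\,\res}$ are both subordinate to $\tilde{s}_\mu^\tau$ through the eigenrelations~\eqref{ci0}, while $\tilde{p}_m^{\,\res}$ is moreover recovered from $\tilde{s}_\mu^\tau$ by contour integration (this is the Stone-type formula~\eqref{pres}). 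Hence I would aim to construct a single operator $A(\tau)$, independent of $\mu$ and antisymmetric with respect to~\eqref{iprod}, such that
\[ \big[A(\tau),\tilde{s}_\mu^\tau\big] = -\,\tilde{s}_\mu^\tau\,\mathscr{B}\,\tilde{s}_\mu^\tau \qquad \text{for all } \mu \in C_\varepsilon \:. \]
Once this holds, integrating over the contour and using that $A(\tau)$ commutes with the $\mu$-integration gives the flow equation for $\tilde{p}_m^{\,\res}$ directly.

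For the generator itself I would use the continuum analog of Kato's formula $A=\int \dot{\tilde p}_m^{\,\res}\,\tilde{p}_m^{\,\res}\,dm$, rewritten with the help of the completeness relation~\eqref{ci1} and the resolvent identity~\eqref{ri} as a (double) contour integral of $\tilde{s}_\mu^\tau\,\mathscr{B}\,\tilde{s}_\nu^\tau$ over $C_\varepsilon\times C_\varepsilon$. The commutator identity is then verified by expanding $[A,\tilde{s}_\mu^\tau]$ into its constituent products, collapsing the adjacent resolvents with~\eqref{ri}, and removing one contour integration at a time with~\eqref{ci1}; the principal-value kernel is exactly what makes the boundary terms recombine into $-\tilde{s}_\mu^\tau\mathscr{B}\tilde{s}_\mu^\tau$. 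Note that the resolvent equation fixes $A$ only up to operators commuting with the interacting Dirac operator; this residual freedom I would use in the next step.

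Antisymmetry of $A$ I would read off structurally: the free resolvent satisfies $s_\mu^{\,*}=s_{\bar\mu}$ and $\mathscr{B}^*=\mathscr{B}$, so, since $C_\varepsilon$ is symmetric under complex conjugation, taking the adjoint of the contour integral reverses its orientation and sends $A\mapsto -A$. Equivalently, the spectral projectors $\tilde{p}_m^{\,\res}$ are symmetric (by the Stone representation~\eqref{pres} together with $(\tilde{s}_m^+)^*=\tilde{s}_m^-$), whence each $\dot{\tilde p}_m^{\,\res}\,\tilde{p}_m^{\,\res}=\tfrac12[\dot{\tilde p}_m^{\,\res},\tilde{p}_m^{\,\res}]$ is antisymmetric. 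It then remains to check that the \emph{same} $A$ also transports $\tilde{k}_m^{\,\res}$. Since $\tilde{k}_m^{\,\res}$ obeys the same eigenrelation~\eqref{ci0} as $\tilde{p}_m^{\,\res}$, the two operators differ only inside each (degenerate) mass eigenspace of $\tilde{s}_\mu^\tau$; here I would invoke the algebraic relations~\eqref{rem-eq}, i.e.\ $\tilde{k}^{\,\res}\tilde{k}^{\,\res}=\tilde{p}^{\,\res}$ and $\tilde{p}^{\,\res}\tilde{k}^{\,\res}=\tilde{k}^{\,\res}=\tilde{k}^{\,\res}\tilde{p}^{\,\res}$, to fix the remaining freedom in $A$ so that $[A(\tau),\tilde{k}_m^{\,\res}]=\tfrac{d}{d\tau}\tilde{k}_m^{\,\res}$ as well.

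The main obstacle is precisely this last point: the resolvent flow only guarantees that $A$ follows the spectral projector $\tilde{p}^{\,\res}$, and transporting the finer ``signature'' operator $\tilde{k}^{\,\res}$ by the same generator is not forced by the eigenrelation alone -- it is the condition that distinguishes a genuine intertwiner $A=\dot{U}U^{-1}$ from a mere projector-following flow, and its verification requires the full algebra of Lemmas~\ref{lemma21}, \ref{lemma51} and~\ref{lemma52}. A second difficulty is that, because the Dirac operator is only symmetric with respect to the indefinite product~\eqref{iprod} and has continuous spectrum, none of these contour integrals is a genuine spectral sum; each exists only order by order in $\mathscr{B}$, so one must establish distributional convergence and carefully track the $\delta(m-m')$ and principal-value contributions -- in particular the extra term $\pi^2\delta(m-m')p_m$ in~\eqref{eq:ss-sp} -- when collapsing products, so that the pieces of the commutator cancel with no residual term.
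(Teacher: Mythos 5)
Your first half follows the paper's own route quite closely: the paper also differentiates the Neumann series to get $\frac{d}{d\tau}(\tilde{s}_m^\pm)^\tau=-(\tilde{s}_m^\pm)^\tau\mathscr{B}(\tilde{s}_m^\pm)^\tau$, deduces $\frac{d}{d\tau}\tilde{p}_m^\tau=-\tilde{s}_m^\tau\mathscr{B}\tilde{p}_m^\tau-\tilde{p}_m^\tau\mathscr{B}\tilde{s}_m^\tau$ from~\eqref{pres} and~\eqref{rel-stil-ptil}, and builds exactly the double contour integral $\frac{1}{4\pi^2}\int_{C_\varepsilon}\!\int_{C_\varepsilon}\frac{\mathrm{PP}}{\nu-\mu}\,s_\mu^\tau\mathscr{B}s_\nu^\tau\,d\mu\,d\nu$ as the first piece of $A(\tau)$, verifying via Lemma~\ref{lemma52} that its commutator with $\tilde{p}_m^\tau$ reproduces the $\tilde{s}\mathscr{B}$-terms. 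Your antisymmetry argument for this piece is also sound.

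The genuine gap is in the treatment of $\tilde{k}_m^\tau$, which you correctly flag as ``the main obstacle'' but then do not resolve. Invoking the relations~\eqref{rem-eq} to ``fix the remaining freedom in $A$'' presupposes that you already know what $\frac{d}{d\tau}\tilde{k}_m^\tau$ is; but that derivative is not determined by the eigenrelation~\eqref{ci0}, and it is not automatic that it can be written as a commutator with an antisymmetric operator compatible with the flow of $\tilde{p}_m^\tau$. The paper settles this by an independent structural argument: (i) differentiating the defining Dirac equation shows $\frac{d}{d\tau}\tilde{k}_m^\tau$ solves the inhomogeneous equation $(i\slashed{\partial}+\tau\B-m)\frac{d}{d\tau}\tilde{k}_m^\tau=-\B\tilde{k}_m^\tau$, hence equals $-\tilde{s}_m^\tau\mathscr{B}\tilde{k}_m^\tau$ plus a homogeneous solution; (ii) symmetry of $\tilde{k}_m^\tau$ forces the form $-\tilde{s}_m^\tau\mathscr{B}\tilde{k}_m^\tau-\tilde{k}_m^\tau\mathscr{B}\tilde{s}_m^\tau+\hat{\mathscr{H}}_m^\tau$ with $\hat{\mathscr{H}}_m^\tau$ a symmetric homogeneous solution, for which the general ansatz is $\tilde{p}\mathcal{A}\tilde{p}+\tilde{k}\mathcal{C}\tilde{k}+\tilde{k}\mathcal{D}\tilde{p}+\tilde{p}\mathcal{D}^*\tilde{k}$; (iii) --- and this is the step your proposal has no substitute for --- differentiating the identity $\tilde{k}_m^\tau\tilde{k}_{m'}^\tau=\delta(m-m')\,\tilde{p}_m^\tau$ and comparing with the already-known flow of $\tilde{p}_m^\tau$ yields the constraints $\mathcal{A}_m^\tau=-\mathcal{C}_m^\tau$ and $(\mathcal{D}_m^\tau)^*=-\mathcal{D}_m^\tau$. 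It is precisely these sign constraints that allow the leftover terms to be absorbed into the commutator with the second, $m$-integrated piece of $A(\tau)$, namely $\frac{1}{2}\int_0^\infty(\tilde{p}\mathcal{D}\tilde{p}-\tilde{k}\mathcal{D}\tilde{k}+\tilde{k}\mathcal{A}\tilde{p}-\tilde{p}\mathcal{A}\tilde{k})\,dm$, while keeping $A(\tau)$ antisymmetric. Without step (iii) the compatibility of a single generator for both flows is asserted rather than proved, so as it stands your argument establishes the theorem only for $\tilde{p}_m^\tau$.
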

\Proof
Since the Greens functions $(\tilde{s}_m^\pm)^\tau$ are defined via the formal Neumann series \eqref{sv-series}, 
their $\tau$-derivatives are given by
$$\frac{d}{d\tau}(\tilde{s}_m^\pm)^\tau=-(\tilde{s}_m^\pm)^\tau\mathscr{B}(\tilde{s}_m^\pm)^\tau \:. $$
Using the definition of $\tilde{p}_m^\tau$, \eqref{pres}, and applying the relations \eqref{rel-stil-ptil}, we obtain 
\beq \label{struc-deriv-ptil}
\frac{d}{d\tau}\tilde{p}_m^\tau=-\tilde{s}_m^\tau\mathscr{B}\tilde{p}_m^\tau-\tilde{p}_m^\tau\mathscr{B}\tilde{s}_m^\tau \:.
\eeq
The identity
$$0=\frac{d}{d\tau}\Big(i\slashed{\partial}+\tau\B -m\Big)\tilde{k}_m^\tau=\B\tilde{k}_m^\tau+\Big(i\slashed{\partial}+\tau\B -m\Big)\frac{d}{d\tau}\tilde{k}_m^\tau$$
implies that the $\tau$-derivative of $\tilde{k}_m^\tau$ is a solution of the inhomogeneous equation
$$\Big(i\slashed{\partial}+\tau\B -m\Big)\frac{d}{d\tau}\tilde{k}_m^\tau=-\B\tilde{k}_m^\tau\:.$$
Thus, the $\tau$-derivative of $\tilde{k}_m^\tau$ is of the form
$$
\frac{d}{d\tau}\tilde{k}_m^\tau=-\tilde{s}_m^\tau\mathscr{B}\tilde{k}_m^\tau+\mathscr{H}_m^\tau
\:,
$$
where $\mathscr{H}_m^\tau$ is a solution of the homogeneous equation
\beq\Big(i\slashed{\partial}+\tau\B -m\Big)\mathscr{H}_m^\tau=0 \:.\label{eq:unit-homog} \eeq
Since $\tilde{k}_m^\tau$ is a symmetric operator, the same is true for its $\tau$-derivative. We thus conclude that
$$\mathscr{H}_m^\tau=-\tilde{k}_m^\tau\mathscr{B}\tilde{s}_m^\tau+\hat{\mathscr{H}}_m^\tau\,,$$
where $\hat{\mathscr{H}}_m^\tau$ is another solution of the homogeneous equation \eqref{eq:unit-homog}, which is moreover symmetric with respect to the inner product~\eqref{iprod}.
The most general ansatz for $\hat{\mathscr{H}}_m^\tau$ is
$$\hat{\mathscr{H}}_m^\tau=\tilde{p}_m^\tau\mathcal{A}_m^\tau\tilde{p}_m^\tau+\tilde{k}_m^\tau\mathcal{C}_m^\tau\tilde{k}_m^\tau+\tilde{k}_m^\tau\mathcal{D}_m^\tau\tilde{p}_m^\tau+\tilde{p}_m^\tau(\mathcal{D}_m^\tau)^*\tilde{k}_m^\tau$$
with a certain operator $\mathcal{D}_m^\tau$ and certain symmetric operators $\mathcal{A}_m^\tau$ and $\mathcal{C}_m^\tau$. This leads to the identity
\begin{align*}	\frac{d}{d\tau}(\tilde{k}_m^\tau\tilde{k}_{m'}^\tau) =-\tilde{k}&_m^\tau \mathscr{B}\tilde{s}_m^\tau\tilde{k}_{m'}^\tau-\tilde{k}_m^\tau\tilde{s}_{m'}^\tau\mathscr{B}\tilde{k}_{m'}^\tau\\
+\delta(m-m')&\Big(-\tilde{s}_m^\tau\mathscr{B}\tilde{p}_m^\tau-\tilde{p}_m^\tau\mathscr{B}\tilde{s}_m^\tau
+\tilde{p}_m^\tau\mathcal{A}_m^\tau\tilde{k}_m^\tau+\tilde{k}_m^\tau\mathcal{C}_m^\tau\tilde{p}_m^\tau
+\tilde{k}_m^\tau\mathcal{D}_m^\tau\tilde{k}_m^\tau\\
&\;\;\:+\tilde{p}_m^\tau(\mathcal{D}_m^\tau)^*\tilde{p}_m^\tau
+\tilde{k}_m^\tau\mathcal{A}_m^\tau\tilde{p}_m^\tau+\tilde{p}_m^\tau\mathcal{C}_m^\tau\tilde{k}_m^\tau+\tilde{p}_m^\tau\mathcal{D}_m^\tau\tilde{p}_m^\tau+\tilde{k}_m^\tau(\mathcal{D}_m^\tau)^*\tilde{k}\Big).
\label{eq:}
\end{align*}
In view of Lemma \ref{lemma51}, the two terms in the first line cancel each other, and the identity
$$\frac{d}{d\tau}(\tilde{k}_m^\tau\tilde{k}_{m'}^\tau)=\delta(m-m')\frac{d}{d\tau}\tilde{p}_m^\tau=\delta(m-m')(-\tilde{s}_m^\tau\mathscr{B}\tilde{p}_m^\tau-\tilde{p}_m^\tau\mathscr{B}\tilde{s}_m^\tau)$$
yields the relations
\[ \mathcal{A}_m^\tau=-\mathcal{C}_m^\tau\qquad\text{and}\qquad (\mathcal{D}_m^\tau)^*=-\mathcal{D}_m^\tau \:. \]
Thus
\beq\label{struc-deriv-ktil}
\frac{d}{d\tau}\tilde{k}_m^\tau=-\tilde{s}_m^\tau\mathscr{B}\tilde{k}_m^\tau-\tilde{k}_m^\tau\mathscr{B}\tilde{s}_m^\tau+\tilde{p}_m^\tau\mathcal{A}_m^\tau\tilde{p}_m^\tau-\tilde{k}_m^\tau\mathcal{A}_m^\tau\tilde{k}_m^\tau+\tilde{k}_m^\tau\mathcal{D}_m^\tau\tilde{p}_m^\tau-\tilde{p}_m^\tau\mathcal{D}_m^\tau\tilde{k}_m^\tau
\eeq
with
\beq\label{op-rels}
(\mathcal{A}_m^\tau)^* = \mathcal{A}_m^\tau
\qquad\text{and}\qquad (\mathcal{D}_m^\tau)^*=-\mathcal{D}_m^\tau \:.
\eeq

We now define the operators
\begin{align*}
A(\tau):=& \frac{1}{4 \pi^2} \lim_{\varepsilon \searrow 0}
\int_{C_\varepsilon} d\mu \int_{C_\varepsilon} d\nu \: \frac{\text{PP}}{\nu - \mu}\:
s^\tau_\mu \B s^\tau_\nu \\
&+\frac{1}{2}\int_0^\infty \Big(\tilde{p}_m^\tau\mathcal{D}_m^\tau\tilde{p}_m^\tau-\tilde{k}_m^\tau\mathcal{D}_m^\tau\tilde{k}_m^\tau+\tilde{k}_m^\tau\mathcal{A}_m^\tau\tilde{p}_m^\tau-\tilde{p}_m^\tau\mathcal{A}_m^\tau\tilde{k}_m^\tau´\Big) dm \:.
\end{align*}
From~\eqref{op-rels} it is obvious that~$A(\tau)$ is an anti-symmetric operator.
Furthermore, applying Lemma~\ref{lemma52}, we find that
\begin{align*}
\int_{C_\varepsilon} d\mu &\int_{C_\varepsilon} d\nu \: \frac{\text{PP}}{\nu - \mu}\:
s^\tau_\mu \B s^\tau_\nu\: \tilde{p}_m^\tau
= \int_{C_\varepsilon} d\mu \int_{C_\varepsilon} d\nu \: \frac{\text{PP}}{\nu - \mu}\:
s^\tau_\mu \B \: \frac{1}{m-\nu} \: \tilde{p}_m^\tau \\
&= -2 \pi i \int_{C_\varepsilon} d\mu\: \frac{\text{PP}}{m - \mu}\:
s^\tau_\mu \B \tilde{p}_m^\tau = -4 \pi^2 s^\tau_m \B \tilde{p}_m^\tau \:.
\end{align*}
Using this identity together with the analogous identity with~$\tilde{p}_m^\tau$
replaced by~$\tilde{k}_m^\tau$, one readily verifies that~\eqref{struc-deriv-ptil}
and~\eqref{struc-deriv-ktil} can be written as
$$\frac{d}{d\tau}\tilde{p}_m^\tau=A(\tau)\tilde{p}_m^\tau+\tilde{p}_m^\tau A(\tau)^*\qquad\text{and}\qquad\frac{d}{d\tau}\tilde{k}_m^\tau=A(\tau)\tilde{k}_m^\tau+\tilde{k}_m^\tau A(\tau)^* \:, $$
concluding the proof.
\QED

As mentioned at the beginning of this section, the result of Theorem \ref{pfeq} can be used to prove the existence of a one-parameter family of unitary transformations describing the interaction.
\begin{Thm}
There exists a one-parameter family of operators $U(\tau)$, which are unitary with respect to the
inner product~\eqref{iprod}, such that for any mass $m\geq0$ and for all $\tau\geq0$
\beq \label{eq:unit-thm}
	\tilde{p}_m^\tau=U(\tau)p_mU(\tau)^*\;\qquad\text{and}\qquad
	\tilde{k}_m^\tau=U(\tau)k_mU(\tau)^* \:.
\eeq
\end{Thm}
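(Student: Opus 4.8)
The plan is to integrate the commutator flow established in Theorem~\ref{pfeq}. Reading those equations in the form $\frac{d}{d\tau}\tilde{p}_m^\tau = A(\tau)\tilde{p}_m^\tau + \tilde{p}_m^\tau A(\tau)^*$ (and likewise for $\tilde{k}_m^\tau$), the natural candidate for $U(\tau)$ is the solution of the operator initial value problem $\frac{d}{d\tau}U(\tau) = A(\tau)\,U(\tau)$ with $U(0)=\1$, which I would write explicitly as the time-ordered exponential
\begin{equation*}
U(\tau)=\Texp\Big(\int_0^\tau A(\sigma)\,d\sigma\Big)=\sum_{n=0}^\infty\int_0^\tau\!d\tau_1\!\int_0^{\tau_1}\!d\tau_2\cdots\!\int_0^{\tau_{n-1}}\!d\tau_n\;A(\tau_1)\cdots A(\tau_n)\;.
\end{equation*}
The first point to verify is that this is a well-defined formal power series in $\B$. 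Since $A(\tau)=\O(\B)$ (the contour-integral term in its definition carries an explicit factor $\B$, and the flow derivatives annihilate the $\B$-independent part because $\tilde{p}_m^\tau,\tilde{k}_m^\tau$ reduce to the $\tau$-independent free operators at $\B=0$), the $n$-fold product contributes only at order $\O(\B^n)$, so each order in $\B$ receives finitely many contributions and $U(\tau)=\1+\O(\B)$. Note also that $A(\tau)$ is \emph{mass-independent} (the parameter $m$ is integrated out in its definition), so that a single $U(\tau)$ will intertwine all masses simultaneously, as required.

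Next I would establish unitarity with respect to the indefinite inner product~\eqref{iprod}. Using the anti-symmetry $A(\tau)^*=-A(\tau)$ from Theorem~\ref{pfeq} and the fact that the adjoint reverses a time-ordered product, termwise differentiation gives
\begin{equation*}
\frac{d}{d\tau}\big(U(\tau)^*U(\tau)\big)=U(\tau)^*A(\tau)^*U(\tau)+U(\tau)^*A(\tau)U(\tau)=U(\tau)^*\big(A(\tau)^*+A(\tau)\big)U(\tau)=0\;.
\end{equation*}
Hence $U(\tau)^*U(\tau)$ is constant and equals its value $\1$ at $\tau=0$. As $U(\tau)=\1+\O(\B)$ is formally invertible, this forces $U(\tau)^*=U(\tau)^{-1}$ and therefore also $U(\tau)U(\tau)^*=\1$, so $U(\tau)$ is unitary.

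It then remains to identify the conjugated free operators with the interacting ones. Setting $\hat{p}_m^\tau:=U(\tau)\,p_m\,U(\tau)^*$ and using $\frac{d}{d\tau}U(\tau)^*=U(\tau)^*A(\tau)^*$ with the product rule, I obtain
\begin{equation*}
\frac{d}{d\tau}\hat{p}_m^\tau=A(\tau)\,U(\tau)p_mU(\tau)^*+U(\tau)p_mU(\tau)^*A(\tau)^*=A(\tau)\,\hat{p}_m^\tau+\hat{p}_m^\tau A(\tau)^*\;,
\end{equation*}
which is precisely the flow equation satisfied by $\tilde{p}_m^\tau$, and at $\tau=0$ we have $\hat{p}_m^0=p_m=\tilde{p}_m^0$. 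Both families thus solve the same linear $\tau$-evolution with the same initial data. Comparing coefficients order by order in $\B$ — where, because $A=\O(\B)$, the equation at order $n$ determines the $\tau$-derivative of the order-$n$ part from strictly lower orders — yields uniqueness and hence $\hat{p}_m^\tau=\tilde{p}_m^\tau$. The identical argument applied to $k_m$ gives $U(\tau)k_mU(\tau)^*=\tilde{k}_m^\tau$, which is the assertion~\eqref{eq:unit-thm}.

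The step I expect to be the main obstacle is not the algebra above but the justification that all these manipulations are legitimate in the present distributional, indefinite-inner-product framework: one must check that the time-ordered exponential and the operator products $U p_m U^*$ are well-defined tempered distributions to every order (invoking regularity and decay hypotheses as in Lemma~\ref{regular-lemma}), that termwise $\tau$-differentiation of the formal series is permitted, and that the adjoint $A(\tau)^*$ and the notion of unitarity are consistently taken with respect to~\eqref{iprod} rather than a Hilbert-space scalar product. Once this order-by-order well-definedness is secured, the existence, unitarity and intertwining properties of $U(\tau)$ all follow from the commutator structure of Theorem~\ref{pfeq} as sketched.
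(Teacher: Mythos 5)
Your proposal is correct and follows essentially the same route as the paper: both define $U(\tau)$ as the time-ordered exponential of the anti-symmetric generator $A(\tau)$ from Theorem~\ref{pfeq}, deduce unitarity from $A(\tau)^*=-A(\tau)$, and identify $U(\tau)p_mU(\tau)^*$ with $\tilde{p}_m^\tau$ via the common commutator flow and initial condition. You merely spell out details the paper leaves implicit (order-by-order well-definedness of the ordered exponential, the uniqueness argument for the flow, and the mass-independence of $A(\tau)$), which is a faithful elaboration rather than a different proof.
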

\Proof
If there exists such a family $U(\tau)$, then taking the $\tau$-derivative of \eqref{eq:unit-thm} yields 
\begin{align*}	\frac{d}{d\tau}\tilde{p}_m^\tau&=\frac{d}{d\tau}\Big(U(\tau)p_mU(\tau)^{-1}\Big)=\dot{U}(\tau)p_m(\tau)U(\tau)^{-1}-U(\tau)p_mU(\tau)^{-2}\dot{U}(\tau)\\&=\Big[\dot{U}(\tau)U(\tau)^{-1},\tilde{p}_m^\tau\Big]
\end{align*}
and
$$
\frac{d}{d\tau}\tilde{k}_m^\tau=\Big[\dot{U}(\tau)U(\tau)^{-1},\tilde{k}_m^\tau\Big].
$$
On the other hand, from Theorem \ref{pfeq} we know that the perturbation flow equations do have the commutator structure \eqref{pk-deriv-commutator}. Thus integrating these equations and taking ordered exponentials yields a family of unitary operators
\beq \label{formula-Utau}
U(\tau)=\Texp\Big(-\int_\tau^0 A(\tau)\,d\tau\Big)
\eeq
fulfilling the identities \eqref{eq:unit-thm}.
\QED

We remark that the operators $\mathcal{A}_m^\tau$ and $\mathcal{D}_m^\tau$ are  uniquely defined by perturbation series, for a perturbation being of the order~$\mathscr{O}(\B^1)$. The same is then true for the operators $A(\tau)$. This implies that the operator~$U(\tau)$ in~\eqref{formula-Utau} has a well-defined perturbation expansion.
Going through the combinatorial details, it would be straightforward to 
derive explicit formulas for this perturbation expansion to any order. Here we are content with the
existence statement which immediately implies the following interesting corollary.
\begin{Corollary}
There exists a unitary operator $U$, defined by a unique perturbation series, which for any mass $m > 0$ transforms the free fermionic projector \eqref{Psea} into the interacting fermionic projector \eqref{P-pq}:
\beq
P_m = U P_m^{\sea} U^* \:.
\eeq
\end{Corollary}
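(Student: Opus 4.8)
The plan is to specialize the one-parameter family of unitary transformations constructed in the preceding theorem to the value $\tau=1$. Recall that $\tilde{p}_m^\tau$ and $\tilde{k}_m^\tau$, defined in~\eqref{not-ptau}, are the rescaled operators associated with the scaled potential~$\tau\mathscr{B}$; at~$\tau=1$ they reduce to the operators~$\tilde{p}^{\,\res}_m$ and~$\tilde{k}^{\,\res}_m$ corresponding to the full interaction~$\mathscr{B}$. Accordingly I set $U:=U(1)$, which by the previous theorem is unitary with respect to the indefinite inner product~\eqref{iprod} and satisfies $\tilde{p}^{\,\res}_m=Up_mU^*$ and $\tilde{k}^{\,\res}_m=Uk_mU^*$ for every $m>0$. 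This is precisely the identity~\eqref{result-unitary} announced at the beginning of this section.

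The corollary then follows by linearity. Since the conjugation map $A\mapsto UAU^*$ is linear, applying it to the decomposition~\eqref{Psea} of the free fermionic projector yields
\[ U P_m^{\sea} U^* = \frac{1}{2}\left( Up_mU^* - Uk_mU^* \right) = \frac{1}{2}\left( \tilde{p}^{\,\res}_m - \tilde{k}^{\,\res}_m \right) . \]
By Theorem~\ref{maintheorem}, the right-hand side is exactly the interacting fermionic projector~$P_m$ of~\eqref{P-pq}, giving $P_m = U P_m^{\sea} U^*$ as claimed.

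It remains to record that $U$ is determined by a unique perturbation series. This is inherited from the construction of the flow: as noted in the remark following the previous theorem, the operators $\mathcal{A}_m^\tau$ and $\mathcal{D}_m^\tau$ entering~\eqref{struc-deriv-ktil} are uniquely fixed order by order in~$\mathscr{B}$, hence so is the flow generator $A(\tau)$, and therefore the ordered exponential~\eqref{formula-Utau} defines $U=U(1)$ as a unique formal power series in~$\mathscr{B}$.

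In fact there is essentially no obstacle left to overcome at this stage: all of the analytic and combinatorial content has already been established in Theorem~\ref{maintheorem} and in the construction of the perturbation flow. The only points requiring care are the bookkeeping identification of the $\tau=1$ operators with the rescaled operators~$\tilde{p}^{\,\res}_m$ and~$\tilde{k}^{\,\res}_m$, and the observation that the uniqueness of the flow generator propagates through the ordered exponential to~$U$.
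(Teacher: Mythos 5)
Your proposal is correct and follows exactly the route the paper intends: the paper gives no separate proof but states that the existence of the unitary flow ``immediately implies'' the corollary, i.e.\ one sets $U=U(1)$ and combines~\eqref{eq:unit-thm} with the decompositions~\eqref{Psea} and~\eqref{P-pq} by linearity of conjugation. Your additional remarks on the uniqueness of the perturbation series for $U$ match the paper's own comment preceding the corollary.
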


\section{The Light-Cone Expansion of the Fermionic Projector} \label{sec6}
The light-cone expansion is a useful technique for analyzing the fermionic projector near the light cone.
An introduction to this technique can be found in~\cite{PFP}, for details see~\cite{firstorder, light}.
In this section we will show that our rescaling procedure has no effect on the singularities of the
fermionic projector on the light cone, but it does change the regular
so-called high-energy contribution.

We briefly recall the definition of the light-cone expansion.
\newtheorem{def-lce}{Definition}[section]
\begin{def-lce}
A distribution $D(x,y)$ on $\mathbb{R}^4\times\mathbb{R}^4$ is of the
order $\mathcal{O}((y-x)^{2p})$ on the light cone, $p\in\mathbb{Z}$, if the product
$$(y-x)^{-2p}D(x,y)$$
is a regular distribution, i.e.\ a locally integrable function. It has the light cone expansion
\begin{align}
	D(x,y)=\sum_{j=g}^{\infty}D^{[j]}(x,y)
\end{align}
with $g\in\mathbb{Z}$, if the distributions $D^{[j]}(x,y)$ are of the order $\mathcal{O}((y-x)^{2j})$
on the light cone and if $D$ is approximated by the partial sums in the sense that 
\begin{align}
	D(x,y)-\sum_{j=g}^{p}D^{[j]}(x,y)
\end{align}
is of order $\mathcal{O}((y-x)^{2p+2})$ on the light cone for all $p\geq g$.
\end{def-lce} \noindent
We note that the lowest summand $D^{[g]}(x,y)$ gives the leading order on the light cone. If~$D$ is singular on the light cone, $g$ is negative.

By explicitly expanding the series~\eqref{series-scaustilde} term by term and
carrying out the sum over the orders in perturbation theory,
in~\cite{light} the light-cone expansion of the causal Green's functions~$\tilde{s}^\vee$
and~$\tilde{s}^\wedge$ is performed.
Using~\eqref{def-ktil}, one readily obtains the light-cone expansion of~$\tilde{k}$.
The so-called ``residual argument'' \cite[Chapter 3]{light} allows us to carry over
the light-cone expansion to the operator~$\tilde{p}^{\,\res}$, \eqref{ptil-thm}.
In this way, in~\cite{light} the light-cone expansion is derived for the so-called {\em{residual fermionic
projector}}~$P^{\,\res}$ defined by
\begin{align}
	P^{\,\res}:=\frac{1}{2}(\tilde{p}^{\,\res}-\tilde{k})\:.
	\label{resid-proj}
\end{align}
The residual fermionic projector is a solution of the Dirac equation \eqref{tdep}, but
its perturbation expansion has a different combinatorics than that of the fermionic projector $P$
as given by~\eqref{P-pq}. The difference of these two operators is defined to be
the high-energy contribution $P^{\he}(x,y)$,
\beq \label{Phedef}
P^{\,\he} :=P -P^{\,\res} \:.
\eeq
The following proposition extends the result of~\cite[Theorem~3.4]{light} to the rescaled
setting.
\begin{Prp}
Under the assumptions of Lemma \ref{regular-lemma}, the integral kernel~$P^{\he}(x,y)$ of the
high-energy contribution is, to every order in perturbation theory, a smooth function in~$x$ and~$y$.
\end{Prp}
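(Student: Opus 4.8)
The plan is to reduce the statement to the corresponding result in the non-rescaled setting, which is already available as \cite[Theorem~3.4]{light}, and then to transport smoothness through one further operator product. First I would unravel the definitions. Since the operator $\tilde{p}^{\,\res}$ occurs in both $P$ and $P^{\,\res}$, combining \eqref{P-pq}, \eqref{resid-proj} and \eqref{Phedef} gives the cancellation
\[
P^{\,\he} \;=\; P - P^{\,\res} \;=\; \frac{1}{2}\,\big(\tilde{k} - \tilde{k}^{\,\res}\big)\:,
\]
so that the whole claim reduces to showing that the difference $\tilde{k}-\tilde{k}^{\,\res}$ is, to every order in $\B$, a smooth function of $x$ and $y$.

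Next I would exploit the polar-decomposition identities recorded in Remark~\ref{remark}, all products being evaluated at a fixed mass as throughout Section~\ref{sec3}. The relation $\tilde{k}^{\,\res}=\tilde{k}\,\tilde{p}^{-1}$ yields the factorization $\tilde{k}=\tilde{k}^{\,\res}\tilde{p}$, while \eqref{rem-eq} gives $\tilde{k}^{\,\res}\tilde{p}^{\,\res}=\tilde{k}^{\,\res}$. Subtracting, I obtain
\[
\tilde{k}-\tilde{k}^{\,\res} \;=\; \tilde{k}^{\,\res}\tilde{p}-\tilde{k}^{\,\res}\tilde{p}^{\,\res} \;=\; \tilde{k}^{\,\res}\,\big(\tilde{p}-\tilde{p}^{\,\res}\big)\:.
\]
The point of this rewriting is that $\tilde{p}-\tilde{p}^{\,\res}$ is, up to the factor $\frac{1}{2}$, exactly the high-energy contribution of the non-rescaled fermionic projector $\frac{1}{2}(\tilde{p}-\tilde{k})$, whose kernel is shown to be smooth to every order in perturbation theory in \cite[Theorem~3.4]{light}. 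Thus $\tilde{k}-\tilde{k}^{\,\res}$ is the image of a smooth kernel under the causal operator $\tilde{k}^{\,\res}$.

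It then remains to check that left-composition with $\tilde{k}^{\,\res}$ preserves smoothness. At a fixed order in $\B$ the operator $\tilde{k}^{\,\res}$ is, by Theorem~\ref{maintheorem}, a finite sum of products of the free kernels $p,k,s$ interspersed with factors of $\B$; the free kernels act as convolutions, and the convolution of a tempered distribution with a smooth, rapidly decaying function is again smooth. Concretely, in $\int \tilde{k}^{\,\res}(x,z)\,(\tilde{p}-\tilde{p}^{\,\res})(z,y)\,d^4z$ the $y$-derivatives fall on the smooth factor, whereas the $x$-derivatives may be moved onto the distributional factor $\tilde{k}^{\,\res}(x,\cdot)$ and then paired against the smooth test kernel $(\tilde{p}-\tilde{p}^{\,\res})(\cdot,y)$. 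This is where the main obstacle lies: the argument needs not only the smoothness but also sufficient decay of $\tilde{p}-\tilde{p}^{\,\res}$ for it to serve as a legitimate test kernel against the singular distribution $\tilde{k}^{\,\res}$, so one must combine the full strength of \cite[Theorem~3.4]{light} with the decay hypotheses on $\B$ from Lemma~\ref{regular-lemma}.

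Finally, I would keep in reserve a more self-contained route that bypasses \cite{light}, in case the transport step proves delicate. Writing $\Pi_\pm=\frac{1}{2}(p\pm k)$ for the free projectors on the states of positive and negative energy, so that $p=\Pi_++\Pi_-$ and $k=\Pi_+-\Pi_-$, I would insert this splitting into the explicit series \eqref{k-pert} and \eqref{qtil-thm} and verify that the frequency-preserving parts (those in which every $\B$ connects equal-sign projectors) of $\tilde{k}$ and $\tilde{k}^{\,\res}$ agree order by order, so that the difference consists only of frequency-mixing products. Each such product contains a junction $\Pi_+\,\B\,\Pi_-$ or $\Pi_-\,\B\,\Pi_+$; since the two on-shell energies then have opposite sign, their difference is bounded below by $2m$ and grows with the spatial momenta, so the decay of $\hat{\B}$ localizes all momenta of the chain to a compact set and the position-space kernel is smooth. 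On this route the hard part is instead purely combinatorial, namely proving to all orders that the singular, frequency-preserving contributions cancel, which amounts to matching the coefficients $(-i\pi)^{2\beta}$ in \eqref{k-pert} against the rescaled coefficients $c(r,\rho)$ in \eqref{qtil-thm}.
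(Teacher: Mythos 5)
Your reduction to $P^{\,\he}=\tfrac12(\tilde k - \tilde k^{\,\res})$ agrees with the paper, and the factorization $\tilde k - \tilde k^{\,\res} = \tilde k^{\,\res}\,(\tilde p - \tilde p^{\,\res})$ is a correct identity of the formal calculus of Sections~\ref{sec3} and~\ref{sec4}. But the step where you transport smoothness of $\tilde p - \tilde p^{\,\res}$ through left-composition with $\tilde k^{\,\res}$ does not work, and the obstacle is more basic than the decay issue you flag. The product $\tilde k^{\,\res}(\tilde p - \tilde p^{\,\res})$ is \emph{not} the position-space composition $\int \tilde k^{\,\res}(x,z)\,(\tilde p-\tilde p^{\,\res})(z,y)\,d^4z$: all products in Sections~\ref{sec3} and~\ref{sec4} are taken at equal mass parameters after stripping the factors $\delta(m-m')$ produced by Lemma~\ref{lemma21} (see the convention on page~\pageref{deltaweg}). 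The literal $z$-integral glues the rightmost free kernel of a chain in $\tilde k^{\,\res}$ to the leftmost free kernel of a chain in $\tilde p-\tilde p^{\,\res}$ with no intervening factor of $\B$; these are mass-shell distributions at the same mass, so the integral produces $\delta(m-m')$ evaluated at $m=m'$ and diverges. Equivalently, $(\tilde p-\tilde p^{\,\res})(\cdot,y)$ is a superposition of on-shell solutions and can never decay fast enough in $z$ to serve as a test kernel, which is precisely why the whole formalism works with $\delta$-normalized products. The equal-mass product is instead defined by the algebraic rules $pp=kk=p$, $pk=k$, $b^>b^<=p+\pi^2pbpbp$, which reshuffle the factors inside the chains; smoothness of the kernel of one factor of such a formal product implies nothing about the kernel of the product, so your main route collapses exactly where the real work begins.

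Your fallback route does identify the correct mechanism: writing $\Pi_\pm=\tfrac12(p\pm k)$ one has $k\B C\B k - p\B C\B p = -2\left(\Pi_+\B C\B\Pi_- + \Pi_-\B C\B\Pi_+\right)$, so the smooth contributions are exactly the frequency-mixing ones, and this is how the smoothness of the building blocks \eqref{elementaries} is established in \cite[Theorem~3.4]{light}. However, you defer as ``the hard part'' precisely the step that constitutes the paper's proof, namely the combinatorial \emph{replacement rule}: replacing every factor $p$ by $k$ in the series \eqref{qtil-thm} for $\tilde k^{\,\res}$ reproduces the series \eqref{k-pert} for $\tilde k$. The paper proves this by observing that after the replacement the square bracket in \eqref{opprods-in-sum} becomes a telescopic sum, and then using $\sum_{l=0}^r f_{l,r}=1$; combined with the fact that each summand of $\tilde k^{\,\res}$ contains an even number of factors $p$ (Lemma~\ref{thm:combi-lemma}\,(i)), this writes $\tilde k-\tilde k^{\,\res}$ as a finite sum of terms of the form \eqref{elementaries}, to which the smoothness result of \cite{light} applies. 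Without carrying out that coefficient matching, your proposal does not close.
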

\Proof
Using the formulas \eqref{P-pq} and \eqref{resid-proj} for $P$ and $P^{\,\res}$, one sees that 
$$P^{\,\he}=\frac{1}{2}(\tilde{k}-\tilde{k}^{\,\res}) \:.$$
As will be shown below, the following {\em{replacement rule}} holds:
\begin{quote}
If in the perturbation series  \eqref{qtil-thm} for $\tilde{k}^{\,\res}$
one replaces all factors $p$ by factors $k$, one obtains precisely the perturbation series~\eqref{k-pert}
for~$\tilde{k}$.
\end{quote}
Furthermore, we know from Lemma~\ref{lemma42} that the total number of
factors~$p$ in every summand of the perturbation expansion for~$\tilde{k}^{\,\res}$ is
even. Hence to every order in perturbation theory, $\tilde{k}$ can be obtained from~$\tilde{k}^{\,\res}$
by iteratively replacing pairs of factors~$p$ by factors~$k$.
Thus, to every order in perturbation theory, the operator~$P^{\,\he}$ can be written as a finite sum of
operator products of the form
\begin{align}	C^j_{n}\mathscr{B}\cdots C_{\sigma+1}\mathscr{B}\Big(k\mathscr{B}C_{\sigma-1}\cdots C_{\tau+1}\mathscr{B}k-p\mathscr{B}C_{\sigma-1}\cdots C_{\tau+1}\mathscr{B}p\Big)\mathscr{B}C_{\tau-1}\cdots \mathscr{B}C_0\:.
\label{elementaries}
\end{align}
It is shown in the proof of \cite[Theorem~3.4]{light} that terms of the form \eqref{elementaries} are smooth and bounded in position space, provided that the assumptions of Lemma \ref{regular-lemma} hold. 

It remains to prove the above replacement rule. For convenience, the
replacement of all factors~$p$ by factors~$k$ is denoted by the symbol~$\leadsto$.
Furthermore, we introduce the abbreviation
\begin{align} {\mathscr{S}}(l) :=
\Big[\sum_{\beta}(-i\pi)^{2\beta}k(bk)^{2\beta}(p+\pi^2pbpbp)\Big]^{2l}\sum_{\alpha}(-i\pi)^{2\alpha}k(bk)^{2\alpha}\:.
\label{opprods-in-sum}
\end{align}
In the proof of Lemma~\ref{lemma42}, we saw after~\eqref{trunc-sum} that for any~$r \in \N$,
\beq \label{trunc-sum2}
\tilde{k}^{\,\res} = \sum_{l=0}^r\, f_{l,r}\:b^<\:{\mathscr{S}}(l)\: b^>  + \O(b^{2r+2})\:.
\eeq
Replacing in~\eqref{opprods-in-sum} all factors~$p$ by~$k$, we obtain
\begin{align*}
{\mathscr{S}}(l) \leadsto & \Big[\sum_{\beta=0}^r
(-i\pi)^{2\beta}(kb)^{2\beta}(1+\pi^2kbkb)\Big]^{2l}
\sum_{\alpha=0}^r(-i\pi)^{2\alpha}k(bk)^{2\alpha} + \O(b^{2r+2}) \\
&=\sum_{\alpha=0}^r(-i\pi)^{2\alpha}k(bk)^{2\alpha} + \O(b^{2r+2}) \:,
\end{align*}
where in the last step we used that the sum in the square bracket is telescopic.
Substituting this formula into~\eqref{trunc-sum2}, we obtain
\[ \tilde{k}^{\,\res} \leadsto \sum_{l=0}^r\, f_{l,r}\:\sum_{\alpha=0}^r(-i\pi)^{2\alpha}
b^<\: k(bk)^{2\alpha}\: b^>  + \O(b^{2r+2})
= \sum_{l=0}^r\, f_{l,r}\:\tilde{k}  + \O(b^{2r+2}) \:, \]
where we used~\eqref{k-pert}. The result now follows from the identity
$$ \sum_{l=0}^rf_{l,r}=1\:, $$
which is verified by an elementary combinatorial argument.
\QED

According to~\eqref{Phedef} and the above Proposition, the singular behavior of the fermionic
projector $P$ is completely described by the residual fermionic projector $P^{\,\res}$.
In~\cite{light} the residual fermionic projector is decomposed into two parts,
\[	P^{\,\res}(x,y)=P^{\,\causal}(x,y)+P^{\,\lec}(x,y) \,. \]
Thus, the light-cone expansion of the fermionic projector becomes
\[	P(x,y)=P^{\,\causal}(x,y)+P^{\,\lec}(x,y)+P^{\,\he}(x,y) \,. \]
Here the contribution $P^{\,\causal}$ encodes the singular behavior of $P$ on the light cone,
whereas~$P^{\,\he}(x,y)$ and~$P^{\,\lec}(x,y)$ are, to every order in perturbation theory,
smooth functions in~$x$ and~$y$.

We finally explain how the ``causality'' of the causal perturbation expansion is to be understood.
Causality entered our construction when defining~$\tilde{k}$ in terms of the causal
Green's functions~\eqref{def-ktil}. This is reflected in the kernel~$P^{\,\causal}(x,y)$,
which is local and causal in the sense that it is given
explicitly in terms of integrals of~$\B$ and its partial derivatives
along the line segment~$\overline{xy}$.
However, taking the absolute value of an operator~\eqref{ptil-formally}
as well as the rescaling procedure~\eqref{eq:defP} do not preserve causality,
but these operations are necessary in order to distinguish a properly normalized
subspace of the Dirac solution space. The non-causality of our construction becomes apparent
in the high- and low-energy contributions, which involve integrals of~$\B$
over the whole space-time.
Thus our construction reveals that, although the Dirac equation is causal, the interacting Dirac sea
is a global object in space-time, violating locality and causality.
For a discussion of this remarkable fact we refer the reader to the book \cite{PFP}.

\begin{appendix}

\section{The Leading Orders of the Perturbation Expansions} \label{appA}
The operator $\tilde{k}$ in \eqref{k-pert} is given by
\begin{align*}	
\tilde{k} =&k-s\mathscr{B}k-k\mathscr{B}s+k\mathscr{B}s\mathscr{B}s+s\mathscr{B}k\mathscr{B}s+s\mathscr{B}s\mathscr{B}k-\pi^2k\mathscr{B}k\mathscr{B}k\\
&-k\mathscr{B}s\mathscr{B}s\mathscr{B}s-s\mathscr{B}k\mathscr{B}s\mathscr{B}s-s\mathscr{B}s\mathscr{B}k\mathscr{B}s-s\mathscr{B}s\mathscr{B}s\mathscr{B}k\\
&+\pi^2\Big(s\mathscr{B}k\mathscr{B}k\mathscr{B}k+k\mathscr{B}s\mathscr{B}k\mathscr{B}k+k\mathscr{B}k\mathscr{B}s\mathscr{B}k+k\mathscr{B}k\mathscr{B}k\mathscr{B}s\Big)+\O(\mathscr{B}^4).
\end{align*}

The operator $X$ in \eqref{X-2ndversion} has the expansion
\begin{align*}
X=&\pi^2\Big(p\mathscr{B}p\mathscr{B}p-p\mathscr{B}k\mathscr{B}k-k\mathscr{B}k\mathscr{B}p+k\mathscr{B}p\mathscr{B}k\\
&+p\mathscr{B}s\mathscr{B}k\mathscr{B}k-k\mathscr{B}p\mathscr{B}s\mathscr{B}k-p\mathscr{B}p\mathscr{B}s\mathscr{B}p+p\mathscr{B}k\mathscr{B}s\mathscr{B}k\\
&+k\mathscr{B}k\mathscr{B}s\mathscr{B}p-k\mathscr{B}s\mathscr{B}p\mathscr{B}k+k\mathscr{B}s\mathscr{B}k\mathscr{B}p-p\mathscr{B}s\mathscr{B}p\mathscr{B}p\Big)+\O(\mathscr{B}^4),
\end{align*}
which yields for the operator $\tilde{p}$ in \eqref{ptil-pert}
\begin{align*}
\tilde{p}=&p-s\mathscr{B}p-p\mathscr{B}s+p\mathscr{B}s\mathscr{B}s+s\mathscr{B}p\mathscr{B}s+s\mathscr{B}s\mathscr{B}p\\
&+\frac{\pi^2}{2}\Big(k\mathscr{B}p\mathscr{B}k-k\mathscr{B}k\mathscr{B}p-p\mathscr{B}k\mathscr{B}k-p\mathscr{B}p\mathscr{B}p\Big)\\
&-p\mathscr{B}s\mathscr{B}s\mathscr{B}s-s\mathscr{B}p\mathscr{B}s\mathscr{B}s-s\mathscr{B}s\mathscr{B}p\mathscr{B}s-s\mathscr{B}s\mathscr{B}s\mathscr{B}p\\
&+\frac{\pi^2}{2}\Big(s\mathscr{B}p\mathscr{B}p\mathscr{B}p+p\mathscr{B}s\mathscr{B}p\mathscr{B}p+p\mathscr{B}p\mathscr{B}s\mathscr{B}p+p\mathscr{B}p\mathscr{B}p\mathscr{B}s\\
&+k\mathscr{B}s\mathscr{B}k\mathscr{B}p+k\mathscr{B}k\mathscr{B}s\mathscr{B}p+p\mathscr{B}k\mathscr{B}s\mathscr{B}k+p\mathscr{B}s\mathscr{B}k\mathscr{B}k-k\mathscr{B}s\mathscr{B}p\mathscr{B}k\\
&-k\mathscr{B}p\mathscr{B}k\mathscr{B}s-k\mathscr{B}p\mathscr{B}s\mathscr{B}k+p\mathscr{B}k\mathscr{B}k\mathscr{B}s+s\mathscr{B}p\mathscr{B}k\mathscr{B}k+s\mathscr{B}k\mathscr{B}k\mathscr{B}p\\
&+k\mathscr{B}k\mathscr{B}p\mathscr{B}s-s\mathscr{B}k\mathscr{B}p\mathscr{B}k\Big)+\O(\mathscr{B}^4)\:,
\end{align*}
and for the rescaling operator $Y$ in \eqref{def-x}
\begin{align*}
Y=&p-\pi^2\Big(p\mathscr{B}p\mathscr{B}p-p\mathscr{B}k\mathscr{B}k-k\mathscr{B}k\mathscr{B}p+k\mathscr{B}p\mathscr{B}k\\
&+p\mathscr{B}s\mathscr{B}k\mathscr{B}k-k\mathscr{B}p\mathscr{B}s\mathscr{B}k-p\mathscr{B}p\mathscr{B}s\mathscr{B}p+p\mathscr{B}k\mathscr{B}s\mathscr{B}k\\
&+k\mathscr{B}k\mathscr{B}s\mathscr{B}p-k\mathscr{B}s\mathscr{B}p\mathscr{B}k+k\mathscr{B}s\mathscr{B}k\mathscr{B}p-p\mathscr{B}s\mathscr{B}p\mathscr{B}p\Big)+\O(\mathscr{B}^4).
\end{align*}

The operator $\tilde{p}^{\,\res}$ in \eqref{p-res} has the expansion
\begin{align*}	
\tilde{p}^{\,\res} =&p-s\mathscr{B}p-p\mathscr{B}s+p\mathscr{B}s\mathscr{B}s+s\mathscr{B}p\mathscr{B}s+s\mathscr{B}s\mathscr{B}p-\pi^2p\mathscr{B}p\mathscr{B}p\\
&-p\mathscr{B}s\mathscr{B}s\mathscr{B}s-s\mathscr{B}p\mathscr{B}s\mathscr{B}s-s\mathscr{B}s\mathscr{B}p\mathscr{B}s-s\mathscr{B}s\mathscr{B}s\mathscr{B}p\\
&+\pi^2\Big(s\mathscr{B}p\mathscr{B}p\mathscr{B}p+p\mathscr{B}s\mathscr{B}p\mathscr{B}p+p\mathscr{B}p\mathscr{B}s\mathscr{B}p+p\mathscr{B}p\mathscr{B}p\mathscr{B}s\Big)+\O(\mathscr{B}^4) ,
\end{align*}
whereas the operator $\tilde{k}^{\,\res}$ in \eqref{eq:qtilde} is given by
\begin{align*}	
\tilde{k}^{\,\res} =&k-s\mathscr{B}k-k\mathscr{B}s+k\mathscr{B}s\mathscr{B}s+s\mathscr{B}k\mathscr{B}s+s\mathscr{B}s\mathscr{B}k\\
&+\frac{\pi^2}{2}\Big(-k\mathscr{B}p\mathscr{B}p+p\mathscr{B}k\mathscr{B}p-p\mathscr{B}p\mathscr{B}k-k\mathscr{B}k\mathscr{B}k\Big)\\
&-k\mathscr{B}s\mathscr{B}s\mathscr{B}s-s\mathscr{B}k\mathscr{B}s\mathscr{B}s-s\mathscr{B}s\mathscr{B}k\mathscr{B}s-s\mathscr{B}s\mathscr{B}s\mathscr{B}k\\
&+\frac{\pi^2}{2}\Big(s\mathscr{B}k\mathscr{B}k\mathscr{B}k+k\mathscr{B}s\mathscr{B}k\mathscr{B}k+k\mathscr{B}k\mathscr{B}s\mathscr{B}k+k\mathscr{B}k\mathscr{B}k\mathscr{B}s\\
&+k\mathscr{B}s\mathscr{B}p\mathscr{B}p-s\mathscr{B}p\mathscr{B}k\mathscr{B}p+s\mathscr{B}p\mathscr{B}p\mathscr{B}k+p\mathscr{B}s\mathscr{B}p\mathscr{B}k-p\mathscr{B}k\mathscr{B}p\mathscr{B}s\\
&+k\mathscr{B}p\mathscr{B}s\mathscr{B}p+k\mathscr{B}p\mathscr{B}p\mathscr{B}s-p\mathscr{B}k\mathscr{B}s\mathscr{B}p+p\mathscr{B}p\mathscr{B}k\mathscr{B}s+s\mathscr{B}k\mathscr{B}p\mathscr{B}p\\
&-p\mathscr{B}s\mathscr{B}k\mathscr{B}p+p\mathscr{B}p\mathscr{B}s\mathscr{B}k\Big)+\O(\mathscr{B}^4).
\end{align*}

Thus for the operator $\tilde{t}$ in \eqref{eq:tt} we obtain
\begin{align*}
	\tilde{t}=\frac{1}{2}\Bigg[&
	p-k-s\mathscr{B}p-p\mathscr{B}s+s\mathscr{B}k+k\mathscr{B}s\\	&+p\mathscr{B}s\mathscr{B}s+s\mathscr{B}p\mathscr{B}s+s\mathscr{B}s\mathscr{B}p-k\mathscr{B}s\mathscr{B}s-s\mathscr{B}k\mathscr{B}s-s\mathscr{B}s\mathscr{B}k\\	&+\pi^2k\mathscr{B}k\mathscr{B}k+\frac{\pi^2}{2}\Big(k\mathscr{B}p\mathscr{B}k-k\mathscr{B}k\mathscr{B}p-p\mathscr{B}k\mathscr{B}k-p\mathscr{B}p\mathscr{B}p\Big)\\	&-p\mathscr{B}s\mathscr{B}s\mathscr{B}s-s\mathscr{B}p\mathscr{B}s\mathscr{B}s-s\mathscr{B}s\mathscr{B}p\mathscr{B}s-s\mathscr{B}s\mathscr{B}s\mathscr{B}p\\
&+k\mathscr{B}s\mathscr{B}s\mathscr{B}s+s\mathscr{B}k\mathscr{B}s\mathscr{B}s+s\mathscr{B}s\mathscr{B}k\mathscr{B}s+s\mathscr{B}s\mathscr{B}s\mathscr{B}k\\
&+\frac{\pi^2}{2}\Big(s\mathscr{B}p\mathscr{B}p\mathscr{B}p+p\mathscr{B}s\mathscr{B}p\mathscr{B}p+p\mathscr{B}p\mathscr{B}s\mathscr{B}p+p\mathscr{B}p\mathscr{B}p\mathscr{B}s\Big)\\
&-\pi^2\Big(s\mathscr{B}k\mathscr{B}k\mathscr{B}k+k\mathscr{B}s\mathscr{B}k\mathscr{B}k+k\mathscr{B}k\mathscr{B}s\mathscr{B}k+k\mathscr{B}k\mathscr{B}k\mathscr{B}s\Big)\\
&+\frac{\pi^2}{2}\Big(k\mathscr{B}s\mathscr{B}k\mathscr{B}p+k\mathscr{B}k\mathscr{B}s\mathscr{B}p+p\mathscr{B}k\mathscr{B}s\mathscr{B}k+p\mathscr{B}s\mathscr{B}k\mathscr{B}k\\
&-k\mathscr{B}s\mathscr{B}p\mathscr{B}k-k\mathscr{B}p\mathscr{B}k\mathscr{B}s-k\mathscr{B}p\mathscr{B}s\mathscr{B}k+p\mathscr{B}k\mathscr{B}k\mathscr{B}s\\
&+s\mathscr{B}p\mathscr{B}k\mathscr{B}k+s\mathscr{B}k\mathscr{B}k\mathscr{B}p+k\mathscr{B}k\mathscr{B}p\mathscr{B}s-s\mathscr{B}k\mathscr{B}p\mathscr{B}k\Big)\Bigg]+\O(\mathscr{B}^4),
\end{align*}
and the rescaled fermionic projector \eqref{P-pq} is given by
\begin{align*}	
P=\frac{1}{2}\Bigg[&p-k-s\mathscr{B}p-p\mathscr{B}s+s\mathscr{B}k+k\mathscr{B}s\\
&+p\mathscr{B}s\mathscr{B}s+s\mathscr{B}p\mathscr{B}s+s\mathscr{B}s\mathscr{B}p
-k\mathscr{B}s\mathscr{B}s-s\mathscr{B}k\mathscr{B}s-s\mathscr{B}s\mathscr{B}k\\
&-\pi^2p\mathscr{B}p\mathscr{B}p+\frac{\pi^2}{2}\Big(k\mathscr{B}k\mathscr{B}k+p\mathscr{B}p\mathscr{B}k-p\mathscr{B}k\mathscr{B}p+k\mathscr{B}p\mathscr{B}p\Big)\\
&-p\mathscr{B}s\mathscr{B}s\mathscr{B}s-s\mathscr{B}p\mathscr{B}s\mathscr{B}s-s\mathscr{B}s\mathscr{B}p\mathscr{B}s-s\mathscr{B}s\mathscr{B}s\mathscr{B}p\\
&+k\mathscr{B}s\mathscr{B}s\mathscr{B}s+s\mathscr{B}k\mathscr{B}s\mathscr{B}s+s\mathscr{B}s\mathscr{B}k\mathscr{B}s+s\mathscr{B}s\mathscr{B}s\mathscr{B}k\\
&+\pi^2\Big(s\mathscr{B}p\mathscr{B}p\mathscr{B}p+p\mathscr{B}s\mathscr{B}p\mathscr{B}p+p\mathscr{B}p\mathscr{B}s\mathscr{B}p+p\mathscr{B}p\mathscr{B}p\mathscr{B}s\Big)\\
&-\frac{\pi^2}{2}\Big(s\mathscr{B}k\mathscr{B}k\mathscr{B}k+k\mathscr{B}s\mathscr{B}k\mathscr{B}k+k\mathscr{B}k\mathscr{B}s\mathscr{B}k+k\mathscr{B}k\mathscr{B}k\mathscr{B}s\Big)\\
&+\frac{\pi^2}{2}\Big(p\mathscr{B}k\mathscr{B}p\mathscr{B}s-s\mathscr{B}k\mathscr{B}p\mathscr{B}p+p\mathscr{B}k\mathscr{B}s\mathscr{B}p-p\mathscr{B}p\mathscr{B}k\mathscr{B}s\\
&-p\mathscr{B}s\mathscr{B}p\mathscr{B}k+p\mathscr{B}s\mathscr{B}k\mathscr{B}p-k\mathscr{B}p\mathscr{B}s\mathscr{B}p-k\mathscr{B}s\mathscr{B}p\mathscr{B}p\\
&-p\mathscr{B}p\mathscr{B}s\mathscr{B}k+s\mathscr{B}p\mathscr{B}k\mathscr{B}p-k\mathscr{B}p\mathscr{B}p\mathscr{B}s-s\mathscr{B}p\mathscr{B}p\mathscr{B}k\Big)\Bigg]+\O(\mathscr{B}^4).
\end{align*}

\end{appendix}

\Thanks{{\em{Acknowledgments:}}  A.G.\ would like to thank the Erwin Schr\"odinger Institute,
Vienna, for its hospitality while he was working on the manuscript.}


\providecommand{\bysame}{\leavevmode\hbox to3em{\hrulefill}\thinspace}
\providecommand{\MR}{\relax\ifhmode\unskip\space\fi MR }
\providecommand{\MRhref}[2]{%
  \href{http://www.ams.org/mathscinet-getitem?mr=#1}{#2}
}
\providecommand{\href}[2]{#2}

\end{document}